\documentclass{article}
\usepackage{amsmath, amssymb, amsthm}
\newtheorem{assumption}{Assumption}

\newtheorem{definition}{Definition}
\newtheorem{thm}{Theorem}
\newtheorem{remark}{Remark}
\newtheorem{lemma}{Lemma}
\newtheorem{example}{Example}
\newtheorem{proposition}{Proposition}
\usepackage{xcolor}
\usepackage{url, hyperref, tikz, subcaption}
\usepackage{cancel}
\usepackage{soul}
\usepackage{float}
\usepackage{comment}
\RequirePackage[authoryear]{natbib}
\usetikzlibrary{graphs, positioning, calc}

\usepackage{xcolor}

\usepackage[normalem]{ulem}
\newcommand{\stkout}[1]{\ifmmode\text{\sout{\ensuremath{#1}}}\else\sout{#1}\fi}

\def\ci{\perp\!\!\!\perp}

\usepackage[letterpaper, left=2.5cm, right=2.5cm, top=2.5cm,
bottom=2.5cm,dvips]{geometry}

\title{Functional Estimation under Proxy-Based Full-Law Identification\thanks{We thank Eric Tchetgen Tchetgen and Yingyao Hu for enriching discussions.}}
\author{Helen Guo, AmirEmad Ghassami, Ilya Shpitser, Elizabeth L. Ogburn}

\begin{document}

\maketitle
\begin{abstract}
We state general conditions under which the full-data law is identified in the presence of latent variables, leveraging key observed variables (“proxies”) associated with unobserved variables. These assumptions extend those used in existing examples from the literature that recover the full-data law under relatively flexible model assumptions. We first describe graphical models compatible with our assumptions, and then consider target parameters that are functionals of the full-data law, developing estimating equations that give rise to consistent \(M\)-estimators using the observed data. Our approach replaces unobserved random variables appearing in full-data estimating equations with observed variables via a proxy-based weighting scheme. This strategy can be extended to construct observed-data influence functions, giving estimators with desirable properties such as multiple robustness and \(\sqrt{n}\)-consistency despite slower than parametric convergence of nuisance functions. 
\end{abstract}

\section{Introduction}

Latent variable models are  used across a broad range of scientific disciplines, including economics, political science, psychometrics, biology, public health, and medicine \citep{Aigner1984Latent, Medzihorsky2022VDem, Hancock2007Models,  Kopf2021Latent, Muthen1992Epidemiology, RabeHesketh2008Latent}. In a latent variable model, variable(s) of interest are unobserved (i.e., latent), while associated observed measurements (often termed “proxies”) are available, constraining the set of compatible latent probability distributions. This description encompasses a broad range of settings, including settings in which latent variable(s) are measured indirectly through error-prone observations. Under suitable assumptions, targets such as the expectation of the latent variable, or even the full-data law (i.e., the joint distribution of observed variables and latent variables of interest), can be identified. 

A major line of this work, considered nonparametric\footnote{Here, ``nonparametric'' model assumptions refer to conditional independence constraints and additional functional restrictions, provided these are sufficiently weak in the sense that under relative support size/dimensionality assumptions between the variables, distributions violating them can be approximated arbitrarily well by distributions that satisfy them \citep{semiparametric2020cui, Canay_testability2013}.} for its flexible model assumptions, recovers the full-data law by placing relatively mild conditions on the variability of the conditional law of observed variables across values of a latent variable of interest \citep{hu2008instrumental, kuroki14measurement, allman2009identifiability, deaner2023controllinglatentconfoundingtriple, zhou2024causalinferencehiddentreatment}. 
This line of work originates with Kruskal's uniqueness theorem for the Candecomp/Parafac decomposition of three-way arrays arising from latent variable models with finite support \citep{kruskal1977three}, and has since been extended to latent variable models with continuous support \citep{hu2008instrumental}. While identification theory in this line of work is well developed, work on estimation of target functionals of the full-data law remains relatively nascent \citep{zhou2024causalinferencehiddentreatment, Guo2026_outcome, egami2026debiasedinferenceaigenerateddata}. 

Related methods address unobserved variables without seeking to recover the full-data law. For example, the nonparametric instrumental variable method leverages similar conditions alongside additive noise assumptions to identify and estimate causal targets \citep{newey03instrumental}. Additionally, proximal causal inference techniques identify and estimate causal targets under similar conditions \citep{miao2018identifying, semiparametric2020cui, ghassami24causal, dukes2021proximal}, imposing different independence restrictions from the former line of work discussed  \citep{deaner2023controllinglatentconfoundingtriple}. These approaches are related to the former line of work in their use of observed variables to overcome challenges posed by unobserved variables, and relatively mild assumptions. However, rather than identifying a particular causal target, the former line of work seeks to recover the full-data law, from which a broad class of target functionals can be obtained.


We extend the assumptions from the former line of work to identify the full-data law in the presence of (possibly multiple) latent variables, and describe graphical models compatible with our assumptions. We then develop mean-zero estimating equations that give rise to consistent \(M\)-estimators. This approach can be extended to construct influence functions for target functionals of the full-data law. Our construction begins with an estimating equation under the full-data model, translating it into an observed-data equation by replacing unobserved variables in these expressions with observed variables via a proxy-based weighting scheme. The procedure demonstrates how to build estimators with desirable properties (such as robustness to model misspecification and $\sqrt{n}$-consistent estimation despite slower than parametric convergence of nuisance functions) using the observed data, bridging the gap between nonparametric latent variable model identification theory and modern estimation methodology. Our approach is inspired by previous work on binary latent treatments by \citet{zhou2024causalinferencehiddentreatment} and overlaps with work by \citet{egami2026debiasedinferenceaigenerateddata} considering models with finite support and a single latent variable.  

Section~\ref{sec:background} sets up the problems we consider and reviews fundamentals concepts necessary to understand this paper. Section~\ref{sec:identification} establishes identification of the full-data law, which subsequently enables the estimation results developed in Sections~\ref{sec:est_eqns} and~\ref{sec:ifs}. We provide many examples, constructing estimators of the expectation of latent variable(s), as well as of average counterfactual values in settings where important variables such as confounders or outcomes are latent. 

\section{Preliminaries}\label{sec:background}
\subsection{Problem Setup}\label{subsec:setup}
Consider collections of random variables \(\vec H\) and \(\vec O\), where \(\vec H\) consists of unobserved variables of interest and \(\vec O\) consists of observed variables. When these collections appear as arguments of distributions or are assigned values, we use the same notation to denote the corresponding random vectors obtained by arranging the variables in each collection in a fixed order.

\begin{definition}[Observed-data law]
The observed-data law is the joint distribution \(p(\vec{O})\). 
\end{definition}

\begin{definition}[Full-data law]
The full-data law is the joint distribution \(p(\vec{H},\vec{O})\).
\end{definition}

\begin{definition}[Identification of the full-data law]
The full-data law is identified
in a model if there exists a function \(f\) such that

\[
p(\vec{H}, \vec{O})
= f\!\big(p(\vec{O})\big).
\]
\end{definition}

This paper first establishes assumptions that identify the full-data law in the presence of latent variables. 
In some applications, however, the target of interest is invariant to latent state labeling,
making label identification unnecessary. In such settings, we discuss how our assumptions can be adapted to identify the full-data law only up to arbitrary permutation of the support values of \(\vec H\). 

\begin{definition}[Identification of the full-data law up to permutation of the support values of
\(\vec H\)]
The full-data law is identified up to arbitrary permutation of the support values of
\(\vec H\) if there exists a function \(f\) such that 
\(
f\!\big(p(\vec O)\big)\) gives the equivalence class of joint laws \(p(\vec H, \vec O)\) under arbitrary permutation of the support values of
\(\vec H\).
\end{definition}

Following our identification results, we construct observed-data estimating equations for target functionals of the full-data law. Additionally, we adapt these results to give observed-data influence functions. The rest of this section reviews relevant background necessary to understand our work.

\subsection{Directed Acyclic Graphs (DAGs)}
Directed acyclic graphs (DAGs) 
are frequently used to graphically represent statistical models. These graphs consist of nodes representing random variables connected by directed edges, with the additional restriction that the graph contains no directed cycles. We denote a DAG with latent variables \(\vec{H}\) and observed variables \(\vec{O}\) by \(\mathcal{G}(\vec{H}, \vec{O})\)
Conditional independences may be read from the graph via \emph{the d-separation criterion} \citep{pearl09causality}. Although the identification assumptions in this work are not stated in terms of 
graphs, many of these assumptions (such as conditional independences) may be difficult to reason about directly without a graphical model. Hence in this paper, we include graphical examples and give graphical criterion consistent with our identification assumptions.

In several examples throughout this paper, the target functional admits a causal interpretation as an average counterfactual value. Such an interpretation requires assumptions involving the \emph{potential outcome} \(Y(a)\), defined as the value that outcome \(Y\) would have attained under an intervention that sets treatment to \(A=a\). These assumptions may be encoded graphically via single-world intervention graphs (SWIGs), a special class of DAGs \citep{richardson13swig}. 
Because counterfactual assumptions are not the focus of this paper, we do not discuss them explicitly and instead assume that the conditions required for a causal interpretation of the target of interest hold whenever such an interpretation is invoked. Absent such conditions, our results remain valid for the corresponding statistical functionals, but not necessarily as results about causal targets.

\subsection{Acyclic Directed Mixed Graphs (ADMGs)}

DAGs are a special case of acyclic directed mixed graphs (ADMGs), which may additionally contain bidirected edges between variables. Conditional independences in ADMGs may be read off using \emph{m-separation}, a generalization of d-separation \citep{richardson02ancestral, handbook19graphical}. 

In our paper, observed variables \(\vec O\) and latent variables of interest \(\vec H\) are represented explicitly as nodes in an ADMG, while other latent variables are represented implicitly in bidirected edges, obtained via the latent projection operator from \citet{verma90equiv}. 
Similar to DAG notation, we denote such an ADMG by \(\mathcal{G}(\vec{H}, \vec{O})\). In contrast to the DAG representation discussed earlier, the ADMG representation captures all observed dependences while explicitly retaining observed variables \(\vec O\) and only a subset of latent variables. This distinction is useful because target functionals of interest may only depend on the joint law of observed variables and a subset of all latent variables (i.e., latent variables of interest). Moreover, our identification strategy imposes assumptions which may only be justified for this subset of all latent variables. The ADMG representation we use therefore allows us to explicitly retain precisely those latent variables for which are relevant for our target functionals and for which our assumptions are appropriate.

\subsection{Estimating Equations and Influence Functions} 


We develop \(M\)-estimators of the form

$$
\hat\psi
=
\arg\min_{\psi}
\mathcal{L}_n(\psi),
\qquad
\mathcal{L}_n(\psi)
=
\left\|
\mathbb{P}_n\{\varphi(O;\psi,\hat\eta)\}
\right\|^2,
$$

by constructing observed-data estimating functions \(\varphi(O;\psi,\eta)\) that satisfy
$$
\mathbb{E}\{\varphi(O;\psi,\eta)\}=0.
$$
Under standard regularity conditions, the resulting estimator \(\hat\psi\) is consistent for \(\psi\). 

We then adapt this construction to give observed-data influence functions, a special class of mean-zero estimating equations that yield \(M\)-estimators with additional desirable statistical properties.

\begin{definition}[Observed-data influence function]
Let \(\psi\) be a functional of the observed-data law \(p(\vec{O})\), and let \(\eta\) denote a collection of nuisance functions determined by the observed-data law.
A function \(\varphi(\vec{O};\psi,\eta)\) is an observed-data influence function for \(\psi\) if, for any regular parametric submodel \(\{p_\epsilon(\vec{O})\}\) with score \(s(\vec{O})\),
\[
\mathbb{E}\big[\varphi(\vec{O};\psi,\eta)\big]=0,
\quad \text{and} \quad
\left.\frac{d}{d\epsilon}\psi(p_\epsilon)\right|_{\epsilon=0}
=
\mathbb{E}\big[\varphi(\vec{O};\psi,\eta)\,s_\epsilon(\vec{O})\big].
\]
\end{definition}

Given random variables \(\vec{O}\), an estimator \(\hat{\psi}\) may be obtained by solving the empirical estimating equation
\[
\mathbb{P}_n\big[\varphi(O; \hat \psi, \hat \eta)\big]=0,
\]
where \(\varphi(O; \psi,  \eta)\) denotes an observed-data influence function for the target parameter \(\psi\). Under suitable regularity conditions, solution \(\hat{\psi}\) is \(\sqrt{n}\)-consistent and asymptotically normal with asymptotic variance \(\mathbb{E}[\varphi^2(O; \psi, \eta)]\). Such estimators may also be \emph{multiply robust}, meaning that consistency of \(\hat \psi \) is preserved even if some nuisance functions are misspecified \citep{KennedyTutorial, tsiatis06missing}.

Under identification of the full-data law, target functionals of the full-data law are also functionals of the observed-data law. However, the identification arguments developed in this work do not generally yield simple explicit expressions for such targets in terms of the observed-data law. Indeed, estimation of the full-data law typically relies on iterative optimization procedures, such as alternating least squares and gradient descent, with continuous-variable extensions employing sieve-based approximations \citep{ALS,gradient_descent, hu2008instrumental, zhou2024causalinferencehiddentreatment}. 

Consequently, standard approaches that derive observed-data influence functions by differentiating a closed-form representation of the target in terms of the observed-data law are not readily applicable. We instead begin with a full-data influence function, which is typically straightforward to derive in closed form, and develop a procedure for translating it into an observed-data influence function. For this purpose, we assume that under the identification map between the observed-data and full-data law, every regular parametric submodel of the observed-data law \(\{p_\epsilon(\vec{O})\}\) with score \(s(\vec{O})\) can be induced through marginalization by a regular parametric submodel of the full-data law \(\{p_\epsilon(\vec{H},\vec{O})\}\) with score \(s(\vec{H},\vec{O})\), and that \(\frac{d}{d \epsilon} \psi(p_\epsilon(\vec O))=\frac{d}{d \epsilon} \psi(p_\epsilon(\vec H,\vec O))\). Our construction systematically replaces unobserved random variables appearing in the full-data influence function with observed variables and appropriately chosen nuisance functions.

\begin{definition}[Full-data influence function]
Let \(\psi\) be a functional of the full-data law \(p(\vec{H},\vec{O})\), and let \(\eta\) denote a collection of nuisance functions determined by this law.  
A function \(\phi(\vec{H},\vec{O};\psi,\eta)\) is a full-data influence function for \(\psi\) if, for any regular parametric submodel \(\{p_\epsilon(\vec{H},\vec{O})\}\) with score \(s(\vec{H},\vec{O})\),
\[
\mathbb{E}\big[\phi(\vec{H},\vec{O};\psi,\eta)\big]=0,
\quad \text{and} \quad
\left.\frac{d}{d\epsilon}\psi(p_\epsilon)\right|_{\epsilon=0}
=
\mathbb{E}\big[\phi(\vec{H},\vec{O};\psi,\eta)\,s_\epsilon(\vec{H},\vec{O})\big].
\]
\end{definition}

After full-law identification is established, the observed-data law and full-data law are in one-to-one correspondence. Consequently, nuisance functions \(\eta\) defined as functionals of the full-data law are also determined by the observed-data law. By an abuse of notation, we use \(\eta\) to denote both representations, with the intended interpretation determined by context.

\section{Identification}\label{sec:identification} 
We adopt the problem setup discussed in Section~\ref{subsec:setup}, where collection \(\vec H\) consists of unobserved variables of interest and collection \(\vec O\) consists of observed variables. Again, when these collections appear as arguments of distributions or are assigned values, we use the same notation to denote the corresponding random vectors obtained by arranging the variables in each collection in a fixed order. 

\subsection{Identification of the Full-Data Law}\label{subsec:id_assumps_full1}

In this section, we establish assumptions which identify the full-data law. As previously noted, in some applications, the target of interest is invariant to latent state labeling, making label identification unnecessary. In such settings, our assumptions can be adapted to identify the full-data law up to arbitrary permutation of the support values of
\(\vec H\), as discussed in Section~\ref{subsec:id_assumps_full2} 

\begin{assumption}\label{assump:bounded_density}
\((\vec{H},  \vec{O})\) admit a joint density that is bounded. Each variable in \(\vec H\) or \(\vec O\) is either finite-support or continuously distributed with support \(\mathbb{R}^{d}\), where the dimension \(d\) may differ across variables. 
\end{assumption}

\begin{assumption}\label{assump:mutual_independence}
There exists a collection of variables
\(
\dot{\bigcup}_{H \in \vec H} \{W_{1_H}, W_{2_H}, W_{3_H}\}
\subseteq \vec O
\),where \(\dot{\bigcup}\) denotes a disjoint union,
such that, for each \(H \in \vec H\),
\(W_{1_H}, W_{2_H}, W_{3_H}\) are mutually independent conditional on
\(
H,
\vec O \setminus
\bigcup_{H' \in \vec H} \{W_{1_{H'}}, W_{2_{H'}}, W_{3_{H'}}\}.
\)
\end{assumption}

For notational convenience, we henceforth denote 
\[
\vec O_{-W}
=
\vec O \setminus
\bigcup_{H'\in\vec H}
\{W_{1_{H'}},W_{2_{H'}},W_{3_{H'}}\}\] which takes values \(\vec o_{-W}\).

\begin{assumption}\label{assump:m_sep1} 
For each \(H \in \vec H\), we have that 
\[ (W_{1_H}, W_{2_H}, W_{3_H}) \ci \left( \bigcup_{H' \in \vec H \setminus \{H\}} \{W_{1_{H'}}, W_{2_{H'}}, W_{3_{H'}}\}, \, \vec H \setminus \{H\} \right) \text{ conditional on }  H,\, \vec O_{-W}. \] 
\end{assumption}

\begin{assumption}\label{assump:m_sep2}
For each \(H \in \vec{H}\), we have that
\(
H \ci
\vec H \setminus \{H\}
\text{ conditional on } \vec O_{-W}. \)
\end{assumption}

In existing examples from the literature that recover the full-data law, there is often a single latent variable, \(\vec H = \{H\}\), rendering Assumptions~\ref{assump:m_sep1}-\ref{assump:m_sep2} vacuous. 
Figure~\ref{fig:proxies} and Figures~\ref{fig:3}(a)-(e) give examples of such settings. In particular, Figure~\ref{fig:proxies} is adapted from \citet{hu2008instrumental};
Figure~\ref{fig:3}(a) is drawn from \citet{kuroki14measurement},
\citet{allman2009identifiability}, and \citet{deaner2023controllinglatentconfoundingtriple}; and Figures~\ref{fig:3}(b),
\ref{fig:3}(d), and \ref{fig:3}(e) are drawn from
\citet{zhou2024causalinferencehiddentreatment}, \citet{Guo2026_outcome},
and \citet{Guo2026_confounder}, respectively. 

In Section~\ref{sec:graphical_char}, we describe graphical models compatible with our independence assumptions, including in settings with multiple latent variables, exemplified by the tree-structured graphs in Figures~\ref{fig:graphical_structure1}(a) and~\ref{fig:graphical_structure1}(b).

If more than three mutually independent proxies (here, \(W_{1_H}, W_{2_H}, W_{3_H}\)) are available for each \(H \in \vec H\), our identification and estimation results continue to hold. Work by \citet{egami2026debiasedinferenceaigenerateddata} considers identification and estimation in the setting with a single latent variable, where \(\vec H = \{H\}\) and its proxies have finite support. Our identification results apply more generally to multiple latent variables and allow both the latent variables and their proxies to have continuous support. Moreover, our estimation results discuss a broader range of target functionals; and, while our estimation results often assume (possibly multiple) finite-support latent variables, these results accommodate proxies with continuous support. However, beyond our scope, \citet{egami2026debiasedinferenceaigenerateddata} study how more than three conditionally mutually independent proxies can improve estimation, demonstrating efficiency gains from incorporating additional proxies.

\begin{assumption}\label{assump:completeness}
For each \(H \in \vec{H}\), each value $\vec o_{-W}$, and any square-integrable function \(g\),
\begin{align*}
&\mathbb{E}\{g(H) \mid W_{1_H}, \vec o_{-W}\} = 0 \ \text{a.s. iff } g(H) = 0 \ \text{a.s.};
\\
&\mathbb{E}\{g(H) \mid W_{2_H}, \vec o_{-W}\} = 0 \ \text{a.s. iff } g(H) = 0 \ \text{a.s.}
\end{align*}
\end{assumption}

Assumption~\ref{assump:completeness} gives a set of 
\emph{completeness} assumptions which imposes sufficient variability of distributions in \(\{p(W_{1_H} \mid H=h, \vec o_{-W}) \text{: } h  \text{ (in the support of } H \text{ when } \vec O_{-W} = \vec o_{-W})\}\) and \(\{p(W_{2_H} \mid H=h, \vec o_{-W}) \text{: } h\}\) for each value $\vec o_{-W}$. For a more detailed explanation of completeness assumptions, see \citet{tchetgentchetgen2024proximal}, \citet{semiparametric2020cui}, and \citet{Guo2025}. Assumption~\ref{assump:completeness} is considered mild in the sense that, provided \(W_{1_H}\) and \(W_{2_H}\) each have higher dimension than \(H\), any distribution violating them can be approximated arbitrarily well in total variation distance by distributions that satisfy them \citep{semiparametric2020cui, Canay_testability2013}.

\begin{assumption}\label{assump:distinctness}
For each \(H \in \vec{H}\), each value \(\vec o_{-W}\), and any two
distinct values \(h \neq h'\), 
\[
p\!\left(
W_{3_H}
\,\middle|\,
H=h,\,
\vec o_{-W}
\right)
\quad\text{and}\quad
p\!\left(
W_{3_H}
\,\middle|\,
H=h',\,
\vec o_{-W}
\right)
\]
differ with positive probability under the marginal distribution of
\(W_{3_H}\).
\end{assumption}

Assumption~\ref{assump:distinctness} is weaker than a completeness assumption, requiring only that the conditional distributions in \(\{p(W_{3_H} \mid H=h, \vec o_{-W}) \text{: } h  \text{ (in the support of } H \text{ when } \vec O_{-W} = \vec o_{-W})\}\) be distinct for each \(\vec o_{-W}\). Assumptions~\ref{assump:completeness}--\ref{assump:distinctness} imply that for every \(W_{k_H}\) in the set \(\{W_{1_H}, W_{2_H}, W_{3_H}\}\),  \(W_{k_H}\) and \(H\) are dependent conditional on \(\vec O_{-W}\). We therefore refer to \(W_{1_H}\), \(W_{2_H}\), and \(W_{3_H}\) as “proxies” of \(H\), and 
note that the roles of \(W_{2_H}\), \(W_{2_H}\), and \(W_{3_H}\) may be interchanged, provided the reassignment is made consistently throughout our assumptions.

\begin{lemma}\label{lemma:id_lemma1}
Under Assumptions~\ref{assump:bounded_density}--\ref{assump:distinctness}, for each value \(\vec o_{-W}\), the law
\[p(\vec H, \vec o_{-W}, \bigcup_{H \in \vec H} \{W_{1_H},W_{2_H},W_{3_H}\})\]
is identified up to arbitrary permutation of the support values of \(\vec H\).
\end{lemma}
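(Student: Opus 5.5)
Fix a value \(\vec o_{-W}\) throughout; every statement below is conditional on \(\vec O_{-W}=\vec o_{-W}\). The plan is to identify, separately for each \(H\in\vec H\), the conditional law \(p(H, W_{1_H},W_{2_H},W_{3_H}\mid \vec o_{-W})\) up to relabeling of the support values of \(H\). This follows the \citet{hu2008instrumental} eigendecomposition argument, or Kruskal's theorem in the finite case. I would then glue the per-\(H\) pieces together using Assumptions~\ref{assump:m_sep1}--\ref{assump:m_sep2}.

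\textbf{Step 1 (single latent variable).} By Assumption~\ref{assump:mutual_independence}, the observed joint factorizes as
\[
p(w_1,w_2,w_3\mid \vec o_{-W})=\int p(w_1\mid h,\vec o_{-W})\,p(w_2\mid h,\vec o_{-W})\,p(w_3\mid h,\vec o_{-W})\,p(h\mid \vec o_{-W})\,dh,
\]
where \(w_k\) abbreviates \(w_{k_H}\). I would define integral operators \(L_{W_1|H}\) and \(L_{W_2|H}\), and a diagonal operator \(D_{w_3}\) that multiplies by \(p(w_3\mid h,\vec o_{-W})\). For each fixed \(w_3\) this gives
\[
L_{W_1,w_3,W_2}=L_{W_1|H}\,D_{w_3}\,\Delta_H\,L_{W_2|H}^{\top},
\]
where \(\Delta_H\) multiplies by \(p(h\mid\vec o_{-W})\). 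Integrating out \(w_3\) gives \(L_{W_1,W_2}=L_{W_1|H}\Delta_H L_{W_2|H}^{\top}\).

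Assumption~\ref{assump:completeness} makes \(L_{W_1|H}\) and \(L_{W_2|H}^\top\) injective. Assumption~\ref{assump:bounded_density} supplies square-integrability and boundedness, so that these operators are well defined on \(L^2\) and bounded. Together these let us form
\[
L_{W_1,w_3,W_2}L_{W_1,W_2}^{-1}=L_{W_1|H}D_{w_3}L_{W_1|H}^{-1}
\]
on a suitable dense domain. This is an eigendecomposition with eigenfunctions \(p(w_1\mid h,\vec o_{-W})\) and eigenvalues \(p(w_3\mid h,\vec o_{-W})\).

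Uniqueness of the decomposition is where the real work lies. A single \(w_3\) may give repeated eigenvalues. Assumption~\ref{assump:distinctness} only guarantees that, for each pair \(h\neq h'\), the eigenvalue functions differ on a set of \(w_3\) of positive probability. I would therefore consider the whole family of commuting operators indexed by \(w_3\). Their joint eigenspaces are one-dimensional, because distinctness separates each pair of values of \(h\). The spectral uniqueness argument of Hu and Schennach (their Theorem 1) then applies. Eigenfunctions are fixed up to scale by the requirement that each integrates to one as a density in \(w_1\). They are fixed up to an ordering, i.e.\ a permutation or relabeling of \(h\), because we impose no monotonicity or normalization condition on \(H\). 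This is exactly why the lemma is only up to permutation.

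From \(L_{W_1|H}\) and the eigenvalues we recover \(p(w_3\mid h,\vec o_{-W})\). Then \(\Delta_H L_{W_2|H}^\top=L_{W_1|H}^{-1}L_{W_1,W_2}\) yields \(p(h\mid\vec o_{-W})\) and \(p(w_2\mid h,\vec o_{-W})\), since the latter integrates to one. This recovers the joint law of \((H,W_{1_H},W_{2_H},W_{3_H})\) given \(\vec o_{-W}\), all under one common labeling.

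\textbf{Step 2 (gluing across latent variables).} Write \(\vec W_H=(W_{1_H},W_{2_H},W_{3_H})\). I would chain Assumption~\ref{assump:m_sep1} over the variables in \(\vec H\), ordered arbitrarily. Conditional on \(\vec H\) and \(\vec o_{-W}\), the blocks \(\vec W_H\) are then mutually independent, and each \(\vec W_H\) depends on \(\vec H\) only through \(H\). Assumption~\ref{assump:m_sep2}, read as making the elements of \(\vec H\) mutually independent given \(\vec O_{-W}\), gives \(p(\vec h\mid\vec o_{-W})=\prod_H p(h\mid \vec o_{-W})\). Hence
\[
p(\vec h,\vec o_{-W},\textstyle\bigcup_H \vec w_H)=p(\vec o_{-W})\prod_{H}p(h\mid\vec o_{-W})\,p(\vec w_H\mid h,\vec o_{-W}).
\]
Each factor was identified in Step 1 up to a permutation of the values of that \(H\), and \(p(\vec o_{-W})\) is observed. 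The product is therefore identified up to permutation of the support values of \(\vec H\).

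\textbf{Main obstacle.} I expect the hardest step to be the operator-theoretic uniqueness in Step 1 for continuous \(H\) and \(W\). This requires showing that the inverse operators are well defined on the relevant ranges, and that the joint diagonalization over \(w_3\) has unique eigenfunctions. My first attempt would be to invoke the argument of \citet{hu2008instrumental} essentially verbatim, conditional on \(\vec o_{-W}\). One caution: Assumption~\ref{assump:m_sep2} literally states only pairwise conditional independence of each \(H\) from the rest. If full mutual independence does not follow directly, I would obtain the factorization of \(p(\vec h\mid \vec o_{-W})\) inductively, applying the statement to each \(H\) against the remaining block.
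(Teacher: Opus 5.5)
Your proposal is correct and follows essentially the paper's proof: the \citet{hu2008instrumental} argument applied to each \(H\in\vec H\) conditional on \(\vec o_{-W}\), then gluing via Assumptions~\ref{assump:m_sep1}--\ref{assump:m_sep2} and the factorization \(p(\vec H,\vec o_{-W},\bigcup_{H}\vec W_H)=p(\bigcup_{H}\vec W_H\mid\vec H,\vec o_{-W})\,p(\vec H\mid\vec o_{-W})\,p(\vec o_{-W})\). Your Step 2 spells out the product form that the paper only asserts, and your worry about Assumption~\ref{assump:m_sep2} is unnecessary, since it is stated against the whole block \(\vec H\setminus\{H\}\) and so your induction goes through directly. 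One small slip is that completeness of \(H\) given \(W_{2_H}\) gives (essentially) injectivity of \(L_{W_2|H}\) on functions of \(h\), i.e.\ dense range of \(L_{W_2|H}^{\top}\), which is what defining \(L_{W_1,w_3,W_2}L_{W_1,W_2}^{-1}\) on a dense domain actually requires; it does not give injectivity of \(L_{W_2|H}^{\top}\), which would fail whenever \(W_{2_H}\) has more support points than \(H\).
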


A proof is provided in Appendix~\ref{app:id_proof1}. Note that Assumptions~\ref{assump:bounded_density}--\ref{assump:distinctness} uniquely determine the support of \((\vec H,\vec O)\). For practical purposes, however, the support is often assumed to be known \emph{a priori}, simplifying the search procedures used to recover the law. To recover the full-data law, we may additionally impose Assumption~\ref{assump:fix_labels}.

\begin{assumption}\label{assump:fix_labels}

There exists a known functional \(M\) such that for each value \(\vec o_{-W}\),
\[
M\!\left(p(\vec H = \vec h, \vec o_{-W}, \bigcup_{H \in \vec H} \{W_{1_H},W_{2_H},W_{3_H}\})\right)=\vec h.
\]
\end{assumption}

\begin{thm}\label{thm:id_thm_full}
Under Assumptions~\ref{assump:bounded_density}-\ref{assump:fix_labels}, the full-data law
\(p(\vec H,\vec O)\)
is identified.
\end{thm}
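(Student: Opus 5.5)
The plan is to combine Lemma~\ref{lemma:id_lemma1} with Assumption~\ref{assump:fix_labels} to pin down labels. Then I would pass from the law of \((\vec H,\vec o_{-W},\text{proxies})\) to the full-data law \(p(\vec H,\vec O)\), using the fact that \(\vec O\) is exactly \(\vec O_{-W}\) together with the proxies.

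\textbf{Step 1.} Fix a value \(\vec o_{-W}\). By Lemma~\ref{lemma:id_lemma1}, the observed-data law determines an equivalence class \(\mathcal{C}(\vec o_{-W})\) of joint laws
\[
q\big(\vec H,\vec o_{-W},\textstyle\bigcup_{H\in\vec H}\{W_{1_H},W_{2_H},W_{3_H}\}\big),
\]
any two members of which differ by a permutation \(\sigma\) of the support values of \(\vec H\). The label ambiguity is entirely captured by \(\sigma\); these are the relabelings \(q_\sigma(\vec h,\cdot)=q(\sigma(\vec h),\cdot)\).

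\textbf{Step 2.} Select the true member of \(\mathcal{C}(\vec o_{-W})\) using the functional \(M\) from Assumption~\ref{assump:fix_labels}. For a candidate \(q_\sigma\), I would evaluate \(M\) on the slice \(q_\sigma(\vec H=\vec h,\ldots)\) for every support point \(\vec h\). The true law satisfies \(M(p(\vec H=\vec h,\ldots))=\vec h\) for all \(\vec h\). Under a nontrivial relabeling the slice at \(\vec h\) is the true slice at \(\sigma(\vec h)\ne\vec h\), so \(M\) returns \(\sigma(\vec h)\ne \vec h\). Hence exactly one permutation, the identity relative to the truth, satisfies the labeling condition at every \(\vec h\).

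The main obstacle is making this selection rigorous. Assumption~\ref{assump:fix_labels} is stated only for the true law, and I would read it as saying that \(M\) applied to a slice returns the label that slice carries under \(p\). Since distinctness (Assumption~\ref{assump:distinctness}) guarantees the slices at distinct \(\vec h\) are distinct, \(M\) is well defined on the collection of slices. Continuous \(\vec H\) requires care: permutations of the support are then arbitrary bijections, and the argument must hold almost everywhere in \(\vec h\). I would handle this by checking the labeling condition \(p(\vec h)\)-almost everywhere and noting that the selected law is unique up to null sets.

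\textbf{Step 3.} Having identified \(p(\vec H,\vec o_{-W},\text{proxies})\) for every \(\vec o_{-W}\), note that this function of \((\vec h,\vec o_{-W},w)\) is precisely the density \(p(\vec H,\vec O)\) evaluated at \(\vec O=(\vec o_{-W},w)\), since \(\vec O=\vec O_{-W}\,\dot\cup\,\bigcup_H\{W_{1_H},W_{2_H},W_{3_H}\}\). One check is needed here: the lemma's object is a joint law with \(\vec o_{-W}\) fixed as an argument, not merely the conditional given \(\vec o_{-W}\). If it were only the conditional, I would multiply by the identified marginal \(p(\vec o_{-W})\). Assumption~\ref{assump:bounded_density} ensures these densities exist and the pointwise construction yields a genuine density.

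Collecting over all \(\vec o_{-W}\) gives a map \(f\) from \(p(\vec O)\) to \(p(\vec H,\vec O)\), which establishes identification.
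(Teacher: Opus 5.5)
Your proposal is correct and follows the paper's argument. The paper proves the theorem by combining Lemma~\ref{lemma:id_lemma1}, which identifies each \(\vec o_{-W}\)-slice of the joint law up to relabeling of \(\vec H\), with Assumption~\ref{assump:fix_labels}, which fixes the labels separately at every \(\vec o_{-W}\); it then reads off \(p(\vec H,\vec O)\), since \(\vec O\) is \(\vec O_{-W}\) together with the proxies. Your Step 2 spells out what the paper leaves implicit: on a nontrivially relabeled slice \(M\) returns \(\sigma(\vec h)\neq\vec h\), so only the true labeling passes the check.
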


The proof follows directly from Lemma~\ref{lemma:id_lemma1} and Assumption~\ref{assump:fix_labels}. Note that Assumption~\ref{assump:fix_labels} may be replaced by any condition that uniquely determines the labels of the latent values of \(\vec H\) in Theorem~\ref{thm:id_thm_full}. 

The assumptions invoked in Theorem~\ref{thm:id_thm_full} are adapted from \citet{hu2008instrumental}, which state the results for the case of a single latent variable \(\vec{H} = \{H\}\) and only three observed variables \(\vec{O} = \{W_{1_H}, W_{2_H}, W_{3_H}\}\). See Figure~\ref{fig:proxies} for a graphical representation consistent with their assumptions. Theorem~\ref{thm:id_thm_full} also subsumes, up to minor modifications to Assumption~\ref{assump:fix_labels}, the arguments developed in work from \citet{zhou2024causalinferencehiddentreatment}, consistent with the model depicted in Figure~\ref{fig:3}(b), and \citet{Guo2026_outcome},  consistent with the model depicted in Figure~\ref{fig:3}(d).

Under identification of the full-data law, any functional of the full-data law is identified, including the mean of the latent variable(s) of interest \(\mathbb{E}[\vec H]\) in Figure~\ref{fig:proxies}, Figures~\ref{fig:3}(a)--\ref{fig:3}(e), and Figures~\ref{fig:graphical_structure1}(a)-(b), as well as the average counterfactual outcome \(\mathbb{E}[Y(a)]\) in Figures~\ref{fig:3}(a)--\ref{fig:3}(e). 

\subsection{Identification of the Full-Data Law  up to Permutation of the Support Values of \(\vec H\)}\label{subsec:id_assumps_full2}

In some applications, identifying the full-data law \(p(\vec H, \vec O)\) up to arbitrary permutation of the support values of \(\vec H\) is sufficient to identify the target of interest. For instance, the average counterfactual outcome $Y$ would have attained setting \(A=a\) in Figure~\ref{fig:3}(a), \[\mathbb{E}[Y(a)] = \iint y p(y \mid u, a)p(u) dy du  = \mathbb{E}[\mathbb{E}[Y \mid a, U]],\] is a functional of \(p(Y,A,U)\) invariant to arbitrary permutation of the support values of \(U\). Likewise, the targets corresponding to the average counterfactual outcome
\(\mathbb{E}[Y(a)]\) in Figures~\ref{fig:3}(c) and  \ref{fig:3}(e), given respectively by
\(
\iiint y\,p(y\mid a',m)p(m\mid a)p(a')\,dy\,dm\,da',
\)
and
\(
\iiiint y\,p(y\mid a',m,u)p(m\mid a,u)p(a',u)\,dy\,dm\,da'\,du,
\)
are functionals of the full-data law invariant to arbitrary
permutation of the support values of \(\vec H\)
\citep{pearl95causal, fulcher19robust, Guo2026_confounder}. 

In such settings, we may replace Assumption~\ref{assump:fix_labels} with Assumption~\ref{assump:m_sep3} to recover the full-data law up to arbitrary permutation of the support values of
\(\vec H\). Indeed, Assumptions~\ref{assump:bounded_density}-\ref{assump:distinctness} and Assumption~\ref{assump:m_sep3} give an argument from \citet{kuroki14measurement}, \citet{allman2009identifiability}, and \citet{deaner2023controllinglatentconfoundingtriple} under latent confounding, consistent with the model depicted in Figure~\ref{fig:3}(a).

\begin{assumption}\label{assump:m_sep3}
For each \(H \in \vec{H}\), we have that
\(
W_{1_H} \ci \vec O_{-W} \text{ conditional on }  H.
\)
\end{assumption}

\begin{thm}\label{thm:id_thm_up_to_label}
Under Assumptions~\ref{assump:bounded_density}-\ref{assump:distinctness} and Assumption~\ref{assump:m_sep3}, the full-data law \(p(\vec H, \vec O)\) is identified up to arbitrary permutation of the support values of \(\vec H\).
\end{thm}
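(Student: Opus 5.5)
The plan is to start from Lemma~\ref{lemma:id_lemma1}, which already identifies, for each fixed value \(\vec o_{-W}\), the stratum law \(p(\vec H,\vec o_{-W},\bigcup_{H}\{W_{1_H},W_{2_H},W_{3_H}\})\) up to a permutation \(\pi_{\vec o_{-W}}\) of the support values of \(\vec H\). The danger is that these permutations may differ across strata. Gluing such independently relabelled stratum laws together would not give \(p(\vec H,\vec O)\) up to a \emph{single} global permutation. So the whole proof reduces to showing that Assumption~\ref{assump:m_sep3} lets us choose the stratum-wise labelings coherently, so that \(\pi_{\vec o_{-W}}\) can be taken constant in \(\vec o_{-W}\).

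\textbf{Step 1 (a label-invariant signature).} I would first reduce to one coordinate \(H\in\vec H\) at a time. From the identified stratum law, marginalize to obtain \(p(H, W_{1_H}\mid \vec o_{-W})\), hence the family of conditional laws \(\{p(W_{1_H}\mid H=h,\vec o_{-W})\}_h\), each known only up to the unknown labeling of \(h\). I expect the stratum-wise permutation of \(\vec H\) to act coordinatewise, because Assumptions~\ref{assump:m_sep1}--\ref{assump:m_sep2} make the stratum law factor through the individual \(H\)'s. Then recovering the coordinate labelings recovers the joint one.

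Assumption~\ref{assump:m_sep3} gives \(p(W_{1_H}\mid H=h,\vec o_{-W})=p(W_{1_H}\mid H=h)=:\kappa_h\), which does not depend on the stratum. Moreover, completeness of \(W_{1_H}\) in Assumption~\ref{assump:completeness} implies that \(h\mapsto\kappa_h\) is injective on the support of \(H\) given \(\vec o_{-W}\). If \(\kappa_h=\kappa_{h'}\) with \(h\neq h'\), then \(g=\mathbf 1\{h\}-c\,\mathbf 1\{h'\}\) (suitably weighted, or a localized version in the continuous case) would violate completeness. So \(\kappa_h\) is a label-free fingerprint of the latent state.

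\textbf{Step 2 (aligning strata).} Next, define a global label set as the set of distinct distributions \(\kappa\) appearing across strata. In each stratum, relabel every latent value by its fingerprint \(\kappa_h\). This is well defined because \(\kappa\) is invariant to \(\pi_{\vec o_{-W}}\) and is stratum-free. Two latent values in different strata with the same fingerprint then receive the same label, and distinct fingerprints receive distinct labels.

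The resulting relabelled stratum laws \(p(\vec H,\vec o_{-W},\cdot)\) are therefore all obtained from the truth by one common bijection, namely \(h\mapsto\kappa_h\) composed with any fixed enumeration. Integrating or summing over \(\vec o_{-W}\), together with \(p(\vec o_{-W})\), which is observed, assembles \(p(\vec H,\vec O)\) up to that single permutation. For continuous \(H\) with support \(\mathbb{R}^d\), "permutation" is read as a measurable bijection of the support, and the same fingerprint map supplies it.

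\textbf{Main obstacle.} The delicate point is global injectivity of \(h\mapsto\kappa_h\). Completeness is imposed only within strata, so two values \(h\neq h'\) that never co-occur in a common stratum are not directly separated. I would first try to show that Assumption~\ref{assump:bounded_density} forces a common support of \(H\) across strata, at least for the continuous case with support \(\mathbb{R}^d\) and for finite-support cases where the support is stratum-invariant. Under a common support, injectivity follows from completeness in any single stratum. Otherwise I would argue that values with identical \(\kappa\) that never co-occur are observationally indistinguishable in every functional invariant to relabeling, and state the result modulo that identification. A secondary technical check is measurability of the fingerprint alignment when \(\vec O_{-W}\) is continuous; I would handle it by noting that \(\kappa_h\) is a measurable functional of the identified stratum law.
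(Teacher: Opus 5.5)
Your core argument is the paper's. The paper's proof is just Lemma~\ref{lemma:id_lemma1} plus the observation that, under Assumption~\ref{assump:m_sep3}, $p(\bigcup_{H} W_{1_H}\mid \vec H=\vec h_i,\vec o_{-W})$ does not depend on $\vec o_{-W}$, so it serves as a common reference for aligning the stratum-wise labels. You use the coordinatewise fingerprints $\kappa_h=p(W_{1_H}\mid H=h)$ instead of the joint one. This is equivalent, because Assumptions~\ref{assump:m_sep1}--\ref{assump:m_sep2} make the stratum law factorize over $H\in\vec H$. Your completeness argument for within-stratum injectivity of $h\mapsto\kappa_h$ (take $g$ equal to $1/p(h\mid\vec o_{-W})$ at $h$ and $-1/p(h'\mid\vec o_{-W})$ at $h'$) is correct.

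The obstacle you flag is genuine, and the paper's two-sentence argument passes over it silently. However, neither of your ways around it works.
\begin{itemize}
\item Assumption~\ref{assump:bounded_density} restricts only the marginal support of each variable. It cannot force the support of $H$ given $\vec o_{-W}$ to be the same across strata; for finite-support $H$ in particular, conditional supports may vary freely.
\item Your fallback claim is false: merging never-co-occurring values with equal $\kappa$ is not invisible to every relabeling-invariant functional. The merged law has fewer support points and, for example, a different $\sum_h p(h)^2$.
\end{itemize}

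In fact, the statement as written needs an extra hypothesis. Take a single $H\in\{0,1,2\}$ and binary $\vec O_{-W}=\{O\}$. Let $\mathrm{supp}(H\mid O=0)=\{0,1\}$ and $\mathrm{supp}(H\mid O=1)=\{0,2\}$. Impose $W_{1_H}\ci O\mid H$ and $p(W_{1_H}\mid H=1)=p(W_{1_H}\mid H=2)\neq p(W_{1_H}\mid H=0)$. Choose $W_{2_H},W_{3_H}$ so that Assumptions~\ref{assump:mutual_independence}, \ref{assump:completeness} and~\ref{assump:distinctness} hold within each stratum. Now collapse the values $1$ and $2$ into a single value. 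The result satisfies every assumption, including Assumption~\ref{assump:m_sep3}, and induces the same observed law. Yet it is not a relabeling of the truth, since the supports have different sizes.

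The right move is therefore to add a hypothesis. Two options would do: require that $\mathrm{supp}(H\mid \vec O_{-W}=\vec o_{-W})$ not vary with $\vec o_{-W}$, or require that $h\mapsto p(W_{1_H}\mid H=h)$ be injective on the whole support of $H$. Under either, your Step~2 goes through as written, and so does the paper's alignment.
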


By Lemma~\ref{lemma:id_lemma1}, \(
p(\vec H = \vec h_i, \vec o_{-W}, \bigcup_{H \in \vec H} \{W_{1_H},W_{2_H},W_{3_H}\})
\) is identified for each value \(\vec o_{-W}\). With the inclusion of Assumption~\ref{assump:m_sep3},
\(
p( \bigcup_{H \in \vec H} \vec W_{1_H} \mid \vec H=\vec h_i)
=
p(\bigcup_{H \in \vec H} \vec W_{1_H} \mid \vec H=\vec h_i,\vec o_{-W}),
\)
providing a common reference distribution so that labels \(\vec h_i\)
can be aligned across different values \(\vec o_{-W}\).

It is worth noting that one may instead first impose Assumptions~\ref{assump:bounded_density}--\ref{assump:distinctness} together with Assumption~\ref{assump:m_sep3} to identify the full-data law \(p(\vec H, \vec O)\) up to arbitrary permutation of the support values of \(\vec H\)
and subsequently impose a label-fixing assumption to identify \(p(\vec H,\vec O)\). We omit this formulation for brevity.

\section{Compatible ADMGs}\label{sec:graphical_char}
Checking \(m\)-separation in an ADMG allows us to determine whether the independences encoded by the graph are compatible with those imposed by our assumptions. In particular, ADMGs reflecting statistical models that satisfy the independences in Assumptions~\ref{assump:mutual_independence}--\ref{assump:m_sep2} are compatible with the assumptions used to identify \(p(\vec H,\vec O)\) in Theorem~\ref{thm:id_thm_full}. Similarly, ADMGs reflecting statistical models that additionally satisfy the conditional independence(s) in Assumption~\ref{assump:m_sep3} are compatible with the assumptions used to identify the full-data law \(p(\vec H,\vec O)\) up to arbitrary permutation of the support values of \(\vec H\) in Theorem~\ref{thm:id_thm_up_to_label}. The Fast Causal Inference (FCI) algorithm characterizes the complete class of ADMGs encoding a given set of independences \citep{spirtes01causation}, thereby allowing us to classify ADMGs reflecting statistical models compatible with our assumption sets.

\subsection{Compatability with Assumptions~\ref{assump:mutual_independence}--\ref{assump:m_sep2}}
ADMGs appearing in previous literature that identifies the full-data law \(p(\vec H, \vec O)\) often reflect Assumptions~\ref{assump:mutual_independence}--\ref{assump:m_sep2} via the sufficient graphical criterion given in Proposition~\ref{prop:graph_char_1}. For vertices \(V,V' \in \vec H \cup \vec O\) in an ADMG
\(\mathcal{G}(\vec H,\vec O)\), we say that \(V\) is a parent of
\(V'\) (and \(V'\) is a child of \(V\)) if \(V \to V'\), and that \(V\)
and \(V'\) are siblings if \(V \leftrightarrow V'\). 

The simple DAG in Figure~\ref{fig:proxies}, with \(\vec H = \{H\}\) and \(\vec{O}=\{W_{1_H},W_{2_H},W_{3_H}\}=\{W,Z,Y\}\), is an example of a graph  \(\mathcal{G}(\vec{H},\vec{O})\) that falls under the criterion in Proposition~\ref{prop:graph_char_1}. Figure~\ref{fig:3}(a)--(e) gives examples of a range of settings under this criterion.

\begin{proposition}\label{prop:graph_char_1}
If ADMG \(\mathcal{G}(\vec{H},\vec{O})\) has a single latent variable \(\vec H =\{H\}\), and \(W_{1_H}\), \(W_{2_H}\), \(W_{3_H} \in \vec O\) are children of \(H\) with no children or siblings, we have that statistical model given by \(\mathcal{G}(\vec{H},\vec{O})\) is compatible with Assumptions~\ref{assump:mutual_independence}--\ref{assump:m_sep2}.
\end{proposition}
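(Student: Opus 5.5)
Since \(\vec H=\{H\}\), Assumptions~\ref{assump:m_sep1} and~\ref{assump:m_sep2} are vacuous. For Assumption~\ref{assump:m_sep1}, the conditioning set on the right of the independence is empty. For Assumption~\ref{assump:m_sep2}, we have \(\vec H\setminus\{H\}=\emptyset\). So the plan reduces to verifying Assumption~\ref{assump:mutual_independence}: that \(W_{1_H},W_{2_H},W_{3_H}\) are mutually independent conditional on \(H\) and \(\vec O_{-W}=\vec O\setminus\{W_{1_H},W_{2_H},W_{3_H}\}\). The strategy is the global Markov property for ADMGs. Every distribution in the model of \(\mathcal G(\vec H,\vec O)\) satisfies the independences given by \(m\)-separation, so it suffices to establish the required \(m\)-separation statements in the graph.

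Mutual independence of three variables given \(C=\{H\}\cup\vec O_{-W}\) follows from two pairwise-type statements. Either show \(W_{1_H}\ci (W_{2_H},W_{3_H})\mid C\) and \(W_{2_H}\ci W_{3_H}\mid C\cup\{W_{1_H}\}\) and use the chain rule, or directly show that each \(W_{k_H}\) is \(m\)-separated from the set of the other two given \(C\). I will prove the stronger graphical claim: for each \(k\) and any \(S\subseteq\{W_{1_H},W_{2_H},W_{3_H}\}\setminus\{W_{k_H}\}\), \(W_{k_H}\) is \(m\)-separated from \(S\) given \(C\) together with the remaining proxies. The factorization \(p(w_1,w_2,w_3\mid C)=\prod_k p(w_k\mid C)\) then follows.

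The key step is to classify the paths out of \(W_{k_H}\). By hypothesis \(W_{k_H}\) has no children and no siblings. It may have parents other than \(H\), but any such parent lies in \(\vec O\setminus\{\text{proxies}\}\), since the proxies have no children and \(H\) is the only latent variable. So every edge at \(W_{k_H}\) has the form \(V\to W_{k_H}\) with \(V\in\{H\}\cup\vec O_{-W}\subseteq C\). Along any path from \(W_{k_H}\) to another proxy \(W_{j_H}\), the node adjacent to \(W_{k_H}\) is therefore some \(V\in C\), and the edge at \(V\) on the path points away from \(V\) toward \(W_{k_H}\), so \(V\) is a non-collider on the path. Because \(V\) lies in the conditioning set, the path is blocked. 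The path is also nontrivial: \(W_{k_H}\) and \(W_{j_H}\) are not adjacent, since neither is a child nor a sibling of the other. The same reasoning covers the case where the conditioning set also contains other proxies.

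The main obstacle is making sure the collider/non-collider reasoning is airtight under \(m\)-separation in an ADMG. In particular, I need to confirm that a parent \(V\) of \(W_{k_H}\) cannot be a collider on the path, which would happen only if the edge at \(V\) had an arrowhead into \(V\) on the \(W_{k_H}\) side. This is excluded because the edge is \(V\to W_{k_H}\), with a tail at \(V\). I also need the fact that the global Markov property holds for the model defined by an ADMG obtained via latent projection, which I take as standard \citep{richardson02ancestral}.
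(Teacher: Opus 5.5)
Your proposal is correct and follows the route the paper relies on: the paper gives no separate proof, but it notes that Assumptions~\ref{assump:m_sep1}--\ref{assump:m_sep2} are vacuous when \(\vec H=\{H\}\) and that compatibility is checked by \(m\)-separation. Your argument that every proxy-to-proxy path is blocked at a conditioned non-collider parent \(V\to W_{k_H}\), with \(V\in\{H\}\cup\vec O_{-W}\), supplies exactly that check (one wording fix: for Assumption~\ref{assump:m_sep1} it is the second argument of the independence, not the conditioning set, that is empty).
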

 
\begin{figure}[H]
    \centering
    \begin{minipage}{0.3\textwidth}
        \centering
        \begin{tikzpicture}[>=latex, node distance=1.5cm]
            \node (H) [draw, circle, fill=red!30] {\(H\)};
            \node (V) [draw, circle, below of=H] {\(V\)};
            \node (W) [draw, circle, right of=H] {\(W\)};
            \node (Z) [draw, circle, left of=H] {\(Z\)};
            \draw[->] (H) -- (Z);
            \draw[->] (H) -- (W);
            \draw[->] (H) -- (V);
        \end{tikzpicture}

        \vspace{0.3em}        
    \end{minipage}
    \hfill
    \caption{latent variable \(\vec H = \{H\}\) with proxies \(\{W_{1_H},W_{2_H},W_{3_H}\} = \{W,Z,V\}\).}
    \label{fig:proxies}
\end{figure}

\begin{figure}[H]
    \centering
    \begin{minipage}{0.31\textwidth}
            \centering
            \begin{tikzpicture}[>=latex, node distance=1.15cm, anchor=center]
            \node (A) [draw, circle] {A};
            \node (Y) [draw, circle, right of=A] {Y};
            \node (U) [draw, circle, fill=red!30, above of=A, xshift=0.6cm] {U};
            \node (W) [draw, circle, right of=U] {W};
            \node (Z) [draw, circle, left of=U] {Z};

            \draw[->] (A) -- (Y);
            \draw[->] (U) -- (Z);
            \draw[->] (U) -- (W);
            \draw[->] (U) -- (Y);
            \draw[->] (U) -- (A);
            \draw[->] (A) -- (Z);
        \end{tikzpicture}
\subcaption{latent confounder with proxies: \\
                    \(\vec{H} = \{U\}\); \\
                    \(\vec{O} = \{A,Y,W,Z\}; \)\\
                    \(\{W_{1_U},W_{2_U},W_{3_U}\} = \{W,Z,Y\}\).
}
            \vspace{0.2em}

        \end{minipage}\hfill
    \begin{minipage}{0.31\textwidth}
        \centering
               \begin{tikzpicture}[>=latex, node distance=1.15cm, anchor=center]
            \node (A) [draw, fill=red!30, circle] {A};
            \node (Y) [draw, circle, right of=A] {Y};
            \node (C) [draw, circle, above of=A, xshift=0.7cm] {C};
            \node (W) [draw, circle, below of=A, xshift=-0.7cm] {W};
            \node (Z) [draw, circle, below of=A, xshift=0.7cm] {Z};
            \draw[->] (A) -- (Y);
            \draw[->] (A) -- (W);
            \draw[->] (A) -- (Z);
            \draw[->] (C) -- (Z);
            \draw[->] (C) to[out=170, in=120] (W);
            \draw[->] (C) -- (Y);
            \draw[->] (C) -- (A);
        \end{tikzpicture}
\subcaption{latent treatment with proxies: \\
                    \(\vec{H} = \{A\}\); \\
                    \(\vec{O} = \{C,Y,W,Z\}; \)\\
                    \(\{W_{1_A},W_{2_A},W_{3_A}\} = \{W,Z,Y\}\).
}
        \vspace{0.2em}
    \end{minipage}\hfill
    \begin{minipage}{0.31\textwidth}
        \centering
         \begin{tikzpicture}[>=latex, node distance=1.15cm, anchor=center]
        \node (A) [draw, circle] {A};
        \node (M) [draw, circle, fill=red!30, right of=A] {M};
        \node (Y) [draw, circle, right of=M] {Y};
        \node (W) [draw, circle, below of=Y] {W};
        \node (Z) [draw, circle, below of=A] {Z};

        \draw[->] (A) -- (M);
        \draw[->] (M) -- (Y);
        \draw[->] (M) -- (W);
        \draw[->] (M) -- (Z);
        \draw[->] (A) -- (Z);

        \draw[<->, red, thick, bend left=30] (A) to (Y);
        
    \end{tikzpicture}
\subcaption{latent mediator with proxies: \\
                    \(\vec{H} = \{M\}\); \\
                    \(\vec{O} = \{A,Y,W,Z\}; \)\\
                    \(\{W_{1_M},W_{2_M},W_{3_M}\} = \{W,Z,Y\}\).
}
        \vspace{0.2em}

    \end{minipage}

    \vspace{0.8em}

    \makebox[\textwidth][c]{%
      \begin{minipage}{0.31\textwidth}
        \centering
        \begin{tikzpicture}[>=latex, node distance=1.15cm, anchor=center]
            \node (A) [draw, circle] {A};
            \node (Y) [draw, circle, fill=red!30, right of=A] {Y};
            \node (C) [draw, circle, above of=A, xshift=0.7cm] {C};
            \node (W) [draw, circle, below of=Y, xshift=-0.7cm] {W};
            \node (Z) [draw, circle, below of=Y, xshift=0.6cm] {Z};
            \node (V) [draw, circle, below of=Y, xshift=1.7cm] {V};

            \draw[->] (A) -- (Y);
            \draw[->] (Y) -- (W);
            \draw[->] (Y) -- (Z);
            \draw[->] (Y) -- (V);
            \draw[->] (A) -- (Z);
            \draw[->] (A) -- (W);
            \draw[->] (A) -- (V);
            \draw[->] (C) -- (Y);
            \draw[->] (C) -- (A);
            \draw[->] (C) to[out=10, in=90] (Z);
            \draw[->] (C) -- (W);
            \draw[->] (C) to[out=10, in=90] (V);
        \end{tikzpicture}
        \subcaption{latent outcome with proxies: 
                    \(\vec{H} = \{Y\}\); \\
                    \(\vec{O} = \{A,C,W,Z,V\}; \)\\
                    \(\{W_{1_Y},W_{2_Y},W_{3_Y}\} = \{W,Z,V\}\).
}
        \vspace{0.2em}
    \end{minipage}
        \hspace{0.04\textwidth}
        \begin{minipage}{0.31\textwidth}
            \centering
            \begin{tikzpicture}[>=latex, node distance=1.15cm, anchor=center,
                    transform shape]
    \node (A) [draw, circle] {A};
    \node (M) [draw, circle, right of=A] {M};
    \node (Y) [draw, circle, right of=M] {Y};
    \node (U) [draw, circle, fill=red!30, above of=A, xshift=1.1cm] {U};
    \node (W) [draw, circle, right of=U] {W};
    \node (Z) [draw, circle, left of=U] {Z};

    \draw[->] (A) -- (M);
    \draw[->] (M) -- (Y);
    \draw[->] (U) -- (Z);
    \draw[->] (U) -- (W);
    \draw[->] (U) -- (Y);
    \draw[->] (U) -- (A);
    \draw[->] (U) -- (M);
    \draw[->] (A) -- (Z);
    \draw[<->, red, thick, bend right=30] (A) to (Y);
\end{tikzpicture}
\subcaption{latent confounder with proxies; observed mediator: \\
                    \(\vec{H} = \{U\}\); \\
                    \(\vec{O} = \{A,M,Y,W,Z\}; \)\\
                    \(\{W_{1_U},W_{2_U},W_{3_U}\} = \{W,Z,Y\}\).
}
            \vspace{0.2em}

        \end{minipage}
    }

    \caption{Examples of ADMGs with proxy variables for a latent outcome (a), treatment (b), mediator (c), confounder (d)–(e), and confounder and mediator (e). Node partitions in each panel define the role of variables in our identification assumptions, with the partition in panel (a) drawn from \citet{kuroki14measurement},
\citet{allman2009identifiability}, and \citet{deaner2023controllinglatentconfoundingtriple}; and those in panels (b), (d), and (e) drawn from 
\citet{zhou2024causalinferencehiddentreatment}, \citet{Guo2026_outcome}, and 
\citet{Guo2026_confounder}, respectively.}
    \label{fig:3}
\end{figure}

In Proposition~\ref{prop:graph_char_2}, we also give sufficient graphical criterion reflecting Assumptions~\ref{assump:mutual_independence}--\ref{assump:m_sep2} in settings with multiple latent variables. For example, the tree-structured graph in
Figure~\ref{fig:graphical_structure1}(a) falls within the class characterized by
Proposition~\ref{prop:graph_char_2}. Note that the characterization in Proposition~\ref{prop:graph_char_2} is intended to describe a class of potentially useful graphical
models rather than exhaust all compatible ADMGs. In particular, many ADMGs
outside this class may also reflect
Assumptions~\ref{assump:mutual_independence}--\ref{assump:m_sep2}. For an example, see Figure~\ref{fig:graphical_structure1}(b).

\begin{proposition}\label{prop:graph_char_2}
Let \(\mathcal{G}(\vec{H},\vec{O})\) be an ADMG such that for each \(H \in \vec H\), \(W_{1_H}\), \(W_{2_H}\), \(W_{3_H} \in \vec O\) are children of \(H\) with no children or siblings. Suppose further that each \(H \in \vec H\) has no siblings and no children other than \(W_{1_H}\), \(W_{2_H}\), \(W_{3_H}\). Then the statistical model given by \(\mathcal{G}(\vec{H},\vec{O})\) is compatible with Assumptions~\ref{assump:mutual_independence}--\ref{assump:m_sep2}.
\end{proposition}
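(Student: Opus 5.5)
We verify the three conditional independence statements in Assumptions~\ref{assump:mutual_independence}--\ref{assump:m_sep2} by checking the corresponding $m$-separations in $\mathcal{G}(\vec H,\vec O)$; the global Markov property for ADMGs then gives each independence in every distribution in the model. Write $\vec W=\bigcup_{H'\in\vec H}\{W_{1_{H'}},W_{2_{H'}},W_{3_{H'}}\}$ and $\vec O_{-W}=\vec O\setminus\vec W$. The structural fact driving everything is local. Each proxy $W_{k_H}$ has exactly one edge, namely $H\to W_{k_H}$: it has no children and no siblings, and any other parent would make it a child of another vertex, which we may need to rule out. Each $H$ has outgoing edges only to its own proxies, has no bidirected edges, and may have parents only in $\vec O_{-W}\cup\vec H$.

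\textbf{Step 1 (Assumption~\ref{assump:mutual_independence}).} Take any path between $W_{i_H}$ and $W_{j_H}$, $i\neq j$. Because $W_{i_H}$ has no children and no siblings, the path must leave $W_{i_H}$ through one of its parents. If that parent is $H$, then $H$ is a non-collider on the path, since the edge $H\to W_{i_H}$ has its tail at $H$, and $H$ lies in the conditioning set, so the path is blocked. The case I expect to be the main obstacle is a proxy with additional observed parents, say $C\to W_{i_H}$ with $C\in\vec O_{-W}$. The hypothesis that proxies are children of $H$ with no children or siblings does not forbid this, and it occurs in Figure~\ref{fig:3}(b), where $C\to W$. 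Such a $C$ is a non-collider and is conditioned on, so the path is again blocked. If the extra parent were another latent $H'$, or another proxy (impossible, since proxies have no children), then $H'$ would have a child other than its own proxies, contradicting the hypothesis. So every path out of $W_{i_H}$ starts with a non-collider in the conditioning set $\{H\}\cup\vec O_{-W}$ and is blocked. The same argument gives mutual independence of the three proxies jointly, because $m$-separation of each proxy from the set formed by the remaining proxies, $(\vec H\setminus\{H\})$, and the other proxies yields the joint factorization.

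\textbf{Step 2 (Assumption~\ref{assump:m_sep1}).} We reuse the argument of Step 1. Every path from $\{W_{1_H},W_{2_H},W_{3_H}\}$ begins with a parent of the proxy, and every such parent lies in $\{H\}\cup\vec O_{-W}$, the conditioning set, where it is a non-collider. Hence the triple is $m$-separated from $\vec H\setminus\{H\}$ and from the other proxies.

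\textbf{Step 3 (Assumption~\ref{assump:m_sep2}).} Consider a path from $H$ to some $H'\neq H$, conditioning on $\vec O_{-W}$ only. The edges at $H$ are either to its proxies or from its parents.

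If the path starts with $H\to W_{k_H}$, then $W_{k_H}$ is an endpoint of that edge with an arrowhead. Since $W_{k_H}$ has no further children or siblings, the path must continue through another parent $C\to W_{k_H}$. This makes $W_{k_H}$ a collider, and $W_{k_H}$ is neither in $\vec O_{-W}$ nor has descendants there, so the path is blocked.

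If instead the path starts with $H\leftarrow P$, where $P$ is a parent of $H$, then $P\in\vec O_{-W}$, because latents cannot be parents of other latents and proxies have no children. The tail at $P$ makes $P$ a non-collider, and since $P$ is conditioned on, the path is blocked.

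A symmetric argument at the $H'$ end is not needed. We conclude that $H$ and $\vec H\setminus\{H\}$ are $m$-separated given $\vec O_{-W}$.

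Combining the three steps with the global Markov property establishes compatibility.
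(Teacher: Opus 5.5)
Your proof is correct and is the $m$-separation check the paper has in mind; the paper gives no written proof of this proposition and only remarks that compatibility is verified by checking $m$-separation. The one blemish is your opening claim that each proxy has exactly one edge and that $H$ may have parents in $\vec H$: both are inaccurate, but Steps 1 and 3 correctly allow extra parents $C\in\vec O_{-W}$ of the proxies and rule out latent parents of $H$, so nothing later depends on the slip.
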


\begin{figure}[H]
\centering

\begin{subfigure}[H]{0.49\textwidth}
\centering
\resizebox{\linewidth}{!}{%
\begin{tikzpicture}[
    >=stealth,
    every node/.style={font=\small},
    latent/.style={circle, draw, minimum size=9mm, inner sep=1pt},
    observed/.style={circle, draw, minimum size=10mm, inner sep=1pt},
    remainder/.style={
        rectangle,
        draw,
        rounded corners,
        minimum width=6cm,
        minimum height=10mm,
        align=center,
        inner sep=4pt
    }
]

\node[remainder] (Orest) at (0,3.5) {
    $\displaystyle
    \vec O \setminus
    \bigcup_{H\in\vec H}
    \{W_{1_H},W_{2_H},W_{3_H}\}
    $
};

\node[latent,circle, fill=red!30] (H1) at (-4,1.5) {$H_1$};
\node[latent,circle, fill=red!30] (Hi) at (0,1.5) {$H_i$};
\node[latent,circle, fill=red!30] (Hm) at (4,1.5) {$H_m$};

\node at (-2,1.5) {$\cdots$};
\node at (2,1.5) {$\cdots$};

\node[observed] (W11) at (-5,0) {$W_{1_{H_1}}$};
\node[observed] (W21) at (-4,0) {$W_{2_{H_1}}$};
\node[observed] (W31) at (-3,0) {$W_{3_{H_1}}$};

\node[observed] (W1i) at (-1,0) {$W_{1_{H_i}}$};
\node[observed] (W2i) at (0,0) {$W_{2_{H_i}}$};
\node[observed] (W3i) at (1,0) {$W_{3_{H_i}}$};

\node[observed] (W1m) at (3,0) {$W_{1_{H_m}}$};
\node[observed] (W2m) at (4,0) {$W_{2_{H_m}}$};
\node[observed] (W3m) at (5,0) {$W_{3_{H_m}}$};

\foreach \W in {W11,W21,W31}
    \draw[->] (H1) -- (\W);

\draw[->] (Hi) -- (W1i);
\draw[->] (Hi) to[bend right=10] (W2i);
\draw[->] (Hi) -- (W3i);

\foreach \W in {W1m,W2m,W3m}
    \draw[->] (Hm) -- (\W);

\draw[->] (Orest) -- (H1);
\draw[->] (Orest) to[bend right=10] (Hi);
\draw[->] (Orest) -- (Hm);

\foreach \W in {W11,W21,W31,W1i,W3i,W1m,W2m,W3m}
    \draw[->] (Orest) -- (\W);

\draw[->] (Orest) to[bend right=-10] (W2i);

\end{tikzpicture}%
}
\caption{ADMG consistent with the characterization in
Proposition~\ref{prop:graph_char_2} and thereby
Assumptions~\ref{assump:mutual_independence}--\ref{assump:m_sep2}.}
\label{fig:graphical_structure1_a}
\end{subfigure}
\hfill
\begin{subfigure}[H]{0.49\textwidth}
\centering
\resizebox{\linewidth}{!}{%
\begin{tikzpicture}[
    >=stealth,
    every node/.style={font=\small},
    latent/.style={circle, draw, minimum size=9mm, inner sep=1pt},
    observed/.style={circle, draw, minimum size=10mm, inner sep=1pt},
    remainder/.style={
        rectangle,
        draw,
        rounded corners,
        minimum width=6cm,
        minimum height=10mm,
        align=center,
        inner sep=4pt
    }
]

\node[remainder] (Orest) at (0,3.5) {
    $\displaystyle
    \vec O \setminus
    \bigcup_{H\in\vec H}
    \{W_{1_H},W_{2_H},W_{3_H}\}
    $
};

\node[latent,circle, fill=red!30] (H1) at (-4,1.5) {$H_1$};
\node[latent,circle, fill=red!30] (Hi) at (0,1.5) {$H_i$};
\node[latent,circle, fill=red!30] (Hm) at (4,1.5) {$H_m$};

\node at (-2,1.5) {$\cdots$};
\node at (2,1.5) {$\cdots$};

\node[observed] (W11) at (-5,0) {$W_{1_{H_1}}$};
\node[observed] (W21) at (-4,0) {$W_{2_{H_1}}$};
\node[observed] (W31) at (-3,0) {$W_{3_{H_1}}$};

\node[observed] (W1i) at (-1,0) {$W_{1_{H_i}}$};
\node[observed] (W2i) at (0,0) {$W_{2_{H_i}}$};
\node[observed] (W3i) at (1,0) {$W_{3_{H_i}}$};

\node[observed] (W1m) at (3,0) {$W_{1_{H_m}}$};
\node[observed] (W2m) at (4,0) {$W_{2_{H_m}}$};
\node[observed] (W3m) at (5,0) {$W_{3_{H_m}}$};

\foreach \W in {W11,W31}
    \draw[->] (H1) -- (\W);
\draw[<->,red,thick] (H1) -- (W21);

\draw[->] (Hi) -- (W1i);
\draw[<->,red,thick] (Hi) to[bend right=10] (W2i);
\draw[->] (Hi) -- (W3i);

\foreach \W in {W1m,W3m}
    \draw[->] (Hm) -- (\W);
\draw[<->,red,thick] (Hm) -- (W2m);

\draw[<->,red,thick] (Orest) -- (H1);
\draw[->] (Orest) to[bend right=10] (Hi);
\draw[->] (Orest) -- (Hm);

\foreach \W in {W11,W21,W31,W1i,W3i,W1m,W2m,W3m}
    \draw[->] (Orest) -- (\W);

\draw[->] (Orest) to[bend right=-10] (W2i);

\end{tikzpicture}%
}
\caption{ADMG consistent with
Assumptions~\ref{assump:mutual_independence}--\ref{assump:m_sep2}.}
\label{fig:graphical_structure1_b}
\end{subfigure}

\caption{Examples of ADMGs consistent with
Assumptions~\ref{assump:mutual_independence}--\ref{assump:m_sep2}.}
\label{fig:graphical_structure1}

\end{figure}

\subsection{Compatability with Assumptions~\ref{assump:mutual_independence}--\ref{assump:m_sep2} and Assumption~\ref{assump:m_sep3}}

ADMGs appearing in the literature that identify the full-data law \(p(\vec H, \vec O)\) up to arbitrary permutation of the support values of \(\vec H\) often fall under the graphical criterion  given in Proposition~\ref{prop:graph_char_3}. See Figures~\ref{fig:3}(a), \ref{fig:3}(c), and \ref{fig:3}(e) for examples.
The addition that \(H\) is the only parent of \(W_{1_H}\) for each \(H \in \vec H\) gives that  \(W_{1_H} \ci \vec O \setminus \bigcup_{H' \in \vec H} \{W_{1_{H'}}, W_{2_{H'}}, W_{3_{H'}}\} \text{ conditional on }  H\) for each \(H \in \vec H\), satisfying Assumption~\ref{assump:m_sep3}.

\begin{proposition}\label{prop:graph_char_3}
If ADMG \(\mathcal{G}(\vec{H},\vec{O})\) has a single latent variable \(\vec H = \{H\}\), and \(W_{1_H}\), \(W_{2_H}\), \(W_{3_H} \in \vec O\) are children of \(H\) with no children or siblings, and \(H\) is the only parent of \(W_{1_H}\), we have that the statistical model given by \(\mathcal{G}(\vec{H},\vec{O})\) is compatible with Assumption~\ref{assump:mutual_independence}-\ref{assump:m_sep2} and Assumption~\ref{assump:m_sep3}.
\end{proposition}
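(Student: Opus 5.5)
The proof splits into two parts. First, Assumptions~\ref{assump:mutual_independence}--\ref{assump:m_sep2} hold exactly as in Proposition~\ref{prop:graph_char_1}. Second, I will check Assumption~\ref{assump:m_sep3} by an m-separation argument.

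For the first part, the hypotheses here contain those of Proposition~\ref{prop:graph_char_1}: a single latent $H$, and proxies $W_{1_H},W_{2_H},W_{3_H}$ that are children of $H$ with no children or siblings. So compatibility with Assumptions~\ref{assump:mutual_independence}--\ref{assump:m_sep2} follows by invoking that proposition. Briefly, Assumptions~\ref{assump:m_sep1}--\ref{assump:m_sep2} are vacuous when $\vec H=\{H\}$. For Assumption~\ref{assump:mutual_independence}, any path between two proxies $W_{j_H}$ and $W_{k_H}$ has the following structure. Since each proxy has no children and no siblings, the path must leave $W_{j_H}$ through an edge $W_{j_H}\leftarrow V$ with $V$ a parent, and likewise enter $W_{k_H}$ through a parent. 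Any parent of a proxy is either $H$ or lies in $\vec O_{-W}$: it cannot be another proxy, because proxies have no children. The first such parent is a non-collider on the path, and it lies in the conditioning set $\{H\}\cup\vec O_{-W}$, so the path is blocked.

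For the second part, I need $W_{1_H}\ci \vec O_{-W}$ given $H$ alone, and I will verify it by m-separation. Take any path from $W_{1_H}$ to some $V\in\vec O_{-W}$. Because $W_{1_H}$ has no children and no siblings, the first edge must be $W_{1_H}\leftarrow$ (parent). The only parent is $H$, so the path begins $W_{1_H}\leftarrow H\cdots$. On this path $H$ is a non-collider, since the edge into $W_{1_H}$ points away from $H$. As $H$ is in the conditioning set, the path is blocked. Every path is therefore blocked, and the global Markov property for ADMGs gives the independence. Paths passing through the other proxies need no separate treatment, because they are already cut at $H$.

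The argument is essentially routine. The only delicate point is making sure that ``no siblings'' rules out bidirected edges at $W_{1_H}$. It does: otherwise a path $W_{1_H}\leftrightarrow\cdots$ would have $W_{1_H}$'s neighbour possibly acting as a collider, and it would not be blocked by conditioning on $H$. I would make this explicit, and would also note that ``compatible'' is interpreted as every distribution Markov to $\mathcal G$ satisfying the stated independences.
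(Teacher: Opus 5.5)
Your proposal is correct and takes the same route as the paper. The paper has no separate proof: it defers Assumptions~\ref{assump:mutual_independence}--\ref{assump:m_sep2} to Proposition~\ref{prop:graph_char_1}, and remarks that because $H$ is the only parent of $W_{1_H}$, one gets $W_{1_H} \ci \vec O_{-W} \mid H$, i.e.\ Assumption~\ref{assump:m_sep3}. Your explicit m-separation arguments (the only edge at $W_{1_H}$ is $H \to W_{1_H}$, so $H$ is a conditioned non-collider on every path, plus the analogous argument for pairs of proxies) fill in that one-line justification.
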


In Proposition~\ref{prop:graph_char_4}, we also give sufficient graphical criterion reflecting Assumptions~\ref{assump:mutual_independence}--\ref{assump:m_sep2} and
Assumption~\ref{assump:m_sep3} in settings with multiple latent variables. For example, the tree-structured graph in
Figure~\ref{fig:graphical_structure2}(a) falls within the class characterized by
Proposition~\ref{prop:graph_char_4}. Again, this characterization  is intended to describe a class of potentially useful graphical
models rather than exhaust all compatible ADMGs. For an example outside this class that reflects
Assumptions~\ref{assump:mutual_independence}--\ref{assump:m_sep2} and
Assumption~\ref{assump:m_sep3}, see Figure~\ref{fig:graphical_structure2}(b).

\begin{proposition}\label{prop:graph_char_4}
Let \(\mathcal{G}(\vec{H},\vec{O})\) be an ADMG such that for each \(H \in \vec H\), \(W_{1_H}\), \(W_{2_H}\), \(W_{3_H} \in \vec O\) are children of \(H\) with no children or siblings, and \(H\) is the only parent of \(W_{1_H}\). Suppose further that each \(H \in \vec H\) has no siblings and no children other than \(W_{1_H}\), \(W_{2_H}\), \(W_{3_H}\). Then the statistical model given by \(\mathcal{G}(\vec{H},\vec{O})\) is compatible with Assumptions~\ref{assump:mutual_independence}--\ref{assump:m_sep2} and Assumption~\ref{assump:m_sep3}.
\end{proposition}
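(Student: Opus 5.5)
The plan is to reduce Proposition~\ref{prop:graph_char_4} to Proposition~\ref{prop:graph_char_2} plus one additional m-separation check. The graphical hypotheses of Proposition~\ref{prop:graph_char_4} contain all the hypotheses of Proposition~\ref{prop:graph_char_2}. So compatibility with Assumptions~\ref{assump:mutual_independence}--\ref{assump:m_sep2} follows immediately from that proposition.

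It then remains to show that, for each \(H \in \vec H\), \(W_{1_H}\) is m-separated from \(\vec O_{-W}\) given \(H\). By the global Markov property for ADMGs, this m-separation gives Assumption~\ref{assump:m_sep3} for every distribution in the model.

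For the m-separation step, fix \(H\) and \(V \in \vec O_{-W}\), and consider any path \(\pi\) from \(W_{1_H}\) to \(V\).

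\emph{First edge of \(\pi\).} By hypothesis, \(W_{1_H}\) has no children, no siblings, and its only parent is \(H\). Hence every path leaves \(W_{1_H}\) through the edge \(W_{1_H} \leftarrow H\). The second vertex on \(\pi\) is therefore \(H\), which lies in the conditioning set.

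\emph{Status of \(H\) on \(\pi\).} Since \(\pi\) must continue past \(H\) to reach \(V \neq H\), the vertex \(H\) is a non-endpoint of \(\pi\). The edge into \(W_{1_H}\) has its tail at \(H\), so \(H\) is not a collider on \(\pi\) regardless of the next edge. A conditioned non-collider blocks the path, so \(\pi\) is blocked.

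Since every path is blocked, \(W_{1_H}\) is m-separated from \(\vec O_{-W}\) given \(H\). This holds for each \(H\), which completes the argument.

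The main subtlety, rather than an obstacle, is whether Proposition~\ref{prop:graph_char_2} can legitimately be invoked. If instead it must be reproven, I would verify Assumptions~\ref{assump:mutual_independence}--\ref{assump:m_sep2} by the same path-tracing.

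\emph{Assumptions~\ref{assump:mutual_independence} and~\ref{assump:m_sep1}.} Each proxy \(W_{k_H}\) is childless and sibling-free, so any path out of \(W_{k_H}\) begins with one of its parents. That parent is either \(H\), which is conditioned and a non-collider, or a vertex of \(\vec O_{-W}\), which is also conditioned. In the latter case the parent is a non-collider, because the edge into \(W_{k_H}\) has its tail there.

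\emph{Assumption~\ref{assump:m_sep2}.} Each \(H\) has no siblings, and its children are only its childless proxies. Consequently, any path from \(H\) to \(H'\) either passes through a parent of \(H\) in \(\vec O_{-W}\) as a non-collider, which blocks it, or exits through a proxy child. A proxy child has no further edges except its parent edges, so on such a path it is a collider \(H \to W \leftarrow \cdot\). That collider is unconditioned and has no descendants, so the path is blocked.

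The one care point is that every conditioned vertex entered must appear as a non-collider. This holds because each edge touching a proxy has its tail away from the proxy, and edges touching \(H\) have their tail at \(H\) except those from \(\vec O_{-W}\).
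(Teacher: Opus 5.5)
Your proposal is correct and follows the paper's reasoning. The hypotheses of Proposition~\ref{prop:graph_char_4} contain those of Proposition~\ref{prop:graph_char_2}, and the only extra ingredient is that \(H\) is the sole neighbor of \(W_{1_H}\), so every path from \(W_{1_H}\) to \(\vec O_{-W}\) passes through \(H\) as a conditioned non-collider, which gives Assumption~\ref{assump:m_sep3}; the paper states this only informally in the paragraph preceding Proposition~\ref{prop:graph_char_3}, and your path-tracing makes it explicit.
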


\begin{figure}[H]
\centering

\begin{subfigure}[H]{0.49\textwidth}
\centering
\resizebox{\linewidth}{!}{%
\begin{tikzpicture}[
    >=stealth,
    every node/.style={font=\small},
    latent/.style={circle, draw, minimum size=9mm, inner sep=1pt},
    observed/.style={circle, draw, minimum size=10mm, inner sep=1pt},
    remainder/.style={
        rectangle,
        draw,
        rounded corners,
        minimum width=6cm,
        minimum height=10mm,
        align=center,
        inner sep=4pt
    }
]

\node[remainder] (Orest) at (0,3.5) {
    $\displaystyle
    \vec O \setminus
    \bigcup_{H\in\vec H}
    \{W_{1_H},W_{2_H},W_{3_H}\}
    $
};

\node[latent,circle, fill=red!30] (H1) at (-4,1.5) {$H_1$};
\node[latent,circle, fill=red!30] (Hi) at (0,1.5) {$H_i$};
\node[latent,circle, fill=red!30] (Hm) at (4,1.5) {$H_m$};

\node at (-2,1.5) {$\cdots$};
\node at (2,1.5) {$\cdots$};

\node[observed] (W11) at (-5,0) {$W_{1_{H_1}}$};
\node[observed] (W21) at (-4,0) {$W_{2_{H_1}}$};
\node[observed] (W31) at (-3,0) {$W_{3_{H_1}}$};

\node[observed] (W1i) at (-1,0) {$W_{1_{H_i}}$};
\node[observed] (W2i) at (0,0) {$W_{2_{H_i}}$};
\node[observed] (W3i) at (1,0) {$W_{3_{H_i}}$};

\node[observed] (W1m) at (3,0) {$W_{1_{H_m}}$};
\node[observed] (W2m) at (4,0) {$W_{2_{H_m}}$};
\node[observed] (W3m) at (5,0) {$W_{3_{H_m}}$};

\foreach \W in {W11,W21,W31}
    \draw[->] (H1) -- (\W);

\foreach \W in {W1i,W3i}
    \draw[->] (Hi) -- (\W);
\draw[->] (Hi) to[bend right=10] (W2i);

\foreach \W in {W1m,W2m,W3m}
    \draw[->] (Hm) -- (\W);

\draw[->] (Orest) -- (H1);
\draw[->] (Orest) to[bend right=10] (Hi);
\draw[->] (Orest) -- (Hm);

\foreach \W in {W21,W31,W3i,W2m,W3m}
    \draw[->] (Orest) -- (\W);

\draw[->] (Orest) to[bend right=-10] (W2i);

\end{tikzpicture}%
}

\caption{ADMG consistent with the characterization in
Proposition~\ref{prop:graph_char_4} and thereby
Assumptions~\ref{assump:mutual_independence}--\ref{assump:m_sep2}
and Assumption~\ref{assump:m_sep3}.}
\label{fig:graphical_structure2_a}
\end{subfigure}
\hfill
\begin{subfigure}[H]{0.49\textwidth}
\centering
\resizebox{\linewidth}{!}{%
\begin{tikzpicture}[
    >=stealth,
    every node/.style={font=\small},
    latent/.style={circle, draw, minimum size=9mm, inner sep=1pt},
    observed/.style={circle, draw, minimum size=10mm, inner sep=1pt},
    remainder/.style={
        rectangle,
        draw,
        rounded corners,
        minimum width=6cm,
        minimum height=10mm,
        align=center,
        inner sep=4pt
    }
]

\node[remainder] (Orest) at (0,3.5) {
    $\displaystyle
    \vec O \setminus
    \bigcup_{H\in\vec H}
    \{W_{1_H},W_{2_H},W_{3_H}\}
    $
};

\node[latent,circle, fill=red!30] (H1) at (-4,1.5) {$H_1$};
\node[latent,circle, fill=red!30] (Hi) at (0,1.5) {$H_i$};
\node[latent,circle, fill=red!30] (Hm) at (4,1.5) {$H_m$};

\node at (-2,1.5) {$\cdots$};
\node at (2,1.5) {$\cdots$};

\node[observed] (W11) at (-5,0) {$W_{1_{H_1}}$};
\node[observed] (W21) at (-4,0) {$W_{2_{H_1}}$};
\node[observed] (W31) at (-3,0) {$W_{3_{H_1}}$};

\node[observed] (W1i) at (-1,0) {$W_{1_{H_i}}$};
\node[observed] (W2i) at (0,0) {$W_{2_{H_i}}$};
\node[observed] (W3i) at (1,0) {$W_{3_{H_i}}$};

\node[observed] (W1m) at (3,0) {$W_{1_{H_m}}$};
\node[observed] (W2m) at (4,0) {$W_{2_{H_m}}$};
\node[observed] (W3m) at (5,0) {$W_{3_{H_m}}$};

\foreach \W in {W11,W31}
    \draw[->] (H1) -- (\W);
\draw[<->,red,thick] (H1) -- (W21);

\foreach \W in {W1i,W3i}
    \draw[->] (Hi) -- (\W);
\draw[<->,red,thick] (Hi) to[bend right=10] (W2i);

\foreach \W in {W1m,W3m}
    \draw[->] (Hm) -- (\W);
\draw[<->,red,thick] (Hm) -- (W2m);

\draw[<->,red,thick] (Orest) -- (H1);
\draw[->] (Orest) to[bend right=10] (Hi);
\draw[->] (Orest) -- (Hm);

\foreach \W in {W21,W31,W3i,W2m,W3m}
    \draw[->] (Orest) -- (\W);

\draw[->] (Orest) to[bend right=-10] (W2i);

\end{tikzpicture}%
}

\caption{ADMG consistent with
Assumptions~\ref{assump:mutual_independence}--\ref{assump:m_sep2}
and Assumption~\ref{assump:m_sep3}.}
\label{fig:graphical_structure2_b}
\end{subfigure}

\caption{Examples of ADMGs consistent with
Assumptions~\ref{assump:mutual_independence}--\ref{assump:m_sep2}
and Assumption~\ref{assump:m_sep3}.}
\label{fig:graphical_structure2}

\end{figure}

\section{Estimating Equations}\label{sec:est_eqns}

\subsection{Estimating Equations under Full-Data Law Identification}\label{subsec:4a}
We first develop mean-zero observed-data estimating equations that give rise to consistent \(M\)-estimators for a large class of target functionals of the full-data law identified under Assumptions~\ref{assump:bounded_density}-\ref{assump:fix_labels} via Theorem~\ref{thm:id_thm_full}. These results establish a foundation that can be refined to yield observed-data influence functions, as shown in Section~\ref{sec:ifs}, inspired by previous work on binary latent treatments by \citet{zhou2024causalinferencehiddentreatment}.

For the purpose of developing mean-zero observed-data estimating equations, we consider a target \(\psi\) that can be expressed as a functional of
\[
p\left(
\vec{H},
\bigl(
\vec{O}\setminus
\bigcup_{H \in \vec H} W_{1_H}
\bigr)
\right),
\] for which we know a full-data function \(\phi(\vec{H}, \big(
\vec{O}\setminus
\bigcup_{H \in \vec H} W_{1_H}\big); \eta)\) satisfying
\begin{equation}\label{eq:full_data_eq1}
\mathbb{E}\big[\phi(\vec{H}, \big(
\vec{O}\setminus
\bigcup_{H \in \vec H} W_{1_H}\big), \eta) - \psi\big]=0.
\end{equation}
The choice of excluding \(\bigcup_{H \in \vec H} W_{1_H}\) is without loss of generality. All results hold if \(W_{1_H}\) is replaced by \(W_{2_H}\) or \(W_{3_H}\), provided that the corresponding roles of \(W_{1_H}\) and its replacement are interchanged consistently in our identifying assumptions.

Such targets include, for example, the mean of latent variable(s) of interest in Figure~\ref{fig:proxies}, Figures~\ref{fig:3}(a)-(e), and Figures~\ref{fig:graphical_structure1}(a)-(b), as well as the average counterfactual outcome \(\mathbb{E}[Y(a)]\) in Figures~\ref{fig:3}(a)-(e). 

Our results can be easily adapted for settings in which the full-data law is identified only up to the labeling of the latent variable(s). We delineate these adaptations in Section~\ref{subsec:4b}.

To develop an observed-data estimating equations for \(\psi\), we construct an observed-data function \(\varphi(\vec{O};\eta)\) satisfying
\begin{equation}\label{eq:bridge_1}
\mathbb{E}[\varphi(\vec{O};\eta)] =\mathbb{E}\Big[\phi(\vec{H}, \big(
\vec{O}\setminus
\bigcup_{H \in \vec H} W_{1_H}\big), \eta)\Big].
\end{equation}
Then, \(\varphi(\vec{O};\eta) - \psi\) is a mean-zero estimating equation for $\psi$. Hence, the estimator \(\hat \psi = \mathbb{P}_n\big[\varphi(O; \hat \eta)\big]\)
is a consistent \(M\)-estimator for \(\psi\) under standard regularity conditions.

Our identifying assumptions, together with the assumption that \(\vec H\) has finite support, are sufficient to guarantee the existence of \(\varphi(O; \eta)\) satisfying Equation~\ref{eq:bridge_1} (see Proposition~\ref{prop:guaranteed_solution1} and Theorem~\ref{thm:consistent_est}).
We leave the generalization of Proposition~\ref{prop:guaranteed_solution1} when \(\vec{H}\) is continuous to future work.

\begin{proposition}\label{prop:guaranteed_solution1}
Suppose Assumptions~\ref{assump:bounded_density}-\ref{assump:fix_labels} hold. If \(\vec H\) has finite support, then for each of its values \(\vec h_i\), there exist weight function
\(\omega_{\vec h_i}(\bigcup_{H \in \vec H} W_{1_H},\vec O_{-W};\eta)\) satisfying
\begin{equation}\label{eq:weight}
\mathbb{E}\!\left[
\omega_{\vec h_i}(\bigcup_{H \in \vec H} W_{1_H},\vec O_{-W};\eta)
\,\middle|\,
\vec H=\vec h_j,\vec O_{-W}
\right]
=
\mathbb{I}\{i=j\}.
\end{equation}
\end{proposition}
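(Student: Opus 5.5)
Throughout, fix a value \(\vec o_{-W}\) and work conditionally on \(\vec O_{-W}=\vec o_{-W}\); measurability of the resulting weights in \(\vec o_{-W}\) will be handled at the end. Write \(\vec W_1=\bigcup_{H\in\vec H}W_{1_H}\), and let \(\vec h_1,\dots,\vec h_K\) be the finitely many support points of \(\vec H\) given \(\vec o_{-W}\). The plan is to reduce Equation~\eqref{eq:weight} to the solvability of a finite linear system. The obstacle is that the conditional laws \(p(\vec W_1\mid \vec H=\vec h_j,\vec o_{-W})\) must be linearly independent as elements of a suitable function space. Once this holds, a dual (biorthogonal) family of bounded functions exists by elementary linear algebra.

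\textbf{Step 1 (factorization).} By Assumption~\ref{assump:m_sep1}, for each \(H\), the block \(W_{1_H}\) is independent of \(\vec H\setminus\{H\}\) and of the other proxies given \(H,\vec O_{-W}\). Applying this one latent variable at a time yields
\[p(\vec W_1\mid \vec H=\vec h,\vec o_{-W})=\prod_{H\in\vec H}p(W_{1_H}\mid H=h_H,\vec o_{-W}).\]

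\textbf{Step 2 (linear independence per coordinate).} For a single \(H\) with finite support \(\{h^1,\dots,h^{K_H}\}\), suppose \(\sum_k c_k\,p(w\mid h^k,\vec o_{-W})=0\) for almost every \(w\). Set \(g(h^k)=c_k/p(h^k\mid \vec o_{-W})\), which is well defined because support points have positive mass. Then
\[\mathbb{E}\{g(H)\mid W_{1_H}=w,\vec o_{-W}\}=\frac{\sum_k c_k\,p(w\mid h^k,\vec o_{-W})}{p(w\mid\vec o_{-W})}=0.\]
Completeness in Assumption~\ref{assump:completeness} then gives \(g\equiv0\), hence \(c\equiv0\). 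Therefore the functions \(\{p(\cdot\mid h^k,\vec o_{-W})\}_k\) are linearly independent.

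\textbf{Step 3 (independence of the product family).} Tensor products of linearly independent families are linearly independent. By Step 1, the \(K=\prod_H K_H\) joint conditional densities are exactly these tensor products, indexed by \(\vec h\) in the product support. If the support of \(\vec H\) is a strict subset of the product, a subfamily remains independent. Assumption~\ref{assump:m_sep2} is not needed here. At most, it ensures the support has product structure.

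\textbf{Step 4 (constructing the weights).} I would look for bounded weights in a finite-dimensional family. Since the \(K\) densities \(f_j(\cdot)=p(\cdot\mid \vec h_j,\vec o_{-W})\) are linearly independent in \(L^1\), there exist bounded test functions \(b_1,\dots,b_K\) such that the matrix \(G_{jk}=\int b_k f_j\) is nonsingular. Indeed, the map \(b\mapsto(\int bf_j)_j\) from \(L^\infty\) to \(\mathbb{R}^K\) is surjective; if it were not, some nonzero \(c\) would satisfy \(\sum_j c_j f_j=0\) against every \(b\in L^\infty\), i.e.\ \(\sum_j c_jf_j=0\) a.e. For finite-support proxies, indicator functions suffice as the \(b_k\). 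Define
\[\omega_{\vec h_i}=\sum_k (G^{-1})_{ki}\,b_k.\]
Then \(\mathbb{E}[\omega_{\vec h_i}\mid \vec h_j,\vec o_{-W}]=(GG^{-1})_{ji}=\mathbb{I}\{i=j\}\). Every ingredient is a functional of \(p(\vec H,\vec O)\), which is identified by Theorem~\ref{thm:id_thm_full}. The weights are therefore functions of \(\eta\).

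\textbf{Main obstacle.} The delicate point is making the \(b_k\) depend measurably on \(\vec o_{-W}\) when \(\vec O_{-W}\) is continuous. I would first try a fixed countable family of bounded functions, such as indicators of rational boxes, and select, for each \(\vec o_{-W}\), the first \(K\)-tuple in a fixed enumeration for which \(\det G\neq0\). This selection rule is measurable. Boundedness of the resulting weights is not required by Equation~\eqref{eq:weight}, so no uniform control over \(\vec o_{-W}\) is needed.
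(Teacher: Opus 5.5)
Your proof is correct, but it is organized differently from the paper's.

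\textbf{The paper's route.} The paper never forms a joint system over $\vec H$. For each $H\in\vec H$ it takes the Radon--Nikodym derivatives $b_{H,i}=dp(\cdot\mid h_i,\vec o_{-W})/d\nu_{H,\vec o_{-W}}$, where $\nu_{H,\vec o_{-W}}=\sum_k p(\cdot\mid h_k,\vec o_{-W})$. It then inverts the Gram matrix $\mathbf M_H=\int b_H b_H^\top\,d\nu_{H,\vec o_{-W}}$ to get per-latent weights $e_i^\top\mathbf M_H^{-1}b_H(W_{1_H})$. Finally it sets $\omega_{\vec h_i}=\prod_{H\in\vec H}\omega_{h_i(H)}(W_{1_H})$, using Assumption~\ref{assump:m_sep1} only to factor the conditional expectation of this product.

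\textbf{Your route.} You use Assumption~\ref{assump:m_sep1} to factor the joint conditional density instead. You then transfer linear independence to the joint family via the tensor-product argument, and solve a single $K\times K$ biorthogonality system with test functions obtained from a finite-dimensional duality argument. Your Step~2 also supplies the derivation of linear independence from completeness, which the paper only asserts.

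\textbf{What each buys.} Your Step~4 needs only linear independence of the joint family $\{p(\vec W_1\mid \vec h_j,\vec o_{-W})\}_j$. It would therefore go through under a joint completeness condition without any product structure. The paper's construction is explicit and canonical. It is automatically measurable in $\vec o_{-W}$, since $b_{H,i}(w,\vec o_{-W})=p(w\mid h_i,\vec o_{-W})/\sum_k p(w\mid h_k,\vec o_{-W})$. It requires only $K_H$-dimensional inversions per latent variable, and it has the product form that Appendix~\ref{app:weight2} and the later $\lambda$ and $\gamma$ weights reuse.

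\textbf{Your measurability obstacle.} This disappears if you apply the paper's device jointly. Take $b_k=f_k/\sum_l f_l$, so that $G_{jk}=\int b_k b_j\,d\nu$ with $d\nu=\sum_l f_l\,d\mu$. Then $G$ is a Gram matrix in $L^2(\nu)$, nonsingular by your Step~3, with bounded and jointly measurable entries, and no countable selection is needed.
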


The proof of Proposition~\ref{eq:weight} is given in Appendix~\ref{app:weight1}, deriving weight \(\omega_{\vec h_i}(\bigcup_{H \in \vec H}W_{1_H},\vec O_{-W};\eta)\) that satisfies Equation~\ref{eq:weight}. Appendix~\ref{app:weight2} gives an alternative procedure for constructing such a weight
using simpler sets of nuisance functions, which may be preferable in practice. 

Theorem~\ref{thm:consistent_est} follows directly from Assumption~\ref{assump:mutual_independence}-\ref{assump:m_sep1} and iterated expectations.

\begin{thm}\label{thm:consistent_est}
Given
\(\phi(\vec H, \vec O \setminus \bigcup_{H \in \vec H} W_{1_H};\eta)\)
satisfying Equation~\ref{eq:full_data_eq1}, define
\begin{equation}\label{eq:obs_est_eq}
\varphi(\vec O;\eta)
=
\sum_{\vec h_i \in \operatorname{Supp}(\vec H \mid \vec O_{-W})}
\omega_{\vec h_i}
\!\left(
\bigcup_{H \in \vec H} W_{1_H},\vec O_{-W};\eta
\right)
\phi\!\left(
\vec H=\vec h_i,\,
\vec O \setminus \bigcup_{H \in \vec H} W_{1_H};
\eta
\right),
\end{equation}
where weight
\(\omega_{\vec h_i}(\bigcup_{H \in \vec H} W_{1_H},\vec O_{-W};\eta)\)
obeys Equation~\ref{eq:weight}. Then
\(\varphi(\vec O;\eta)\) satisfies Equation~\ref{eq:bridge_1}. Consequently,
\(\varphi(\vec O;\eta)-\psi\) is a mean-zero estimating equation for
\(\psi\).
\end{thm}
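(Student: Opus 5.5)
We verify Equation~\ref{eq:bridge_1} directly by iterated expectations: condition on $(\vec H, \vec O_{-W})$ together with the remaining proxies, use the independence assumptions to factor out the weight, and then apply Equation~\ref{eq:weight} to collapse the sum over $\vec h_i$ to the term with $\vec h_i=\vec H$.

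Write $\vec W_1=\bigcup_{H\in\vec H} W_{1_H}$ and $\vec R=\vec O\setminus(\vec W_1\cup\vec O_{-W})=\bigcup_{H}\{W_{2_H},W_{3_H}\}$. Then $\vec O\setminus\vec W_1=(\vec R,\vec O_{-W})$. Because $\vec H$ has finite support, the sum in Equation~\ref{eq:obs_est_eq} is finite, so expectation and summation can be exchanged:
\begin{equation*}
\mathbb{E}[\varphi(\vec O;\eta)]=\sum_{\vec h_i}\mathbb{E}\Big[\mathbb{E}\big\{\omega_{\vec h_i}(\vec W_1,\vec O_{-W};\eta)\mid \vec H,\vec R,\vec O_{-W}\big\}\,\phi(\vec h_i,\vec R,\vec O_{-W};\eta)\Big].
\end{equation*}
Here I condition on $\vec H$ as well, which is legitimate inside the full-data expectation. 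Terms with $\vec h_i$ outside $\operatorname{Supp}(\vec H\mid\vec O_{-W})$ are absent by construction, so nothing degenerate arises. I assume the integrability needed for iterated expectations: bounded weights, or square-integrable $\phi$ together with the bounded density of Assumption~\ref{assump:bounded_density}.

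The key step, and the main obstacle, is showing
\begin{equation*}
\vec W_1\ci\vec R\mid \vec H,\vec O_{-W},
\end{equation*}
so that the inner conditional expectation reduces to $\mathbb{E}[\omega_{\vec h_i}\mid\vec H,\vec O_{-W}]$. I would first try to prove it by induction over an enumeration $H_1,\dots,H_m$ of $\vec H$, using the graphoid axioms (weak union, contraction, decomposition).

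For a fixed $H$, Assumption~\ref{assump:m_sep1} gives
\[
(W_{1_H},W_{2_H},W_{3_H})\ci(\text{other proxies},\vec H\setminus\{H\})\mid H,\vec O_{-W}.
\]
By weak union, $(W_{1_H},W_{2_H},W_{3_H})$ is then independent of the other proxies given $\vec H,\vec O_{-W}$. Chaining these statements across $H_1,\dots,H_m$ by contraction shows that the proxy blocks for distinct $H$ are mutually independent given $(\vec H,\vec O_{-W})$. Each block depends on $\vec H$ only through its own $H$.

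Within a block, Assumption~\ref{assump:mutual_independence} gives $W_{1_H}\ci(W_{2_H},W_{3_H})\mid H,\vec O_{-W}$. Applying weak union once more upgrades the conditioning to $(\vec H,\vec O_{-W})$. Combining the block independence with this within-block independence yields $\vec W_1\ci\vec R\mid\vec H,\vec O_{-W}$.

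With this independence established, Equation~\ref{eq:weight} gives
\[
\mathbb{E}[\omega_{\vec h_i}\mid\vec H=\vec h_j,\vec O_{-W}]=\mathbb{I}\{i=j\}.
\]
The sum over $\vec h_i$ therefore collapses to $\mathbb{E}[\phi(\vec H,\vec R,\vec O_{-W};\eta)]$, which is the right side of Equation~\ref{eq:bridge_1}. Combining this with Equation~\ref{eq:full_data_eq1} gives $\mathbb{E}[\varphi(\vec O;\eta)-\psi]=0$.

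The existence of the weights is taken from Proposition~\ref{prop:guaranteed_solution1}. The equality of $\eta$ across the full-data and observed-data representations follows from the identification in Theorem~\ref{thm:id_thm_full}.
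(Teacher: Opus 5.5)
Your proposal is correct and is essentially the paper's argument in Appendix~\ref{app:weight1}. You condition on $\vec H$ and $\vec O\setminus\bigcup_{H\in\vec H}W_{1_H}$, use $\bigcup_{H}W_{1_H}\ci\bigcup_{H}\{W_{2_H},W_{3_H}\}\mid\vec H,\vec O_{-W}$ to reduce the weight's conditional mean to Equation~\ref{eq:weight}, collapse the sum to $\phi(\vec H,\vec O\setminus\bigcup_{H}W_{1_H};\eta)$, and finish by iterated expectations. You also derive that independence from the graphoid axioms, where the paper only cites Assumptions~\ref{assump:mutual_independence}--\ref{assump:m_sep1}; the derivation is sound, except that lifting the within-block independence from conditioning on $H$ to conditioning on all of $\vec H$ needs contraction with the $\vec H\setminus\{H\}$ part of Assumption~\ref{assump:m_sep1}, not weak union alone.
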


Because the full-data law is identified,
\(\varphi(\vec O;\eta)\) in Equation~\ref{eq:obs_est_eq} depends only on observed variables and the observed-data law.

\begin{example}\label{ex:mean}
Consider the target parameter \(\mu = \mathbb{E}[H]\) in the model depicted in Figure~\ref{fig:proxies}, with \(\vec H = \{H\}\) and \(\vec{O} = \{W_{1_H},W_{2_H},W_{3_H}\} = \{W,Z,Y\}\) satisfying Assumptions~\ref{assump:bounded_density}-\ref{assump:fix_labels}. Further suppose $H$ has finite support.

We know that
\(
\phi(\vec H, \vec O) = H 
\) has property \(\mathbb{E}[\phi(\vec H, \vec O) - \mu]=0.\) If \(H\) is binary, define for each of its values \(h_i\),
\[
\omega_{h_i}(W; \eta)
=
\frac{W - \mathbb{E}[W \mid H = h_j]}{\mathbb{E}[W \mid H = h_i] - \mathbb{E}[W \mid H = h_j]}, \quad h_i \neq h_j,
\] by taking \(b_H(W)= (
 b_{H,1}(W),
 b_{H,2}(W))^\top=(1,W)^\top\) in Appendix~\ref{app:weight2}. If \(H\) has more categories, define the weight
\(\omega_{h_i}(W;\eta)\) by expanding the number of basis functions in \(b_H(W)\) from Appendix~\ref{app:weight2}.
 
Then \[\varphi(W;\eta)= \sum_{h_i}
\omega_{h_i}(W; \eta) \cdot h_i\] has the property that \(\mathbb{E}[\varphi(W; 
\eta)] = H\). Hence, \(\mathbb{E}[\varphi(W; 
\eta) - \mu] = 0,\) which gives that \[\hat \mu =
\mathbb{P}_n[\varphi(W; \hat \eta)] =  \mathbb{P}_n \Big[
\sum_{h_i} \omega_{h_i}(W;\hat\eta) h_i
\Big]
\]
is a consistent \(M\)-estimator for \(\mu\) under standard regularity conditions.
\end{example} 

Example~\ref{ex:mean} can be adapted for the vector-valued target parameter
\[
\mu
=
\mathbb{E}[\vec H]
=
\bigl(\mathbb{E}[H_1],\ldots,\mathbb{E}[H_m]\bigr)^\top
\] in the model depicted in Figure~\ref{fig:graphical_structure1}(a)
or Figure~\ref{fig:graphical_structure1}(b). Letting
\(\vec h_i\) denote the \(i^{\text{th}}\) support
value of \(\vec H\), and \(h_i(H)\) denote the value of \(H\) in
\(\vec h_i\), we can construct a mean-zero estimating equation using the observed data by weighting each support
value \(\vec h_i\) using
\[
\omega_{\vec h_i}\!\left(
\bigcup_{H\in\vec H} W_{1_H};\eta
\right)
=
\prod_{H\in\vec H}
\omega_{h_i(H)}\!\left(
W_{1_H};\eta
\right),
\] where \(\omega_{h_i(H)}\!\left(
W_{1_H};\eta
\right)\) is constructed per Appendix~\ref{app:weight2}.

Generally, the basis functions in Appendix~\ref{app:weight2} may be selected from a rich class of functions, including polynomials, splines, and indicator functions, to ensure that each
\(
\omega_{\vec h_i}\!\left(
\bigcup_{H\in\vec H} W_{1_H}, \vec O_{-W};\eta
\right)
\)
satisfies Equation~\ref{eq:weight}. We next characterize a setting in which polynomial basis functions suffice.

\begin{proposition}\label{prop:polynomial_basis1}
Suppose the full-data law \(p(\vec H,\vec O)\) is identified. Further suppose \(\vec H\) has finite support and for each \(H \in \vec H,\)
\[
W_{1_H}\mid H=h_i,\vec O_{-W}
\sim
\mu_{H,i}(\vec O_{-W})+\epsilon_H(\vec O_{-W}),
\]
where the error term \(\epsilon_H(\vec O_{-W})\) does not depend on the value of \(H\), and for each \(\vec o_{-W}\),
\(
\mu_{H,i}(\vec o_{-W})
\neq
\mu_{H,j}(\vec o_{-W})
\text{ if } i\neq j.
\)
Then for each value \(\vec h_i\), weight
\(\omega_{\vec h_i}(\bigcup_{H \in \vec H} W_{1_H},\vec O_{-W};\eta)\)
satisfying Equation~\ref{eq:weight} may be constructed by taking for each \(H \in \vec H\) and value \(\vec o_{-W}\),
\[ b_H(W_{1_H})=
\bigl(
b_{H,1}(W_{1_H}),\ldots,
b_{H,K_H(\vec o_{-W})}(W_{1_H})
\bigr)^\top
=
\bigl(
1,W_{1_H},\ldots,
W_{1_H}^{K_H(\vec o_{-W})-1}
\bigr)^\top
\] in Appendix~\ref{app:weight2}.
\end{proposition}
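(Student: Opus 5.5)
The plan is to reduce the claim to the invertibility of a finite moment matrix, which is the condition the construction of Appendix~\ref{app:weight2} needs. That construction works one latent variable at a time and then takes products. Fix $H \in \vec H$ and a value $\vec o_{-W}$. Let $h_1,\ldots,h_{K}$, with $K = K_H(\vec o_{-W})$, be the support values of $H$ given $\vec O_{-W}=\vec o_{-W}$. Form the $K\times K$ matrix
\[
B_H(\vec o_{-W})_{jk} = \mathbb{E}\bigl[b_{H,k}(W_{1_H}) \mid H=h_j,\vec o_{-W}\bigr].
\]
If this matrix is invertible, set $\omega_{h_i}(W_{1_H},\vec o_{-W};\eta) = e_i^\top B_H(\vec o_{-W})^{-1} b_H(W_{1_H})$. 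Linearity of conditional expectation then gives
\[
\mathbb{E}[\omega_{h_i}\mid H=h_j,\vec o_{-W}] = e_i^\top B_H^{-1}B_H^\top e_j.
\]
Transposes should be arranged so that this equals $\mathbb{I}\{i=j\}$. For the multivariate weight I would take $\omega_{\vec h_i} = \prod_{H}\omega_{h_i(H)}(W_{1_H},\vec o_{-W};\eta)$. Conditional on $\vec H=\vec h_j$ and $\vec o_{-W}$, the proxies $W_{1_H}$ for different $H$ are independent, and each depends only on its own $H$; this follows from Assumption~\ref{assump:m_sep1}. So the conditional expectation of the product factorizes into $\prod_H \mathbb{I}\{h_i(H)=h_j(H)\}=\mathbb{I}\{i=j\}$, which is Equation~\ref{eq:weight}. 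Because the full-data law is identified, every entry of $B_H$ is a functional of the observed-data law, so the resulting weight is a legitimate nuisance.

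The key step is showing that $B_H$ is invertible for the polynomial basis $(1,W_{1_H},\ldots,W_{1_H}^{K-1})$ under the location-shift model. Write $\mu_j=\mu_{H,j}(\vec o_{-W})$ and $m_r=\mathbb{E}[\epsilon_H(\vec o_{-W})^r]$, with $m_0=1$. The binomial theorem, together with the fact that the law of $\epsilon_H$ does not depend on $h$, gives
\[
\mathbb{E}[W_{1_H}^{k}\mid H=h_j,\vec o_{-W}] = \sum_{l=0}^{k}\binom{k}{l}\mu_j^{l}m_{k-l}.
\]
Hence $B_H = V T$. Here $V_{jl}=\mu_j^{l}$ for $l=0,\ldots,K-1$ is a Vandermonde matrix, and $T_{lk}=\binom{k}{l}m_{k-l}$ for $l\le k$ is upper triangular with unit diagonal. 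Then $\det B_H=\det V=\prod_{j<j'}(\mu_{j'}-\mu_j)$, which is nonzero precisely because the means are assumed distinct. This is the heart of the argument, and it is where the hypothesis that $\epsilon_H$ does not depend on the value of $H$ is used: without it, $T$ would vary with $j$ and the factorization would fail.

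I expect the main obstacle to be technical rather than conceptual. First, the moments $m_1,\ldots,m_{K-1}$ of $\epsilon_H(\vec o_{-W})$ must be finite for the entries of $B_H$ to exist. I would either note that this is implicit in using polynomial weights, or add it as a standing moment condition. Second, if $W_{1_H}$ is vector-valued, I would read $W_{1_H}^k$ and the distinctness $\mu_{H,i}\neq\mu_{H,j}$ through a fixed coordinate, or a generic linear projection $a^\top W_{1_H}$, chosen so that the projected means are distinct. Such a direction exists because only finitely many pairwise differences must avoid being orthogonal to $a$, and the scalar argument above then applies verbatim.
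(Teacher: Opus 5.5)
Your proposal is correct and follows essentially the same route as the paper's proof in Appendix~\ref{app:polynomial_basis1}. The binomial expansion under the location-shift model writes the moment matrix as the Vandermonde matrix in the distinct means times a unit upper-triangular matrix; your explicit $T$ is exactly the ``invertible linear transformation'' the paper invokes. The multivariate weight is then the product of per-$H$ weights, justified by Assumption~\ref{assump:m_sep1}.

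The one cosmetic point is that your $B_H$ is the transpose of the paper's $\mathbf{M}_H$. The weight should therefore be $e_i^\top (B_H^\top)^{-1} b_H(W_{1_H})$, which is harmless since $B_H$ and $B_H^\top$ are invertible together. Your remarks on finite moments of $\epsilon_H$ and on vector-valued proxies are sensible additions that the paper leaves implicit.
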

A proof of Proposition~\ref{prop:polynomial_basis1} is provided in Appendix~\ref{app:polynomial_basis1}.

\subsection{Estimating Equations under Full-Data Law Identification up to Permutation of the Support Values of \(\vec H\)}\label{subsec:4b}

The results in Section~\ref{subsec:4a} can be adapted when the full-data law \(p(\vec H, \vec O)\) is identified only up to arbitrary permutation of the support values of \(\vec H\) under Assumptions~\ref{assump:bounded_density}-\ref{assump:distinctness} and Assumption~\ref{assump:m_sep3} via Theorem~\ref{thm:id_thm_up_to_label}. We consider a target functional \(\psi\) of
\[
p\left(
\vec H,
\vec O\setminus
\bigcup_{H\in\vec H} W_{1_H}
\right)
\]
that is invariant to relabeling of the support values of \(\vec H\), and for which we know a full-data function
\(
\phi\left(
\vec H,
\vec O\setminus
\bigcup_{H\in\vec H} W_{1_H};
\eta
\right)
\) that depends on \(\vec H\) only via nuisance functions and
satisfies Equation~\ref{eq:full_data_eq1}.

Examples of such target functionals include the average counterfactual outcome \(\mathbb{E}[Y(a)]\) in Figure~\ref{fig:3}(a), Figure~\ref{fig:3}(c), and Figure~\ref{fig:3}(e), given by
\(
\iint y\,p(y\mid u,a)p(u)\,dy\,du
\),
\(
\iiint y\,p(y\mid a',m)p(m\mid a)p(a')\,dy\,dm\,da'
\)
, and
\(
\iiiint y\,p(y\mid a',m,u)p(m\mid a,u)p(a',u)\,dy\,dm\,da'\,du
\), respectively 
\citep{pearl95causal,fulcher19robust,Guo2026_confounder}.

\begin{proposition}\label{prop:guaranteed_solution_alt}
Suppose Assumptions~\ref{assump:bounded_density}-\ref{assump:distinctness} and Assumption~\ref{assump:m_sep3} hold. If \(\vec H\) has finite support, then for each of its values \(\vec h_i\), there exist weight function
\(\omega_{\vec h_i}(\bigcup_{H \in \vec H} W_{1_H},\vec O_{-W};\eta)\) satisfying
\begin{equation}\label{eq:weight_alt}
\mathbb{E}\!\left[
\omega_{\vec h_i}(\bigcup_{H \in \vec H} W_{1_H},\vec O_{-W};\eta)
\,\middle|\,
\vec H=\vec h_j,\vec O_{-W}
\right]
=
\mathbb{I}\{i=j\}.
\end{equation}
\end{proposition}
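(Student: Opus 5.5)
The strategy mirrors the one for Proposition~\ref{prop:guaranteed_solution1}. The only difference is that the label-fixing Assumption~\ref{assump:fix_labels} is replaced by Assumption~\ref{assump:m_sep3}. The weights will be indexed by support values $\vec h_i$ under \emph{some} fixed labeling, chosen consistently across values $\vec o_{-W}$.

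\textbf{Step 1 (identify the relevant conditional law).} By Theorem~\ref{thm:id_thm_up_to_label}, the full-data law is identified up to a permutation of the support values of $\vec H$. We fix one representative of this equivalence class. Assumption~\ref{assump:m_sep3} supplies the common reference law $p(W_{1_H}\mid H)$, which aligns the labels across different $\vec o_{-W}$. This yields, for each $\vec o_{-W}$, identified conditionals $p(\bigcup_H W_{1_H}\mid \vec H=\vec h_j,\vec o_{-W})$ under a single labeling.

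Using Assumptions~\ref{assump:m_sep1} and~\ref{assump:m_sep2}, this conditional factorizes as $\prod_{H} p(W_{1_H}\mid H=h_j(H),\vec o_{-W})$. Indeed, Assumption~\ref{assump:m_sep1} makes $W_{1_H}$ independent of the other latents and of the other proxies given $(H,\vec O_{-W})$.

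\textbf{Step 2 (reduce to one latent variable).} We build $\omega_{\vec h_i}=\prod_{H}\omega_{h_i(H)}(W_{1_H},\vec O_{-W})$. Given the factorization from Step 1, the conditional expectation of this product given $\vec H=\vec h_j,\vec O_{-W}$ equals $\prod_H \mathbb{E}[\omega_{h_i(H)}(W_{1_H},\vec O_{-W})\mid H=h_j(H),\vec O_{-W}]$. It therefore suffices to find, for each $H$ and each $\vec o_{-W}$, functions satisfying
\[
\mathbb{E}[\omega_{h_k}(W_{1_H},\vec o_{-W})\mid H=h_l,\vec o_{-W}]=\mathbb{I}\{k=l\}.
\]
The product then equals $\mathbb{I}\{i=j\}$.

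\textbf{Step 3 (existence for a single $H$).} Let $h_1,\dots,h_K$ be the support of $H$ given $\vec o_{-W}$, which is finite. Completeness (Assumption~\ref{assump:completeness}) implies that the conditional laws $p(W_{1_H}\mid H=h_l,\vec o_{-W})$, $l=1,\dots,K$, are linearly independent. To see this, suppose $\sum_l c_l\,p(w\mid h_l)=0$. Then $g(h_l)=c_l/p(h_l\mid\vec o_{-W})$ satisfies $\mathbb{E}[g(H)\mid W_{1_H},\vec o_{-W}]=0$, so $g=0$ and hence $c=0$.

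Linear independence of the $K$ densities means we can choose bounded basis functions $b_1,\dots,b_K$ of $W_{1_H}$ for which the $K\times K$ matrix $B_{kl}=\mathbb{E}[b_k(W_{1_H})\mid H=h_l,\vec o_{-W}]$ is invertible. Indicators of suitable sets work here, because the densities are bounded by Assumption~\ref{assump:bounded_density}. Setting $\omega_{h_k}(W_{1_H},\vec o_{-W})=\sum_m (B^{-1})_{km}b_m(W_{1_H})$ then gives $\mathbb{E}[\omega_{h_k}\mid h_l]=(B^{-1}B)_{kl}=\mathbb{I}\{k=l\}$. Measurability in $\vec o_{-W}$ is inherited from the identified law.

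\textbf{Main obstacle.} The main obstacle is Step 3's justification that a finite basis with invertible $B$ exists. If $B(c)$ is singular for every choice of basis, then the rows of $B$ span a proper subspace, so some $c\neq 0$ is orthogonal to all vectors $(\mathbb{E}[b\mid h_l])_l$ over all bounded $b$. This forces $\sum_l c_l\,p(\cdot\mid h_l)=0$, contradicting linear independence. The remaining bookkeeping concerns the labeling: the weights are defined relative to the chosen representative. Any other representative permutes the indices $i$ consistently across all $\vec o_{-W}$, and this consistency is exactly what Assumption~\ref{assump:m_sep3} guarantees.
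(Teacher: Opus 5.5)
Your proof is correct and follows the paper's overall architecture (Appendix~\ref{app:weight1}, which the paper reuses for this proposition). Completeness gives linear independence of the laws $P_l = p(W_{1_H}\mid H=h_l,\vec o_{-W})$; inverting a $K\times K$ moment matrix gives per-$H$ dual weights; and the product over $H\in\vec H$ works by Assumption~\ref{assump:m_sep1} alone, since Assumption~\ref{assump:m_sep2} is not needed for that factorization. You differ in how you produce the invertible matrix. The paper uses the canonical basis $b_{H,l}=dP_l/d\nu$ with $\nu=\sum_l P_l$, so the moment matrix is the Gram matrix $\int b_H b_H^\top\,d\nu$. That matrix is invertible immediately because the $b_{H,l}$ are linearly independent in $L^2(\nu)$, and the construction is explicit, with no choices made at any $\vec o_{-W}$. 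You instead show by duality that the vectors $(P_l(A))_{l}$, with $A$ ranging over measurable sets, span $\mathbb{R}^K$, and then pick $K$ indicators. This is non-constructive, and choosing the sets separately at each $\vec o_{-W}$ raises a measurable-selection question that your remark about measurability being ``inherited'' glosses over. In exchange, your route shows that the Appendix~\ref{app:weight2} recipe is always feasible with simple indicator bases. You also supply the short argument that completeness implies linear independence, which the paper only asserts. Your Step~1 (fixing a representative and aligning labels via Assumption~\ref{assump:m_sep3}) is not needed for existence, which is a property of the true law. It matters only for the weights to be identified functionals, which is why the paper's proof never invokes Theorem~\ref{thm:id_thm_up_to_label}. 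Likewise, the bounded densities of Assumption~\ref{assump:bounded_density} play no role in your indicator argument.
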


The proof of Proposition~\ref{prop:guaranteed_solution_alt} is the same as that of Proposition~\ref{prop:guaranteed_solution1}, given in Appendix~\ref{app:weight1}. 

Theorem~\ref{thm:consistent_est_alt} follows directly from Assumption~\ref{assump:mutual_independence}-\ref{assump:m_sep1} and iterated expectations.

\begin{thm}\label{thm:consistent_est_alt}
Given \(\phi(\vec H, \vec O \setminus \bigcup_{H \in \vec H}W_{1_H}; \eta)\) satisfying Equation~\ref{eq:full_data_eq1} depends on \(\vec H\) only via nuisance functions, define
\begin{equation}\label{eq:obs_est_eq_alt}
\varphi(\vec O;\eta)
=
\sum_{\vec h_i \in \operatorname{Supp}(\vec H \mid \vec O_{-W})}
\omega_{\vec h_i}(\bigcup_{H \in \vec H} W_{1_H},\vec O_{-W};\eta)\,
\phi (\vec H= \vec h_i, \vec O \setminus \bigcup_{H \in \vec H}W_{1_H}; \eta),
\end{equation}
where weight
\(\omega_{\vec h_i}(\bigcup_{H \in \vec H} W_{1_H},\vec O_{-W};\eta)\)
obeys Equation~\ref{eq:weight_alt}. Then
\(\varphi(\vec O;\eta)\) satisfies Equation~\ref{eq:bridge_1}. Consequently,
\(\varphi(\vec O;\eta)-\psi\) is a mean-zero estimating equation for
\(\psi\).
\end{thm}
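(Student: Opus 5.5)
The argument is an iterated-expectations computation that conditions first on \((\vec H,\vec O\setminus\bigcup_{H\in\vec H}W_{1_H})\) and then uses the conditional independences to pull \(\phi\) out of the conditional expectation of the weight. Write \(\vec W_1=\bigcup_{H\in\vec H}W_{1_H}\) and \(\vec R=\vec O\setminus(\vec W_1\cup\vec O_{-W})\), so \(\vec R\) collects the remaining proxies \(W_{2_H},W_{3_H}\). Then \(\vec O\setminus\vec W_1=(\vec R,\vec O_{-W})\). Because each summand \(\phi(\vec H=\vec h_i,\vec R,\vec O_{-W};\eta)\) is a function of observed variables only, \(\varphi(\vec O;\eta)\) is observed-data. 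The proof is the same as for Theorem~\ref{thm:consistent_est}. The invariance of \(\psi\) and the fact that \(\phi\) depends on \(\vec H\) only through nuisance functions are used only to ensure that \(\phi(\vec h_i,\cdot)\) and \(\omega_{\vec h_i}\) are determined by the observed law under the common labeling that Theorem~\ref{thm:id_thm_up_to_label} fixes. Under any other permutation both are relabeled together, so \(\varphi\) is unchanged.

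\textbf{Step 1 (conditional independence).} Show that \(\vec W_1\ci\vec R\mid \vec H,\vec O_{-W}\).
\begin{itemize}
\item Assumption~\ref{assump:m_sep1} gives, for each \(H\), that \(W_{1_H}\) is independent of the other latents and of the other blocks' proxies, given \(H,\vec O_{-W}\).
\item Assumption~\ref{assump:mutual_independence} makes \(W_{1_H}\) independent of \(W_{2_H},W_{3_H}\) within the same block.
\item Apply the contraction and weak union graphoid rules blockwise, iterating over \(H\in\vec H\). This yields \(p(\vec W_1\mid\vec H,\vec R,\vec O_{-W})=\prod_H p(W_{1_H}\mid H,\vec O_{-W})\), and in particular the claimed independence.
\end{itemize}
Getting this factorization right is the main technical obstacle. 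One must combine the within-block mutual independence with the across-block independence carefully, so that conditioning on the full \(\vec R\) (including other blocks' \(W_2,W_3\)) is legitimate.

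\textbf{Step 2 (iterated expectations).} Condition on \((\vec H,\vec R,\vec O_{-W})\). Since \(\phi(\vec h_i,\vec R,\vec O_{-W})\) is measurable with respect to this conditioning,
\[
\mathbb{E}[\varphi\mid \vec H=\vec h_j,\vec R,\vec O_{-W}]=\sum_i \phi(\vec h_i,\vec R,\vec O_{-W};\eta)\,\mathbb{E}[\omega_{\vec h_i}(\vec W_1,\vec O_{-W})\mid \vec H=\vec h_j,\vec R,\vec O_{-W}].
\]
By Step 1, the inner expectation equals \(\mathbb{E}[\omega_{\vec h_i}\mid\vec H=\vec h_j,\vec O_{-W}]\). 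By Equation~\ref{eq:weight_alt}, whose weights exist by Proposition~\ref{prop:guaranteed_solution_alt}, this equals \(\mathbb{I}\{i=j\}\). Hence the conditional expectation reduces to \(\phi(\vec h_j,\vec R,\vec O_{-W};\eta)=\phi(\vec H,\vec O\setminus\vec W_1;\eta)\).

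\textbf{Step 3 (conclusion).} Take the outer expectation to obtain Equation~\ref{eq:bridge_1}. Combining with Equation~\ref{eq:full_data_eq1} then shows that \(\varphi-\psi\) has mean zero. Two regularity points need attention. First, the sum is finite because \(\vec H\) has finite support, and terms with \(\vec h_i\) outside \(\operatorname{Supp}(\vec H\mid\vec O_{-W})\) never arise. Second, integrability of \(\omega\phi\) must be assumed or follow from boundedness (Assumption~\ref{assump:bounded_density}), so that Fubini and the iterated expectations apply.
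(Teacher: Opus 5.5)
Your proposal is correct and follows the paper's own argument. You condition on \((\vec H,\vec O\setminus\bigcup_{H\in\vec H}W_{1_H})\), use Assumptions~\ref{assump:mutual_independence}--\ref{assump:m_sep1} to reduce the conditional mean of \(\omega_{\vec h_i}\) to its conditional mean given \((\vec H,\vec O_{-W})\), apply Equation~\ref{eq:weight_alt}, and finish by iterated expectations; the paper simply reuses the Appendix~\ref{app:weight1} argument for Theorem~\ref{thm:consistent_est}.

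One small precision for Step 1: apply weak union to the full block statement of Assumption~\ref{assump:m_sep1}, \((W_{1_H},W_{2_H},W_{3_H})\ci(\text{other blocks' proxies},\vec H\setminus\{H\})\mid H,\vec O_{-W}\), not merely to its \(W_{1_H}\)-marginal. The marginal version, even combined with within-block independence, does not by itself license the contraction step.
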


\begin{example}\label{ex:AIPW}
Consider the target parameter \(\psi= \mathbb{E}[Y(a)] = \mathbb{E}[\mathbb{E}[Y \mid a, U]]\) in the model depicted in Figure~\ref{fig:3}(a), with \(\vec H = \{U\}\) and \(\{W_{1_H},W_{2_H},W_{3_H}\} = \{W,Z,Y\}\) satisfying Assumptions~\ref{assump:bounded_density}-\ref{assump:distinctness} and Assumption~\ref{assump:m_sep3}. Suppose \(U\) has finite support.

We know that
\(
\phi(U; \eta) = \mathbb{E}[Y \mid a, U] 
\) has property \(\mathbb{E}[\phi(U; \eta) - \psi]=0.\) If $U$ is binary, define for each of its values \(u_i\),
\[
\omega_{u_i}(Z,A; \eta)
=
\frac{Z - \mathbb{E}[Z \mid U = u_j,A]}{\mathbb{E}[Z \mid U = u_i,A] - \mathbb{E}[Z \mid U = u_j,A]}, \quad u_i \neq u_j,
\] by taking \(b_{U}(Z)=
( b_{U,1}(Z),
 b_{U,2}(Z))^\top = (1,Z)^\top
\) in Appendix~\ref{app:weight2}. If \(U\) has more than two categories, then define the weight
\(\omega_{u_i}(Z,A;\eta)\) by expanding the number of basis functions in \(b_{U}(Z)\) from Appendix~\ref{app:weight2}.

Then for \[\varphi(Z,A;\eta)= \sum_{u_i \in \operatorname{Supp}(U \mid A)}
\omega_{u_i}(Z, A; \eta) \cdot \mathbb{E}[Y \mid a, u_i],\] we have that \(\mathbb{E}[\varphi(Z,A; 
\eta) - \psi] = \mathbb{E}[\varphi(Z,A; 
\eta)] - \psi  = \mathbb{E}[\mathbb{E}[\varphi(Z,A; 
\eta) \mid U,A]] - \mathbb{E}[\mathbb{E}[Y \mid a, U]] = 0.\) Hence, \[\hat \psi =\mathbb{P}_n \Big[
 \sum_{u_i \in \operatorname{Supp}(U \mid A)}
\omega_{u_i}(Z, A; \hat\eta) \cdot \mathbb{E}[Y \mid a, u_i] \Big]
\]
is a consistent \(M\)-estimator for \(\psi\) under standard regularity conditions.
\end{example}

Again, the basis functions in Appendix~\ref{app:weight2} may be selected from a rich class of functions to ensure that each
\(
\omega_{\vec h_i}\!\left(
\bigcup_{H\in\vec H} W_{1_H}, \vec O_{-W};\eta
\right)
\)
satisfies Equation~\ref{eq:weight}. In the setting described below, selecting polynomial basis functions suffice.

\begin{proposition}\label{prop:polynomial_basis_alt}
Suppose the full-data law \(p(\vec H, \vec O)\) is identified up to arbitrary permutation of the support values of \(\vec H\).  Further suppose \(\vec H\) has finite support and for each \(H \in \vec H,\)
\[
W_{1_H}\mid H=h_i,\vec O_{-W}
\sim
\mu_{H,i}(\vec O_{-W})+\epsilon_H(\vec O_{-W}),
\]
where the error term \(\epsilon_H(\vec O_{-W})\) does not depend on the value of \(H\), and for each \(o_{-W}\),
\(
\mu_{H,i}(\vec o_{-W})
\neq
\mu_{H,j}(\vec o_{-W})
\text{ if } i\neq j.
\)
Then for each \(\vec h_i\), weight
\(\omega_{\vec h_i}(\bigcup_{H \in \vec H} W_{1_H},\vec O_{-W};\eta)\)
satisfying Equation~\ref{eq:weight} may be constructed by taking for each \(H \in \vec H\) and value \(\vec o_{-W}\),
\[
b_H(W_{1_H})=
\bigl(
b_{H,1}(W_{1_H}),\ldots,
b_{H,K_H(\vec o_{-W})}(W_{1_H})
\bigr)^\top
=
\bigl(
1,W_{1_H},\ldots,
W_{1_H}^{K_H(\vec o_{-W})-1}
\bigr)^\top
\]
in Appendix~\ref{app:weight2}. 
\end{proposition}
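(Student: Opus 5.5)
The plan is to reduce the claim to an invertibility condition on a finite moment matrix, and then verify that condition using the location-shift structure of $W_{1_H}$. I assume the construction in Appendix~\ref{app:weight2} is the natural one. For each $H\in\vec H$ and each value $\vec o_{-W}$, let $K_H(\vec o_{-W})=|\operatorname{Supp}(H\mid \vec o_{-W})|$ and form the $K_H\times K_H$ matrix
\[
B_H(\vec o_{-W})_{kj}=\mathbb{E}\bigl[b_{H,k}(W_{1_H})\mid H=h_j,\vec o_{-W}\bigr].
\]
The single-variable weight is then $\omega_{h_i}(W_{1_H},\vec o_{-W})=e_i^\top B_H(\vec o_{-W})^{-1}b_H(W_{1_H})$, and the joint weight is $\omega_{\vec h_i}=\prod_{H\in\vec H}\omega_{h_i(H)}(W_{1_H},\vec O_{-W})$, as in the vector-valued version of Example~\ref{ex:mean}.

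If $B_H(\vec o_{-W})$ is invertible, then by construction $\mathbb{E}[\omega_{h_i}(W_{1_H},\vec o_{-W})\mid H=h_j,\vec o_{-W}]=e_i^\top B_H^{-1}B_He_j=\mathbb{I}\{i=j\}$. For the product weight, Assumption~\ref{assump:m_sep1} gives that the $W_{1_H}$'s for distinct $H$ are conditionally independent given $(\vec H,\vec O_{-W})$. It also gives that the law of $W_{1_H}$ given $(\vec H,\vec O_{-W})$ depends on $\vec H$ only through $H$. Hence the conditional expectation of the product factorizes into $\prod_H \mathbb{I}\{h_i(H)=h_j(H)\}=\mathbb{I}\{i=j\}$, which is Equation~\ref{eq:weight}. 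The nuisances entering $B_H$ are conditional moments of $W_{1_H}$ given $H$ and $\vec O_{-W}$. These are functionals of the full-data law, which is identified by hypothesis (up to a common relabeling in the permutation case, which is harmless because every label is used consistently). So the weight depends only on observed variables and the observed-data law.

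The main step is showing that $B_H(\vec o_{-W})$ is invertible when $b_H(w)=(1,w,\ldots,w^{K-1})^\top$, where I write $K=K_H(\vec o_{-W})$. Let $m_r(\vec o_{-W})=\mathbb{E}[\epsilon_H(\vec o_{-W})^r]$, so $m_0=1$. Because the error law does not depend on $h$, the binomial expansion gives
\[
\mathbb{E}\bigl[W_{1_H}^k\mid H=h_j,\vec o_{-W}\bigr]=\sum_{l=0}^{k}\binom{k}{l}m_{k-l}(\vec o_{-W})\,\mu_{H,j}(\vec o_{-W})^l .
\]
This means $B_H=TV$, where $T_{kl}=\binom{k}{l}m_{k-l}$ for $l\le k$ is lower triangular with unit diagonal, and $V_{lj}=\mu_{H,j}^l$ is a Vandermonde matrix. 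Therefore $\det B_H=\det V=\prod_{j<j'}(\mu_{H,j'}-\mu_{H,j})\neq 0$, since the $\mu_{H,j}(\vec o_{-W})$ are distinct by assumption.

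The obstacle I expect is a regularity condition rather than the algebra. The matrix $T$ requires the error moments $m_1,\ldots,m_{K-1}$ to be finite, which the statement does not make explicit. I would either add this as a standing condition (it holds, for example, for sub-Gaussian errors) or argue that $b_H$ must lie in the square-integrable class implicit in Appendix~\ref{app:weight2}. The vector-valued case, where $W_{1_H}$ takes values in $\mathbb{R}^d$, also needs care. There I would apply the same argument to a fixed linear projection $a^\top W_{1_H}$ chosen so that the projected means $a^\top\mu_{H,j}$ remain distinct, which is possible for generic $a$.
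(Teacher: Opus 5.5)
Your proposal is correct and follows essentially the same route as the paper's proof (Appendix~\ref{app:polynomial_basis1}). In both, the binomial expansion writes the moment matrix as an invertible transformation of the Vandermonde matrix in the distinct means $\mu_{H,j}(\vec o_{-W})$, which yields the single-variable weights $e_i^\top \mathbf{M}_H^{-1} b_H$ and then the product weight via Assumption~\ref{assump:m_sep1}. Your explicit factorization $B_H = TV$ with $T$ unit lower-triangular, and your caveats about finite error moments and scalar $W_{1_H}$, make precise points that the paper leaves implicit.
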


The proof of Proposition~\ref{prop:polynomial_basis_alt} is the same as that of Proposition~\ref{prop:polynomial_basis1} given in Appendix~\ref{app:polynomial_basis1}.

\section{Influence Functions}\label{sec:ifs}

We refine our approach for constructing mean-zero estimating equations to yield observed-data influence functions, which give rise to estimators exhibiting desirable properties, such as multiple robustness and \(\sqrt{n}\)-consistency even when nuisance function estimators converge at slower-than-parametric rates. 

To develop the main ideas progressively, we start in Section~\ref{subsec:class1} by considering a target functional \(\psi\) of
\(
p\!\left(
\vec{H},
\vec O_{-W}
\right)
\), with known full-data influence function \(\phi(\cdot; \psi, \eta) = \phi(\vec{H}, \vec O_{-W}; \eta) -\psi\).
Such targets include, for example, the mean of the latent variable(s) of interest in Figure~\ref{fig:proxies}, Figures~\ref{fig:3}(a)--(e), and Figures~\ref{fig:graphical_structure1}(a)-(b), as well as the average counterfactual outcome \(\mathbb{E}[Y(a)]\) in Figure~\ref{fig:3}(d).

Work from \citet{egami2026debiasedinferenceaigenerateddata} considers targets similar to this in settings with a single latent variable and develops influence function-based estimators when the latent variable and its proxies have finite support. Our influence function-based estimation results additionally accommodate multiple latent variables and continuous proxy variables.  

Moreover, in Section~\ref{subsec:class2}, we expand this target class to functionals of
\[
p\!\left(
\vec H,
\vec O \setminus
\bigcup_{H \in \vec H}
\{W_{1_H},W_{2_H}\}
\right), 
\] with known full-data influence function \(\phi(\cdot; \psi, \eta) = \phi_1\left(\vec{H}, \big(
\vec{O}\setminus
\bigcup_{H \in \vec H}\{W_{1_H},W_{2_H}\}\big); \eta\right) + \phi_2\left(\vec{H},
\vec{O}_{-W}; \eta\right) -\psi,\)
where \(
\mathbb{E}\left[\phi_1\left(\cdot; \eta\right) \mid \vec{H}, \vec O_{-W}\right] = 0.\)
Note that all results continue to hold if \(\{W_{1_H},W_{2_H}\}\) is replaced throughout by any pair from \(\{W_{1_H},W_{2_H},W_{3_H}\}\). 

These broader results are inspired by \citet{zhou2024causalinferencehiddentreatment}, who derived an influence function for \(\mathbb{E}[Y(a)]\) in Figure~\ref{fig:3}(b) when the latent variable \(A\) is binary. Our results provides a general procedure that applies to other target functionals of \(
p\!\left(
\vec H,
\vec O \setminus
\bigcup_{H \in \vec H}
\{W_{1_H},W_{2_H}\}
\right).
\) See Examples~\ref{ex:latent_conf_if} and~\ref{ex:latent_conf_med_if}.

The results in Section~\ref{subsec:class1} and~\ref{subsec:class2} admit adaptations for target functionals of the full-data law invariant to arbitrary permutation of the support values of
\(\vec H\), closely mirroring the adaptation of results from Section~\ref{subsec:4a} to those in Section~\ref{subsec:4b}. We omit this discussion for the sake of brevity.  

\subsection{Class 1}\label{subsec:class1}

\begin{lemma}\label{lemma:if_est1}
Suppose the full-data law \(p(\vec{H},\vec{O})\) is identified under Assumptions~\ref{assump:bounded_density}-\ref{assump:fix_labels}. Let \(\psi\) be a functional of \(
p\!\left(
\vec{H},
\vec O_{-W}
\right).\)
Given a full-data influence function for \(\psi\), \[\phi(\cdot; \psi, \eta) = \phi(\vec{H}, \vec O_{-W}; \eta) -\psi,\] 
and an observed-data function \(\varphi(\vec{O};\eta)\) satisfying 
that for each \(l \in \{1,2,3\}\)
\begin{equation}\label{eq:bridge_2}
\mathbb{E}[\varphi(\vec{O};\eta) \mid  \bigcup_{H \in \vec H} W_{l_H}, \vec{H},
\vec O_{-W}] =\phi(\vec{H}, \vec O_{-W} ; \eta), 
\end{equation}
then 
\( \varphi(\vec O; \psi, \eta) = 
\varphi(O; \eta) - \psi\)
is an observed-data influence function for \(\psi\).
\end{lemma}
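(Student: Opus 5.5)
The plan is to verify the two defining properties of an observed-data influence function directly. The key device is the standing assumption from the preliminaries: every regular parametric submodel \(\{p_\epsilon(\vec O)\}\) is induced by marginalizing a regular full-data submodel \(\{p_\epsilon(\vec H,\vec O)\}\) with score \(s(\vec H,\vec O)\), along which \(\psi\) has the same derivative. For such a submodel the observed score is \(s(\vec O)=\mathbb{E}[s(\vec H,\vec O)\mid \vec O]\). Since \(\varphi(\vec O;\eta)\) is a function of \(\vec O\) alone, this gives
\[
\mathbb{E}[\varphi(\vec O;\eta)\,s(\vec O)]=\mathbb{E}[\varphi(\vec O;\eta)\,s(\vec H,\vec O)].
\]
The problem is therefore reduced to comparing \(\varphi\) with \(\phi\) against full-data scores.

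\emph{Mean zero.} Take Equation~\ref{eq:bridge_2} with any single \(l\), say \(l=1\). The tower property gives \(\mathbb{E}[\varphi(\vec O;\eta)\mid \vec H,\vec O_{-W}]=\phi(\vec H,\vec O_{-W};\eta)\). Hence \(\mathbb{E}[\varphi(\vec O;\eta)]=\mathbb{E}[\phi(\vec H,\vec O_{-W};\eta)]=\psi\), because \(\phi(\cdot;\psi,\eta)\) is a full-data influence function and so has mean zero.

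\emph{Pathwise derivative.} I will decompose the full-data score using the model's factorization. Assumption~\ref{assump:mutual_independence} says that, for each \(H\), the proxies \(W_{1_H},W_{2_H},W_{3_H}\) are mutually independent given \((H,\vec O_{-W})\). Assumption~\ref{assump:m_sep1} says that the proxies of \(H\) are independent of the other proxy groups and of \(\vec H\setminus\{H\}\) given \((H,\vec O_{-W})\). Chaining these conditional independences over the latent variables in a fixed order, I expect to obtain
\[
p(\vec H,\vec O)=p(\vec H,\vec O_{-W})\prod_{H\in\vec H}\prod_{l=1}^{3}p(W_{l_H}\mid H,\vec O_{-W}).
\]
Every score in the full-data model then has the form
\[
s(\vec H,\vec O)=s_0(\vec H,\vec O_{-W})+\sum_{H}\sum_{l}s_{H,l}(W_{l_H},H,\vec O_{-W}),
\]
where \(\mathbb{E}[s_{H,l}\mid H,\vec O_{-W}]=0\).

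I then treat each piece. For the first piece, conditioning on \((\vec H,\vec O_{-W})\) and using the identity from the mean-zero step gives \(\mathbb{E}[\varphi\, s_0]=\mathbb{E}[\phi\, s_0]\). For each \(s_{H,l}\), note that it is measurable with respect to \((\bigcup_{H}W_{l_H},\vec H,\vec O_{-W})\). Conditioning on exactly that set and invoking Equation~\ref{eq:bridge_2} for this \(l\) gives
\[
\mathbb{E}[\varphi\, s_{H,l}]=\mathbb{E}[\phi(\vec H,\vec O_{-W};\eta)\,s_{H,l}]=0,
\]
where the last equality holds because \(\phi\) depends only on \((\vec H,\vec O_{-W})\) and \(s_{H,l}\) has conditional mean zero. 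This is precisely why the bridge condition is required for all three indices \(l\): each family of proxy scores must be absorbed separately. The same argument shows \(\mathbb{E}[\phi\, s_{H,l}]=0\), and the constant \(\psi\) is orthogonal to every score. Combining,
\[
\mathbb{E}[(\varphi-\psi)\,s(\vec O)]=\mathbb{E}[(\phi-\psi)\,s(\vec H,\vec O)]=\left.\tfrac{d}{d\epsilon}\psi(p_\epsilon)\right|_{\epsilon=0},
\]
which is the required derivative identity.

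The main obstacle I anticipate is justifying the score decomposition. This requires checking that Assumptions~\ref{assump:mutual_independence}--\ref{assump:m_sep1} really yield the full product factorization above, and noting that Assumption~\ref{assump:m_sep2} imposes further restrictions on the \(p(\vec H,\vec O_{-W})\) factor. Those further restrictions only shrink the class of admissible \(s_0\), which the argument never uses, so the claim is unaffected. I also need that conditioning on the union \(\bigcup_H W_{l_H}\), rather than on a single \(W_{l_H}\), is harmless. This holds because conditioning on a larger sigma-field containing \(s_{H,l}\) is still valid for the tower step. The remaining interchanges of differentiation and expectation are covered by the regularity of the submodels.
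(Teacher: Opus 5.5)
Your proposal is correct and follows essentially the same argument as the paper's proof in Appendix~\ref{app:if1}. Both pass from $s(\vec O)$ to the full-data score via $s(\vec O)=\mathbb{E}[s(\vec H,\vec O)\mid \vec O]$, split that score (using Assumptions~\ref{assump:mutual_independence}--\ref{assump:m_sep1}) into a $(\vec H,\vec O_{-W})$ piece plus proxy pieces, match each piece through Equation~\ref{eq:bridge_2}, and obtain mean zero by iterated expectations. Your per-$(H,l)$ split is only a refinement of the paper's grouping by $l$; the step where you use that the proxy scores have conditional mean zero given all of $(\vec H,\vec O_{-W})$, not just $(H,\vec O_{-W})$, follows from Assumption~\ref{assump:m_sep1}.
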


We prove Lemma~\ref{lemma:if_est1} in Appendix~\ref{app:if1}. Full-data influence functions \(\phi(\cdot; \psi, \eta)  =\phi(\vec{H}, \vec O_{-W} ; \eta) - \psi\)
can often be derived via standard methods in semiparametric theory (e.g., taking Gateaux derivatives) \citep{KennedyTutorial}. To construct \(\varphi(O; \eta)\) satisfying Equation~\ref{eq:bridge_2}, we can proceed per Proposition~\ref{prop:guaranteed_solution2} and Theorem~\ref{thm:if_est1} in settings where \(\vec{H}\) has finite support, under a strengthening of Assumption~\ref{assump:distinctness}.

\begin{proposition}\label{prop:guaranteed_solution2}
Suppose Assumptions~\ref{assump:bounded_density}-\ref{assump:completeness}, Assumption~\ref{assump:fix_labels}, and a strengthening of Assumption~\ref{assump:distinctness} to a completeness condition (namely, for each \(H \in \vec{H}\), each value $\vec o_{-W}$, and any square-integrable function \(g\),
\(\mathbb{E}\{g(H) \mid W_{3_H}, \vec O_{-W}= \vec o_{-W}\} = 0 \ \text{a.s. iff } g(H) = 0 \ \text{a.s.})\) hold. If \(\vec H\) has finite support, then for each of its values \(\vec h_i\), there exist weight function
\(\lambda_{\vec h_i}\!\left( \bigcup_{H \in \vec H,k}W_{k_H},
\vec O_{-W};
\eta
\right)\), \(k \in \{1,2,3\}\), satisfying that for each \(l \in \{1,2,3\}\),
\begin{equation}\label{eq:weight_if}
\mathbb{E}\!\left[
\lambda_{\vec h_i}\!\left(
\bigcup_{H \in \vec H,k}W_{k_H},
\vec O_{-W};
\eta
\right)
\,\middle|\, \bigcup_{H \in \vec H} W_{l_H},
\vec H=\vec h_j,\vec O_{-W}
\right]
=
\mathbb{I}\{i=j\}.
\end{equation}
\end{proposition}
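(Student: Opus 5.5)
The plan is to build $\lambda_{\vec h_i}$ as a polynomial combination of three single-proxy weights, one for each proxy family, in the spirit of a doubly robust correction.

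\emph{Step 1: single-proxy weights.} For each $k \in \{1,2,3\}$ I would construct a weight $\omega^{(k)}_{\vec h_i}(\bigcup_{H\in\vec H} W_{k_H},\vec O_{-W};\eta)$ satisfying
\[
\mathbb{E}\bigl[\omega^{(k)}_{\vec h_i}(\bigcup_{H\in\vec H} W_{k_H},\vec O_{-W};\eta) \,\big|\, \vec H=\vec h_j,\vec O_{-W}\bigr]=\mathbb{I}\{i=j\}.
\]
I would reuse the argument of Proposition~\ref{prop:guaranteed_solution1} (Appendix~\ref{app:weight1}) with $W_{1_H}$ replaced by $W_{k_H}$.
\begin{itemize}
\item For $k=1,2$ the required completeness is Assumption~\ref{assump:completeness}. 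For $k=3$ it is the strengthened completeness condition in the statement.
\item Since $H$ has finite support, completeness of $W_{k_H}$ for $H$ given $\vec o_{-W}$ means the conditional laws $\{p(W_{k_H}\mid H=h,\vec o_{-W})\}_h$ are linearly independent.
\item Hence there is a finite basis $b_H(W_{k_H})$ for which the square matrix $\bigl(\mathbb{E}[b_{H,r}(W_{k_H})\mid H=h_s,\vec o_{-W}]\bigr)_{r,s}$ is invertible. Inverting it gives per-$H$ weights $\omega^{(k)}_{h}(W_{k_H},\vec o_{-W})$ with conditional mean $\mathbb{I}\{h=h'\}$ given $H=h'$.
\item I then take products over $H\in\vec H$. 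Assumption~\ref{assump:m_sep1} says the proxies of $H$ are independent of the other latents and their proxies given $H,\vec O_{-W}$. Using this, the conditional mean of the product given $\vec H=\vec h_j,\vec O_{-W}$ factorizes into $\prod_H \mathbb{I}\{h_i(H)=h_j(H)\}=\mathbb{I}\{i=j\}$.
\end{itemize}

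\emph{Step 2: conditional independence of the proxy families.} Write $\omega_k=\omega^{(k)}_{\vec h_i}$. I would show that the three collections $\bigcup_H W_{1_H}$, $\bigcup_H W_{2_H}$ and $\bigcup_H W_{3_H}$ are mutually independent given $(\vec H,\vec O_{-W})$. This follows by combining three facts:
\begin{itemize}
\item within each $H$, $W_{1_H},W_{2_H},W_{3_H}$ are mutually independent given $H,\vec O_{-W}$ (Assumption~\ref{assump:mutual_independence});
\item across latents, Assumption~\ref{assump:m_sep1} lets me factor $p(\bigcup_H\{W_{1_H},W_{2_H},W_{3_H}\}\mid \vec H,\vec o_{-W})$ sequentially as $\prod_H p(W_{1_H},W_{2_H},W_{3_H}\mid H,\vec o_{-W})$;
\item each factor in that product splits into three further factors by the first fact.
\end{itemize}
As a consequence, $\mathbb{E}[\omega_k\mid \bigcup_H W_{l_H},\vec H=\vec h_j,\vec O_{-W}]=\mathbb{I}\{i=j\}$ for every $k\neq l$. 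The same holds for products of distinct $\omega_k$'s with $k\neq l$.

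\emph{Step 3: the combination.} I would define
\[
\lambda_{\vec h_i}=\omega_1\omega_2+\omega_1\omega_3+\omega_2\omega_3-2\,\omega_1\omega_2\omega_3 .
\]
Condition on $\bigcup_H W_{1_H}$, $\vec H=\vec h_j$ and $\vec O_{-W}$, and write $\delta=\mathbb{I}\{i=j\}$, so $\delta^2=\delta$. Step 2 gives conditional expectation
\[
\omega_1\delta+\omega_1\delta+\delta^2-2\omega_1\delta^2=\delta .
\]
The expression for $\lambda_{\vec h_i}$ is symmetric in the three indices, so the same computation gives $\delta$ when conditioning on $W_{2}$ or $W_{3}$. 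This establishes Equation~\ref{eq:weight_if}.

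\emph{Main obstacle.} I expect the hardest part to be Step 1 for general, possibly continuous, proxies. It requires turning completeness into invertibility of a finite moment matrix for a suitably chosen basis, uniformly over the $\vec o_{-W}$ being conditioned on. It also requires ensuring the resulting weights, and hence their products, are integrable so the conditional expectations exist. I would handle this by choosing bounded basis functions, such as indicators of sets on which the conditional laws differ, so all weights are bounded.
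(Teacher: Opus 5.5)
Your proposal is correct and follows the paper's argument. The paper takes exactly $\lambda_{\vec h_i}=\omega_1\omega_2+\omega_1\omega_3+\omega_2\omega_3-2\omega_1\omega_2\omega_3$, with single-proxy weights built as in Appendix~\ref{app:weight2} (for $k=3$ this relies on the strengthened completeness condition), and verifies Equation~\ref{eq:weight_if} by the same conditional-independence computation; your Step 2, which derives mutual independence of the three proxy families given $(\vec H,\vec O_{-W})$ from Assumptions~\ref{assump:mutual_independence}--\ref{assump:m_sep1}, spells out what the paper invokes in a single line.
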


In particular, we may take
\(
\lambda_{\vec h_i}\!\left(
\bigcup_{H \in \vec H,k}W_{k_H},
\vec O_{-W};
\eta
\right)
=
\omega_1\omega_2+\omega_1\omega_3+\omega_2\omega_3
-2\omega_1\omega_2\omega_3,
\)
where for \(k \in \{1,2,3\}\),
\(
\omega_k
=
\omega_{\vec h_i}\!\left(
\bigcup_{H \in \vec H} W_{k_H},
\vec O_{-W};
\eta
\right)
\) is constructed as in Appendix~\ref{app:weight2}. Then
by Assumption~\ref{assump:mutual_independence}-\ref{assump:m_sep1}, for each \(l \in \{1,2,3\}\), 
 
\begin{align*}
&
\mathbb{E}\!\left[
\lambda_{\vec h_i}\!\left(
\bigcup_{H \in \vec H,k}W_{k_H},
\vec O_{-W};
\eta
\right)
\,\middle|\, \bigcup_{H \in \vec H} W_{l_H},
\vec H=\vec h_j,\vec O_{-W}
\right] =\\
& 
\omega_l \cdot \mathbb{I}\{i=j\} + \omega_l \cdot \mathbb{I}\{i=j\} + \mathbb{I}\{i=j\} -
2 \omega_l \cdot \mathbb{I}\{i=j\} = \mathbb{I}\{i=j\}.
\end{align*}
Theorem~\ref{thm:if_est1} follows from the independences in Assumptions~\ref{assump:mutual_independence}-\ref{assump:m_sep1}.

\begin{thm}\label{thm:if_est1}
Given a full-data influence function \(\phi(\cdot; \psi, \eta) = \phi(\vec{H}, \big(
\vec{O}\setminus
\bigcup_{H \in \vec H,k}W_{k_H}\big); \eta) -\psi\), define
\begin{equation}\label{eq:obs_est_eq_if}
\varphi(\vec O;\eta)
=
\sum_{\vec h_i \in \operatorname{Supp}(\vec H \mid \vec O_{-W})}
\lambda_{\vec h_i}\!\left(
\bigcup_{H \in \vec H,k}W_{k_H},
\vec O_{-W};
\eta
\right)\,
\phi\!\left(
\vec H=\vec h_i,\,
 \vec O_{-W};
\eta
\right),
\end{equation}
where weight
\(\lambda_{\vec h_i}\!\left(
\bigcup_{H \in \vec H,k}W_{k_H},
\vec O_{-W};
\eta
\right)\)
obeys Equation~\ref{eq:weight_if}. Then
\(\varphi(\vec O;\eta)\) satisfies Equation~\ref{eq:bridge_2}. Consequently,
\(\varphi(\vec O;\eta)-\psi\) is an observed-data influence function for
\(\psi\).
\end{thm}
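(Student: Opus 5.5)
The argument has two parts. First I verify that the $\varphi$ of Equation~\ref{eq:obs_est_eq_if} satisfies Equation~\ref{eq:bridge_2}. Then I invoke Lemma~\ref{lemma:if_est1}, which states that any observed-data function satisfying Equation~\ref{eq:bridge_2} for each $l\in\{1,2,3\}$ gives an observed-data influence function $\varphi(\vec O;\eta)-\psi$. So the only real work is the conditional-expectation computation. The weights $\lambda_{\vec h_i}$ exist by Proposition~\ref{prop:guaranteed_solution2}, and I take them as given.

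Fix $l\in\{1,2,3\}$ and condition on $\bigcup_{H\in\vec H}W_{l_H}$, $\vec H=\vec h_j$, and $\vec O_{-W}$. In each summand of Equation~\ref{eq:obs_est_eq_if}, the factor $\phi(\vec H=\vec h_i,\vec O_{-W};\eta)$ is a deterministic function of $\vec O_{-W}$ once $\vec h_i$ is fixed. It is therefore measurable with respect to the conditioning set and pulls out of the expectation. The finite index set $\operatorname{Supp}(\vec H\mid\vec O_{-W})$ also depends only on $\vec O_{-W}$. By linearity,
\[
\mathbb{E}\!\left[\varphi(\vec O;\eta)\,\middle|\,\textstyle\bigcup_{H}W_{l_H},\vec H=\vec h_j,\vec O_{-W}\right]
=\sum_{\vec h_i}\phi(\vec h_i,\vec O_{-W};\eta)\,
\mathbb{E}\!\left[\lambda_{\vec h_i}\,\middle|\,\textstyle\bigcup_{H}W_{l_H},\vec H=\vec h_j,\vec O_{-W}\right].
\]
Equation~\ref{eq:weight_if} makes each inner expectation equal to $\mathbb{I}\{i=j\}$. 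The sum then collapses to $\phi(\vec h_j,\vec O_{-W};\eta)$, which is $\phi(\vec H,\vec O_{-W};\eta)$ evaluated at the conditioning value. This is Equation~\ref{eq:bridge_2}.

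The one point needing care is whether Equation~\ref{eq:weight_if} genuinely holds for the particular $\lambda$ built from the $\omega_k$. This is where Assumptions~\ref{assump:mutual_independence}--\ref{assump:m_sep1} enter, and I would re-check it explicitly.

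Conditional on $\vec H,\vec O_{-W}$ and on $\bigcup_H W_{l_H}$, I need the two other proxy blocks $\bigcup_H W_{k_H}$, $k\neq l$, to be independent of each other and of $\bigcup_H W_{l_H}$. I also need each $\omega_k$ to have conditional mean $\mathbb{I}\{i=j\}$ given $\vec H=\vec h_j,\vec O_{-W}$. For a single $H$, the within-$H$ mutual independence comes from Assumption~\ref{assump:mutual_independence}. Across different latents, Assumption~\ref{assump:m_sep1} lets each $H$'s triple be split off from the other triples and from the other latents.

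I would chain these independences in the style of a graphoid argument. This factorizes the joint conditional law of all proxies given $(\vec H,\vec O_{-W})$ as
\[
\prod_{H}p\!\left(W_{1_H}\mid H,\vec O_{-W}\right)p\!\left(W_{2_H}\mid H,\vec O_{-W}\right)p\!\left(W_{3_H}\mid H,\vec O_{-W}\right).
\]
That factorization gives the needed independence of the blocks indexed by $k$. The cross terms then evaluate to $\omega_l\mathbb{I}\{i=j\}$ or $\mathbb{I}\{i=j\}$, exactly as in the display following Proposition~\ref{prop:guaranteed_solution2}. Here I use $\mathbb{I}\{i=j\}^2=\mathbb{I}\{i=j\}$, and the terms sum to $\mathbb{I}\{i=j\}$.

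The final step applies Lemma~\ref{lemma:if_est1}. This step is immediate, so I expect the factorization to be the main obstacle.
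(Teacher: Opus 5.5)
Your proposal is correct and follows essentially the same route as the paper. You pull the $\vec O_{-W}$-measurable factors $\phi(\vec h_i,\vec O_{-W};\eta)$ out of the conditional expectation, collapse the sum via Equation~\ref{eq:weight_if} to get Equation~\ref{eq:bridge_2}, and then invoke Lemma~\ref{lemma:if_est1}; your explicit factorization of the proxies' conditional law from Assumptions~\ref{assump:mutual_independence}--\ref{assump:m_sep1} spells out what the paper compresses into the display after Proposition~\ref{prop:guaranteed_solution2}, and is not strictly needed here since Equation~\ref{eq:weight_if} is a hypothesis of the theorem.
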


\begin{example}\label{ex:mean_if}
Consider the target parameter \(\mu = \mathbb{E}[H]\) in the model depicted in Figure~\ref{fig:proxies}, with \(\vec H = \{H\}\) and \(\vec{O} = \{W_{1_H},W_{2_H},W_{3_H}\} = \{W,Z,Y\}\) satisfying Assumptions~\ref{assump:bounded_density}-\ref{assump:fix_labels}. Further suppose $H$ has finite support, and the strengthening of Assumption~\ref{assump:distinctness} in Proposition~\ref{prop:guaranteed_solution3} holds.

We know that
\(
\phi(H; \mu) = H - \mu 
\) is a full-data influence function for \(\mu\). If \(H\) is binary, define for each of its values \(h_i\),
\[
\omega_{h_i}(W; \eta)
=
\frac{W - \mathbb{E}[W \mid H = h_j]}{\mathbb{E}[W \mid H = h_i] - \mathbb{E}[W \mid H = h_j]}, \quad h_i \neq h_j,
\] by taking \(b_H(W)=(
 b_{H,1}(W),
 b_{H,2}(W))^\top=(1,W)^\top\) in Appendix~\ref{app:weight2}, and define \(\omega_{h_i}(Z; \eta)\) and \(\omega_{h_i}(V; \eta)\) similarly. If \(H\) has more than two categories, then define
\(\omega_{h_i}(W;\eta)\), \(\omega_{h_i}(Z;\eta)\), and \(\omega_{h_i}(V;\eta)\) by expanding the number of basis functions in \(b_{H}(W)\), \(b_{H}(Z)\), and \(b_{H}(V)\) from Appendix~\ref{app:weight2}.

Let 
\begin{align*}
&\lambda_{h_i}\!\left(
W,Z,V;\eta
\right)=\Big[\omega_{h_i}(W; \eta)\omega_{h_i}(Z; \eta) + 
\omega_{h_i}(W; \eta)\omega_{h_i}(V; \eta) + \\
& \qquad 
\omega_{h_i}(Z; \eta)\omega_{h_i}(V; \eta) -2 
\omega_{h_i}(W; \eta)\omega_{h_i}(Z; \eta)
\omega_{h_i}(V; \eta)\Big].
\end{align*}

Then 
\begin{align*}
& \varphi(W,Z,V;\eta)= 
\sum_{h_i}
\Big[\lambda_{h_i}\!\left(
W,Z,V;\eta
\right)\Big]\cdot h_i
\end{align*} is an observed-data influence function for \(\mu\) by Lemma~\ref{lemma:if_est1}.
\end{example}

Example~\ref{ex:mean_if} can be adapted for the vector-valued target parameter
\[
\mu
=
\mathbb{E}[\vec H]
=
\bigl(\mathbb{E}[H_1],\ldots,\mathbb{E}[H_m]\bigr)^\top
\] in the model depicted in Figure~\ref{fig:graphical_structure1}(a)
or Figure~\ref{fig:graphical_structure1}(b). Letting
\(\vec h_i\) denote the \(i^{\text{th}}\) support
value of \(\vec H\), and \(h_i(H)\) denote the value of \(H\) in
\(\vec h_i\), define for each \(k \in \{1,2,3\},\)
\[
\omega_{\vec h_i}\!\left(
\bigcup_{H\in\vec H} W_{k_H};\eta
\right)
=
\prod_{H\in\vec H}
\omega_{h_i(H)}\!\left(
W_{k_H};\eta
\right),
\] as per Appendix~\ref{app:weight2}. The construction then proceeds by weighting each support
value \(\vec h_i\) using

\begin{align*}
&
\lambda_{\vec h_i}\!\left(
\bigcup_{H\in\vec H,k} W_{k_H};\eta
\right)
= 
\omega_1\omega_2+\omega_1\omega_3+\omega_2\omega_3
-2\omega_1\omega_2\omega_3.
\end{align*}
where for \(k\in\{1,2,3\}\),
\(
\omega_k
=
\omega_{\vec h_i}\!\left(
\bigcup_{H \in \vec H} W_{k_H},
\vec O_{-W};
\eta
\right)\).

We provide an additional example to further illustrate the range of applications. 

\begin{example}\label{ex:latent_outcome_if}
Consider the target parameter \(\psi = \mathbb{E}[Y(a)] = \mathbb{E}[\mathbb{E}[Y \mid a, C]]\) in the model depicted in Figure~\ref{fig:3}(d), with \(\vec H = \{Y\}\) and \(\vec{O} = \{W_{1_H},W_{2_H},W_{3_H}\} = \{W,Z,V\}\) satisfying Assumptions~\ref{assump:bounded_density}-\ref{assump:fix_labels}. Further suppose $Y$ is has finite support, and the strengthening of Assumption~\ref{assump:distinctness} in Proposition~\ref{prop:guaranteed_solution3} holds.

We know that
\(
\phi(Y,A,C; \psi, \eta) = \frac{\mathbb{I}[A=a]}{p(A \mid C)}\{Y - \mathbb{E}[Y \mid a,C]\} +  \mathbb{E}[Y \mid a,C] - \psi 
\) is a full-data influence function for \(\psi\). If $Y$ is binary, define for each of its values \(y_i\),
\[
\omega_{y_i}(W,A,C; \eta)
=
\frac{W - \mathbb{E}[W \mid Y = y_j,A,C]}{\mathbb{E}[W \mid Y = y_i,A,C] - \mathbb{E}[W \mid Y = y_j,A,C]}, \quad y_i \neq y_j,
\] by taking \(b_Y(W)=(
 b_{Y,1}(W),
 b_{Y,2}(W))^\top=(1,W)^\top\) in Appendix~\ref{app:weight2}, and define \(\omega_{y_i}(Z,A,C; \eta)\) and \(\omega_{y_i}(V,A,C; \eta)\) similarly. If \(Y\) has more than two categories,
\(\omega_{y_i}(W,A,C;\eta)\), then define \(\omega_{y_i}(Z,A,C;\eta)\), and \(\omega_{y_i}(V,A,C;\eta)\) by expanding the number of basis functions in \(b_{Y}(W)\), \(b_{Y}(Z)\), and \(b_{Y}(V)\) from Appendix~\ref{app:weight2}.

Let 
\begin{align*}
&\lambda_{y_i}(W,Z,Y,A,C;\eta) = \Big[\omega_{y_i}(W,A,C; \eta)\omega_{y_i}(Z,A,C; \eta) + 
\omega_{y_i}(W,A,C; \eta)\omega_{y_i}(V,A,C; \eta) + \\
& \qquad 
\omega_{y_i}(Z,A,C; \eta)\omega_{y_i}(V,A,C; \eta) -2 
\omega_{y_i}(W,A,C; \eta)\omega_{y_i}(Z,A,C; \eta)
\omega_{y_i}(V,A,C; \eta)\Big].
\end{align*}

Then 
\begin{align*}
& \varphi(W,Z,V,A,C;\eta)= 
\sum_{y_i \in \operatorname{Supp}(Y \mid A,C)}
\Big[\lambda_{y_i}(W,Z,Y,A,C;\eta)\Big] \left\{\frac{\mathbb{I}[A=a]}{p(A \mid C)}\{y_i - \mathbb{E}[Y \mid a,C]\} +  \mathbb{E}[Y \mid a,C]\right\}
\end{align*} is an observed-data influence function for \(\psi\) by Lemma~\ref{lemma:if_est1}.
\end{example}

An influence function yielding estimators with the same asymptotic variance as those based on the influence function in Example~\ref{ex:latent_outcome_if} was previously derived by \citet{Guo2026_outcome}.

Recall that the basis functions in Appendix~\ref{app:weight2} may be chosen from a broad class of functions to satisfy Equation~\ref{eq:weight}. We next give conditions under which polynomials suffice.

\begin{proposition}\label{prop:polynomial_basis2}
Suppose the full-data law \(p(\vec{H},\vec{O})\) is identified. Further suppose $\vec{H}$ has finite support and for each \(H \in \vec H\), for \(W_{k_H} \in \{W_{1_H},W_{2_H},W_{3_H}\}\),
\[
W_{k_H}\mid H=h_i,\vec O_{-W}
\sim
\mu_{H,i}(\vec O_{-W})+\epsilon_H(\vec O_{-W}),
\]
where the error term \(\epsilon_H(\vec O_{-W})\) does not depend on the value of \(H\), and for each \(o_{-W}\)
\(
\mu_{H,i}(\vec o_{-W})
\neq
\mu_{H,j}(\vec o_{-W})
\text{ if } i\neq j,
\). Then for each \(\vec h_i\), weight \(\lambda_{\vec{h}_i}(\bigcup_{H \in \vec H,k} W_{k_H}, \vec O_{-W};\eta)\) respecting  Equation~\ref{eq:weight_if} may be constructed by taking
\[
\lambda_{\vec h_i}\!\left(
\bigcup_{H \in \vec H,k}W_{k_H},
\vec O_{-W};
\eta
\right)
=
\omega_1\omega_2+\omega_1\omega_3+\omega_2\omega_3
-2\omega_1\omega_2\omega_3,
\]
where for \(k\in\{1,2,3\}\),
\(
\omega_k
=
\omega_{\vec h_i}\!\left(
\bigcup_{H \in \vec H} W_{k_H},
\vec O_{-W};
\eta
\right),
\) is constructed by taking for each \(H \in \vec H\) and \(\vec o_{-W}\),
\[ b_H(W_{k_H})=
\bigl(
b_{H,1}(W_{k_H}),\ldots,
b_{H,K_H(\vec o_{-W})}(W_{k_H})
\bigr)^\top
=
\bigl(
1,W_{k_H},\ldots,
W_{k_H}^{K_H(\vec o_{-W})-1}
\bigr)^\top
\] in Appendix~\ref{app:weight2}.
\end{proposition}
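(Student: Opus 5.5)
Two things need checking: (i) each single-proxy weight \(\omega_k\), built from the polynomial basis in Appendix~\ref{app:weight2}, satisfies the conditional-unbiasedness property of Equation~\ref{eq:weight} with \(W_{1_H}\) replaced by \(W_{k_H}\); and (ii) the combination \(\lambda=\omega_1\omega_2+\omega_1\omega_3+\omega_2\omega_3-2\omega_1\omega_2\omega_3\) then satisfies Equation~\ref{eq:weight_if}. Step (ii) is the computation displayed after Proposition~\ref{prop:guaranteed_solution2}, so the work is in (i).

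For (i), fix \(H\), a value \(\vec o_{-W}\), and \(k\). Let \(K=K_H(\vec o_{-W})\) be the number of support points of \(H\) given \(\vec o_{-W}\). The Appendix~\ref{app:weight2} construction seeks \(\omega_{h_i}(W_{k_H})=\alpha_i^\top b_H(W_{k_H})\) with \(\alpha_i^\top\mathbb{E}[b_H(W_{k_H})\mid H=h_j,\vec o_{-W}]=\mathbb{I}\{i=j\}\). Such \(\alpha_i\) exist exactly when the \(K\times K\) matrix \(\mathcal{M}\), with entries \(\mathcal{M}_{rj}=\mathbb{E}[W_{k_H}^{r-1}\mid H=h_j,\vec o_{-W}]\), is invertible.

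Write \(\mu_j=\mu_{H,j}(\vec o_{-W})\) and \(\epsilon=\epsilon_H(\vec o_{-W})\), and assume the moments \(m_s=\mathbb{E}[\epsilon^s]\) exist for \(s\le K-1\). The binomial expansion gives
\[
\mathcal{M}_{rj}=\sum_{s=0}^{r-1}\binom{r-1}{s}m_{r-1-s}\,\mu_j^{s}.
\]
So \(\mathcal{M}=LV\), where \(V_{sj}=\mu_j^{s-1}\) is a Vandermonde matrix. The matrix \(L\) is lower triangular with diagonal entries \(\binom{r-1}{r-1}m_0=1\), and it does not depend on \(j\), because the error law is the same for every value of \(H\). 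Hence \(\det\mathcal{M}=\det V=\prod_{j<j'}(\mu_{j'}-\mu_j)\neq0\) by the distinctness of the \(\mu_{H,i}\). The \(\alpha_i\) are the rows of \(\mathcal{M}^{-1}\).

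For vector \(\vec H\), set \(\omega_k=\omega_{\vec h_i}(\bigcup_H W_{k_H},\vec O_{-W})=\prod_{H\in\vec H}\omega_{h_i(H)}(W_{k_H})\). Conditional on \(\vec H=\vec h_j,\vec O_{-W}\), the proxies \(W_{k_H}\) for distinct \(H\) are independent, and each depends on \(\vec H\) only through its own \(H\) (Assumption~\ref{assump:m_sep1}). The expectation therefore factorizes as \(\prod_H\mathbb{I}\{h_i(H)=h_j(H)\}=\mathbb{I}\{i=j\}\).

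For (ii), condition on \(\bigcup_H W_{l_H},\vec H=\vec h_j,\vec O_{-W}\). By Assumptions~\ref{assump:mutual_independence}--\ref{assump:m_sep1}, the blocks \(\{W_{k_H}\}_H\) for \(k\neq l\) are conditionally independent of the \(l\)-block and of each other. So every \(\omega_k\) with \(k\neq l\) has conditional mean \(\mathbb{I}\{i=j\}\), and \(\omega_l\) is held fixed.

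For \(i=j\), \(\lambda\) reduces to \(\omega_l+\omega_l+1-2\omega_l=1\). For \(i\neq j\), every term of \(\lambda\) contains at least one \(\omega_k\) with \(k\neq l\), so its conditional mean is \(0\). This gives Equation~\ref{eq:weight_if}.

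The main obstacle is the moment-matrix step: it needs finite moments of \(\epsilon_H\) up to order \(K-1\), and care that \(L\) is genuinely independent of \(j\). This is where the assumption that the error does not depend on \(H\) is used essentially.
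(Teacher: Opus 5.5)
Your proposal is correct and follows essentially the same route as the paper, which combines Appendix~\ref{app:polynomial_basis1} with the conditional-independence computation of $\lambda$ given after Proposition~\ref{prop:guaranteed_solution2}: a binomial expansion shows the polynomial moment matrix is an invertible transformation of the Vandermonde matrix in the distinct means $\mu_{H,i}$, and the resulting weights are multiplied over $H\in\vec H$. Your explicit factorization $\mathcal{M}=LV$ with $L$ unit lower-triangular, your check that Assumptions~\ref{assump:mutual_independence}--\ref{assump:m_sep1} make all proxy blocks conditionally independent given $(\vec H,\vec O_{-W})$, and your note that $\epsilon_H$ needs finite moments up to order $K-1$ make precise what the paper leaves implicit.
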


The proof of Propostion~\ref{prop:polynomial_basis2} follows from Appendix~\ref{app:polynomial_basis1} together with Assumptions~\ref{assump:mutual_independence}-\ref{assump:m_sep1}.

Next, we characterize the multiple robustness and efficiency properties of estimators that our observed-data influence functions yield.

\begin{thm}\label{thm:multiple_robustness1}
Let \(\varphi(O;\psi, \eta) = \varphi(O;\eta) -\psi\) denote an observed-data influence function, where \(\varphi(O;\eta)\) is constructed per Equation~\ref{eq:obs_est_eq_if}, taking
\(
\lambda_{\vec h_i}\!\left(
\bigcup_{H \in \vec H,k}W_{k_H},
\vec O_{-W};
\eta
\right)
=
\omega_1\omega_2+\omega_1\omega_3+\omega_2\omega_3
-2\omega_1\omega_2\omega_3,
\)
where for \(k \in \{1,2,3\}\),
\(
\omega_k
=
\omega_{\vec h_i}\!\left(
\bigcup_{H \in \vec H} W_{k_H},
\vec O_{-W};
\eta
\right)
\)
is constructed as in Appendix~\ref{app:weight2}.

Suppose \(\hat{\psi}_{full} = \mathbb{P}_n\!\left[\phi(\cdot;\hat\eta)\right]\overset{p}{\longrightarrow} \psi\) whenever at least one of the nuisance function sets
\(
\eta_1,\ldots,\eta_n \in \eta
\)
is correctly specified. Then \(\hat{\psi}_{obs} = 
\mathbb{P}_n\!\left[\varphi(\vec O;\hat\eta)\right] \overset{p}{\longrightarrow} \psi
\) whenever at least one of the nuisance function sets
\(
\eta_1,\ldots,\eta_n \in \eta
\)
is correctly specified, and at least one of the nuisance function sets \(
\eta'_1,\eta'_2,\eta'_3 \in \eta
\) below correctly specified.
\renewcommand{\theenumi}{\roman{enumi}}%
\begin{enumerate}
  \item $\eta'_1 = \left\{\mathbb E\!\left[
 b_H(W_{1_H})
\mid H,\vec O_{-W}
\right],
\mathbb E\!\left[
 b_H(W_{2_H})
\mid H,\vec O_{-W}
\right]: H \in \vec H\right\}$;  
  \item $\eta'_2 = \left\{\mathbb E\!\left[
 b_H(W_{1_H})
\mid H,\vec O_{-W}
\right],
\mathbb E\!\left[
 b_H(W_{3_H})
\mid H,\vec O_{-W}
\right]: H \in \vec H\right\}$;  
  \item $\eta'_3 = \left\{\mathbb E\!\left[
 b_H(W_{2_H})
\mid H,\vec O_{-W}
\right],
\mathbb E\!\left[
 b_H(W_{3_H})
\mid H,\vec O_{-W}
\right]: H \in \vec H\right\}$,
\end{enumerate}
where for \(k \in \{1,2,3\}\),
\(b_H(W_{k_H})\) is defined in Appendix~\ref{app:weight2}.
\end{thm}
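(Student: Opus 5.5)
The plan is to study the probability limit of \(\hat\psi_{obs}\) directly. Under standard regularity conditions (Glivenko--Cantelli or sample splitting), \(\hat\psi_{obs}\) converges in probability to \(\mathbb{E}[\varphi(\vec O;\bar\eta)]\), where \(\bar\eta\) is the limit of \(\hat\eta\). Some components of \(\bar\eta\) may be misspecified. So it suffices to show
\[
\mathbb{E}[\varphi(\vec O;\bar\eta)]=\mathbb{E}\big[\phi(\vec H,\vec O_{-W};\bar\eta)\big],
\]
whenever one of \(\eta'_1,\eta'_2,\eta'_3\) is correct. Here \(\phi(\vec H,\vec O_{-W};\bar\eta)\) is the full-data function evaluated at the (possibly misspecified) nuisances. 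Its expectation equals \(\psi\) by the assumed multiple robustness of \(\hat\psi_{full}\), provided one of \(\eta_1,\ldots,\eta_n\) is correct.

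To establish the displayed identity, I condition on \((\vec H,\vec O_{-W})\) and use iterated expectations:
\[
\mathbb{E}[\varphi]=\mathbb{E}\Big[\sum_{\vec h_i}\mathbb{E}\big[\lambda_{\vec h_i}(\cdot;\bar\eta)\mid \vec H,\vec O_{-W}\big]\,\phi(\vec h_i,\vec O_{-W};\bar\eta)\Big].
\]
Since \(\phi(\vec h_i,\vec O_{-W};\bar\eta)\) is \((\vec H,\vec O_{-W})\)-measurable, it is enough to show that \(\mathbb{E}[\lambda_{\vec h_i}(\cdot;\bar\eta)\mid \vec H=\vec h_j,\vec O_{-W}]=\mathbb{I}\{i=j\}\) holds even under partial misspecification.

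Write \(\bar\omega_k\) for the weight built from the limiting nuisances. By Assumptions~\ref{assump:mutual_independence}--\ref{assump:m_sep1}, the blocks \(\bigcup_H W_{1_H}\), \(\bigcup_H W_{2_H}\), \(\bigcup_H W_{3_H}\) are conditionally independent given \((\vec H,\vec O_{-W})\). This needs the product structure \(\omega_{\vec h_i}=\prod_H\omega_{h_i(H)}\) and the per-\(H\) independences, so I would verify it carefully: each \(W_{k_H}\) is independent, given \((H,\vec O_{-W})\), of the other latents and of the other proxies. Let \(m_k=\mathbb{E}[\bar\omega_k\mid \vec H=\vec h_j,\vec O_{-W}]\). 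Then
\[
\mathbb{E}[\lambda\mid\cdot]=m_1m_2+m_1m_3+m_2m_3-2m_1m_2m_3.
\]
The key step is to check that \(m_k=\mathbb{I}\{i=j\}\) whenever the conditional means \(\mathbb{E}[b_H(W_{k_H})\mid H,\vec O_{-W}]\) entering \(\bar\omega_k\) are correct. This follows from the Appendix~\ref{app:weight2} construction, in which \(\omega_k\) is a linear combination \(c^\top b_H(W_{k_H})\) with coefficients solving a system built from these conditional means. I expect to take the product over \(H\) of the single-\(H\) identities, again using conditional independence across \(H\).

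Finally, the algebraic identity. If \(m_1=m_2=m:=\mathbb{I}\{i=j\}\in\{0,1\}\), then
\[
m^2+2mm_3-2m^2m_3=m^2=m,
\]
and the same holds symmetrically for the other two pairs. Hence correctness of any one of \(\eta'_1,\eta'_2,\eta'_3\) gives the desired conditional moment for arbitrary limits of the remaining weight. The main obstacle is a subtle point in the key step. I need to confirm that \(\bar\omega_k\) depends only on the nuisances in \(\mathbb{E}[b_H(W_{k_H})\mid H,\vec O_{-W}]\) and on no other possibly misspecified quantity, such as support values or \(p(\vec H\mid\vec O_{-W})\). I also need to confirm that the misspecified \(\bar\omega_k\) still has a finite conditional mean, so that the factorization of conditional expectations is legitimate. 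I would check both directly against the Appendix~\ref{app:weight2} formula, adding an integrability condition if necessary.
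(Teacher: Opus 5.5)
Your proposal is correct and follows essentially the same route as the paper's proof in Appendix~\ref{app:multiple_robustness1}. Both condition on \((\vec H,\vec O_{-W})\) and use Assumptions~\ref{assump:mutual_independence}--\ref{assump:m_sep1} to obtain \(\mathbb{E}[\lambda_{\vec h_i}\mid \vec H=\vec h_j,\vec O_{-W}]=\mathbb{I}\{i=j\}\) whenever one of \(\eta'_1,\eta'_2,\eta'_3\) is correct, and then reduce to the assumed robustness of \(\hat\psi_{full}\) by iterated expectations.

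You supply details the paper leaves implicit at this point: the factorization \(m_1m_2+m_1m_3+m_2m_3-2m_1m_2m_3\) and the identity \(m^2=m\) for \(m\in\{0,1\}\), which the paper writes out only in Appendix~\ref{app:efficiency1}. The concern you flag is settled by the Appendix~\ref{app:weight2} formula \(\omega_{h_i}=e_i^\top\mathbf M_H(\vec o_{-W})^{-1}b_H(W_{k_H})\). It involves only the conditional means of \(b_H(W_{k_H})\) given \((H,\vec O_{-W})\) and the known support, not \(p(\vec H\mid\vec O_{-W})\).
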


We provide a proof in Appendix~\ref{app:multiple_robustness1}.

\begin{thm}\label{thm:efficiency1}
Let \(\varphi(O;\psi, \eta) = \varphi(O;\eta) -\psi\) denote an observed-data influence function, where \(\varphi(O;\eta)\) is constructed per Equation~\ref{eq:obs_est_eq_if}, taking
\(
\lambda_{\vec h_i}\!\left(
\bigcup_{H \in \vec H,k}W_{k_H},
\vec O_{-W};
\eta
\right)
=
\omega_1\omega_2+\omega_1\omega_3+\omega_2\omega_3
-2\omega_1\omega_2\omega_3,
\)
where for \(k \in \{1,2,3\}\),
\(
\omega_k
=
\omega_{\vec h_i}\!\left(
\bigcup_{H \in \vec H} W_{k_H},
\vec O_{-W};
\eta
\right)
\)
is constructed as in Appendix~\ref{app:weight2}.

Suppose \(\mathbb{E}[\phi(\cdot; \hat \eta) - \psi] = o_p(n^{-1/2})\). Then under suitable regularity conditions including sample splitting and consistency of nuisance function estimation, the estimator \(\hat{\psi}_{obs}\) obtained by solving the empirical estimating equation
\(
\mathbb{P}_n\!\left[\varphi(\vec O;\hat\psi_{obs},\hat\eta)\right]=0
\) using the observed data has the property that
\[
\sqrt{n}\,(\hat{\psi}_{obs} - \psi)
\;\;\overset{d}{\longrightarrow}\;\;
\mathcal{N}\!\Big(0, \,\mathbb{E}\big[\varphi^2(O; \psi, \eta)\big]\Big)
\]
if conditions (i)-(iii) holds for each \(H \in \vec H\):
\renewcommand{\theenumi}{\roman{enumi}}%
\begin{enumerate}
  \item $\left\lVert
\mathbb{\hat E}\!\left[
 b_H(W_{1_H})
\mid H,\vec O_{-W}
\right]-\mathbb E\!\left[
 b_H(W_{1_H})
\mid H,\vec O_{-W}
\right]
\right\rVert$ 
  $\left\lVert
\mathbb{\hat E}\!\left[
 b_H(W_{2_H})
\mid H,\vec O_{-W}
\right]-\mathbb E\!\left[
 b_H(W_{2_H})
\mid H,\vec O_{-W}
\right]
\right\rVert = o_p(n^{-1/2})$; 
   
   \item $\left\lVert
\mathbb{\hat E}\!\left[
 b_H(W_{1_H})
\mid H,\vec O_{-W}
\right]-\mathbb E\!\left[
 b_H(W_{1_H})
\mid H,\vec O_{-W}
\right]
\right\rVert$ 
  $\left\lVert
\mathbb{\hat E}\!\left[
 b_H(W_{3_H})
\mid H,\vec O_{-W}
\right]-\mathbb E\!\left[
 b_H(W_{3_H})
\mid H,\vec O_{-W}
\right]
\right\rVert = o_p(n^{-1/2})$;   
\item $\left\lVert
\mathbb{\hat E}\!\left[
 b_H(W_{2_H})
\mid H,\vec O_{-W}
\right]-\mathbb E\!\left[
 b_H(W_{2_H})
\mid H,\vec O_{-W}
\right]
\right\rVert$ 
  $\left\lVert
\mathbb{\hat E}\!\left[
 b_H(W_{3_H})
\mid H,\vec O_{-W}
\right]-\mathbb E\!\left[
 b_H(W_{3_H})
\mid H,\vec O_{-W}
\right]
\right\rVert = o_p(n^{-1/2})$;,
\end{enumerate}
where for \(k \in \{1,2,3\}\),
\(b_H(W_{k_H})\) is defined in Appendix~\ref{app:weight2}.
\end{thm}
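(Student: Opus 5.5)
The approach is the usual von Mises expansion for one-step / estimating-equation estimators, with the remainder split into an empirical-process term and a second-order bias term. The bias term is then controlled through the product structure of \(\lambda_{\vec h_i}\). Write \(\hat\psi_{obs}=\mathbb{P}_n[\varphi(\vec O;\hat\eta)]\), which solves the empirical equation since \(\varphi\) is linear in \(\psi\). This gives
\[
\hat\psi_{obs}-\psi
=(\mathbb{P}_n-\mathbb{P})\varphi(\vec O;\psi,\eta)
+(\mathbb{P}_n-\mathbb{P})\{\varphi(\vec O;\hat\eta)-\varphi(\vec O;\eta)\}
+\mathbb{E}[\varphi(\vec O;\hat\eta)]-\psi .
\]
The first term yields the stated normal limit by the CLT, since \(\varphi(\cdot;\psi,\eta)\) has mean zero by Theorem~\ref{thm:if_est1} and Lemma~\ref{lemma:if_est1}. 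The second term is \(o_p(n^{-1/2})\) by sample splitting: conditional on the training fold it has mean zero, and Chebyshev applies once \(\|\varphi(\cdot;\hat\eta)-\varphi(\cdot;\eta)\|_{2}=o_p(1)\). That \(L_2\) consistency follows from consistency of the nuisance estimators, provided the regularity conditions keep the denominators of the Appendix~\ref{app:weight2} weights (the inverse of the matrix of \(\mathbb{E}[b_H(W_{k_H})\mid H=h,\vec O_{-W}]\) across \(h\)) bounded away from singularity.

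The main work is the bias term \(R=\mathbb{E}[\varphi(\vec O;\hat\eta)]-\psi\). I would iterate expectations, conditioning first on \((\vec H,\vec O_{-W})\). By Assumptions~\ref{assump:mutual_independence}--\ref{assump:m_sep1}, the blocks \(\bigcup_H W_{1_H}\), \(\bigcup_H W_{2_H}\), \(\bigcup_H W_{3_H}\) are conditionally independent given \((\vec H,\vec O_{-W})\). Each \(\hat\omega_k\) is a product over \(H\) of terms linear in \(b_H(W_{k_H})\), so
\[
\mathbb{E}[\hat\omega_k\mid \vec H=\vec h_j,\vec O_{-W}]=\mathbb{I}\{i=j\}+\delta_{k,ij},
\]
where \(\delta_{k,ij}\) is a smooth function of the errors \(\hat{\mathbb{E}}[b_H(W_{k_H})\mid H,\vec O_{-W}]-\mathbb{E}[b_H(W_{k_H})\mid H,\vec O_{-W}]\). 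For a single \(H\) it is exactly linear in that error times the estimated inverse matrix. For several \(H\), it is a product over \(H\) of terms \((\mathbb{I}+\delta_H)\), and I expand these to first order.

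Write \(a_k=\mathbb{I}\{i=j\}+\delta_k\) and \(c=\mathbb{I}\{i=j\}\in\{0,1\}\). Then
\[
\mathbb{E}[\hat\lambda_{\vec h_i}\mid\vec h_j,\vec O_{-W}]=a_1a_2+a_1a_3+a_2a_3-2a_1a_2a_3 .
\]
The key algebraic step is to check that this expression minus \(c\) has no linear terms in the \(\delta_k\). Using \(c^2=c\), expansion shows the linear coefficient of each \(\delta_k\) is \(2c-2c^2=0\). What remains is a sum of \(\delta_1\delta_2\), \(\delta_1\delta_3\), \(\delta_2\delta_3\) terms with coefficients \((1-2c)\), plus \(-2\delta_1\delta_2\delta_3\).

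Summing over \(\vec h_i\) against \(\hat\phi(\vec h_i,\vec O_{-W})\) and adding the full-data remainder then gives
\[
R=\bigl(\mathbb{E}[\phi(\cdot;\hat\eta)]-\psi\bigr)+\sum_{k<l}O\bigl(\mathbb{E}|\delta_k\delta_l|\bigr)+O\bigl(\mathbb{E}|\delta_1\delta_2\delta_3|\bigr).
\]
This uses boundedness of \(\hat\phi\), finite support of \(\vec H\), and the fact that \(p(\vec h_j\mid\vec O_{-W})\) is a true quantity. The first bracket is \(o_p(n^{-1/2})\) by hypothesis. By Cauchy--Schwarz, each pairwise cross term is bounded by the product of \(L_2\) errors in conditions (i)--(iii), hence \(o_p(n^{-1/2})\). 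The triple term is dominated by a pairwise term times a bounded factor.

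The expected obstacle is making the \(\delta_k\) bounds rigorous when the weights are ratios or matrix inverses of estimated quantities, and when there are several latent variables. I would first linearize the inverse using boundedness of the true and estimated inverse matrices; the higher-order pieces are again products of errors. Then I would show \(|\delta_k|\lesssim\sum_H\|\hat{\mathbb{E}}[b_H(W_{k_H})\mid H,\vec O_{-W}]-\mathbb{E}[b_H(W_{k_H})\mid H,\vec O_{-W}]\|\), which reduces the bias to exactly the norms appearing in (i)--(iii).
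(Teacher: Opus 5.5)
Your proposal follows the same route as the paper's proof in Appendix~\ref{app:efficiency1}. The shared steps are the split $\hat\psi_{obs}-\psi=R_1+R_2+(\mathbb{P}_n-\mathbb{P})\varphi$, sample splitting for $R_1$, and, for $R_2$, conditioning on $(\vec H,\vec O_{-W})$, factoring $\mathbb{E}[\hat\lambda_{\vec h_i}\mid\vec H=\vec h_j,\vec O_{-W}]$ through the conditional independence of the three proxy blocks, and writing each factor as $\mathbb{I}\{i=j\}$ plus an estimation error. Your identity $a_1a_2+a_1a_3+a_2a_3-2a_1a_2a_3-c=(1-2c)\sum_{k<l}\delta_k\delta_l-2\delta_1\delta_2\delta_3$ (with $a_k=c+\delta_k$ and $c^2=c$) is exactly the cancellation of first-order terms that the paper compresses into ``altogether, given conditions (i)--(iii)''. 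Also, for the Appendix~\ref{app:weight2} weights the single-$H$ error is exactly $e_i^\top\hat{\mathbf M}_H^{-1}(\mathbf M_H-\hat{\mathbf M}_H)e_j$, so no further linearization of the inverse is needed.

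The step that does not go through as stated is your final reduction when $|\vec H|>1$; the paper's proof passes over the same point. Write $c_H=\mathbb{I}\{h_i(H)=h_j(H)\}$ and $\delta_{k,H}$ for the single-$H$ error, so that $\delta_k=\prod_H(c_H+\delta_{k,H})-\prod_H c_H$. In the diagonal case $i=j$ all $c_H=1$ and $1-2c=-1$. Hence $\mathbb{E}[\hat\lambda_{\vec h_i}\mid\vec H=\vec h_i,\vec O_{-W}]-1$ contains the cross terms $-\delta_{1,H}\delta_{2,H'}$ with $H\neq H'$, and likewise for the proxy pairs $(1,3)$ and $(2,3)$. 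These are controlled only by $\|\hat{\mathbb E}[b_H(W_{1_H})\mid H,\vec O_{-W}]-\mathbb E[b_H(W_{1_H})\mid H,\vec O_{-W}]\|\,\|\hat{\mathbb E}[b_{H'}(W_{2_{H'}})\mid H',\vec O_{-W}]-\mathbb E[b_{H'}(W_{2_{H'}})\mid H',\vec O_{-W}]\|$, and conditions (i)--(iii) imposed separately for each $H$ do not bound this product. For example, estimate the $W_1$-model of $H$ and the $W_2$-model of $H'$ at rate $n^{-0.1}$ and all other proxy models at rate $n^{-0.45}$. Then (i)--(iii) hold for every $H$, while this cross term is generically of order $n^{-0.2}$. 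So your argument, like the paper's, establishes the result for a single latent variable. For general $\vec H$ it works once the product-rate conditions are imposed for every pair $H,H'\in\vec H$, equivalently with the norms in (i)--(iii) taken over the errors stacked across all $H\in\vec H$.
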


We provide a proof in Appendix~\ref{app:efficiency1}.

\begin{remark}
Theorem~\ref{thm:efficiency1}, as well as Theorem~\ref{thm:efficiency2} in Section~\ref{subsec:class2}, provide product-rate conditions on nuisance function estimation sufficient for constructing $\sqrt{n}$-consistent estimators, but does not prescribe specific procedures that achieve these rates. Under additional model restrictions, standard procedures may satisfy the required conditions; for example, the means of a Gaussian mixture may be estimated using the expectation-maximization algorithm at suitable rates \citep{balakrishnan2017, wang2015}. More generally, existing work has employed sieve-based approaches to estimating relevant nuisance functions \citep{hu2008instrumental, zhou2024causalinferencehiddentreatment}. Establishing a rich toolkit of procedures which satisfy the product-rate conditions stated remains an important direction for future work. 
\end{remark}

\begin{example}
By Theorem~\ref{thm:multiple_robustness1}, if \(H\) is binary, the estimator
\(
\hat{\mu}
=
\mathbb{P}_n\!\left[
\varphi(W,Z,V;\hat\eta)
\right]
\)
in Example~\ref{ex:mean_if} is consistent provided at least one of the following nuisance pairs is consistently estimated:
\begin{enumerate}
\item
\(\{\widehat{\mathbb E}[W\mid H],\,\widehat{\mathbb E}[Z\mid H]\}\);
\item
\(\{\widehat{\mathbb E}[W\mid H],\,\widehat{\mathbb E}[V\mid H]\}\);
\item
\(\{\widehat{\mathbb E}[Z\mid H],\,\widehat{\mathbb E}[V\mid H]\}\).
\end{enumerate}

Furthermore, by Theorem~\ref{thm:efficiency1},
\(
\sqrt{n}(\hat\mu-\mu)
\rightsquigarrow
N\!\left(
0,\,
\mathbb E\!\left[
\{\varphi(W,Z,V;\eta)-\mu\}^2
\right]
\right),
\)
under suitable regularity conditions including sample splitting and consistency of nuisance functions, provided the following conditions hold:
\begin{enumerate}
\item
\(
\left\|
\widehat{\mathbb E}[W\mid H]
-
\mathbb E[W\mid H]
\right\|
\left\|
\widehat{\mathbb E}[Z\mid H]
-
\mathbb E[Z\mid H]
\right\|
=
o_p(n^{-1/2});
\)

\item
\(
\left\|
\widehat{\mathbb E}[W\mid H]
-
\mathbb E[W\mid H]
\right\|
\left\|
\widehat{\mathbb E}[V\mid H]
-
\mathbb E[V\mid H]
\right\|
=
o_p(n^{-1/2});
\)

\item
\(
\left\|
\widehat{\mathbb E}[Z\mid H]
-
\mathbb E[Z\mid H]
\right\|
\left\|
\widehat{\mathbb E}[V\mid H]
-
\mathbb E[V\mid H]
\right\|
=
o_p(n^{-1/2}).
\)
\end{enumerate}
\end{example}

\begin{example}\label{ex:latent_outcome_robust}
By Theorem~\ref{thm:multiple_robustness1}, if \(Y\) is binary, the estimator
\(
\hat{\psi}
=
\mathbb{P}_n\!\left[
\varphi(W,Z,V,A,C;\hat\eta)
\right]
\)
in Example~\ref{ex:latent_outcome_if} is consistent provided \(p(A \mid C)\) or \(\mathbb{E}[Y \mid A,C]\) is consistently estimated, and at least one of the following nuisance pairs is consistently estimated:
\begin{enumerate}
\item
\(\{\widehat{\mathbb E}[W\mid  Y,A,C],\,\widehat{\mathbb E}[Z\mid  Y,A,C]\}\);
\item
\(\{\widehat{\mathbb E}[W\mid  Y,A,C],\,\widehat{\mathbb E}[V\mid  Y,A,C]\}\);
\item
\(\{\widehat{\mathbb E}[Z\mid  Y,A,C],\,\widehat{\mathbb E}[V\mid  Y,A,C]\}\).
\end{enumerate}

Furthermore, by Theorem~\ref{thm:efficiency1},
\(
\sqrt{n}(\hat\psi-\psi)
\rightsquigarrow
N\!\left(
0,\,
\mathbb E\!\left[
\{\varphi(W,Z,V,A,C;\eta)-\psi\}^2
\right]
\right),
\)
under suitable regularity conditions including sample splitting and consistency of nuisance functions, provided the following conditions hold:
\begin{enumerate}
\item
\(
\left\|
\widehat{p}[A\mid C]
-
p[A\mid C]
\right\|
\left\|
\widehat{\mathbb E}[Y\mid A,C]
-
\mathbb E[Y\mid A,C]
\right\|
=
o_p(n^{-1/2});
\)

\item
\(
\left\|
\widehat{\mathbb E}[W\mid  Y,A,C]
-
\mathbb E[W\mid  Y,A,C]
\right\|
\left\|
\widehat{\mathbb E}[Z\mid  Y,A,C]
-
\mathbb E[Z\mid  Y,A,C]
\right\|
=
o_p(n^{-1/2});
\)

\item
\(
\left\|
\widehat{\mathbb E}[W\mid  Y,A,C]
-
\mathbb E[W\mid  Y,A,C]
\right\|
\left\|
\widehat{\mathbb E}[V\mid  Y,A,C]
-
\mathbb E[V\mid  Y,A,C]
\right\|
=
o_p(n^{-1/2});
\)

\item
\(
\left\|
\widehat{\mathbb E}[Z\mid  Y,A,C]
-
\mathbb E[Z\mid Y,A,C]
\right\|
\left\|
\widehat{\mathbb E}[V\mid  Y,A,C]
-
\mathbb E[V\mid  Y,A,C]
\right\|
=
o_p(n^{-1/2}).
\)
\end{enumerate}
\end{example}

\subsection{Class 2}\label{subsec:class2}
In this section, we expand the class of target parameters considered to those which can be expressed as functionals of
\[
p\!\left(
\vec H,
\vec O \setminus
\bigcup_{H \in \vec H}
\{W_{1_H},W_{2_H}\}
\right),
\] with known full-data influence function \(\phi(\cdot; \psi, \eta) = \phi_1\left(\vec{H}, \big(
\vec{O}\setminus
\bigcup_{H \in \vec H}\{W_{1_H},W_{2_H}\}\big); \eta\right) + \phi_2\left(\vec{H},
\vec{O}_{-W}; \eta\right) -\psi,\)
where \(
\mathbb{E}\left[\phi_1\left(\cdot; \eta\right) \mid \vec{H}, \vec O_{-W}\right] = 0.\)
We can replace \(\{W_{1_H},W_{2_H}\}\) in the results that follow with any pair from \(\{W_{1_H},W_{2_H},W_{3_H}\}\).

\begin{lemma}\label{lemma:if_est2}
Suppose the full-data law \(p(\vec{H},\vec{O})\) is identified under Assumptions~\ref{assump:bounded_density}-\ref{assump:fix_labels}. Let \(\psi\) be a functional of \(
p\!\left(
\vec H,
\vec O \setminus
\bigcup_{H \in \vec H}
\{W_{1_H},W_{2_H}\}
\right)\).
Given a full-data influence function for \(\psi\), \[\phi(\vec H,
\vec O \setminus
\bigcup_{H \in \vec H}
\{W_{1_H},W_{2_H}\} ; \psi, \eta) = \phi_1\left(\vec{H}, \big(
\vec{O}\setminus
\bigcup_{H \in \vec H}\{W_{1_H},W_{2_H}\}\big); \eta\right) + \phi_2\left(\vec{H},
\vec{O}_{-W}; \eta\right) -\psi,\]
where \(
\mathbb{E}\left[\phi_1\left(\cdot; \eta\right) \mid \vec{H}, \vec O_{-W}\right] = 0,\) and an observed-data function \(\varphi(\vec{O};\eta)\)  satisfying
\begin{equation}\label{eq:bridge_3a}
\mathbb{E}[\varphi(\vec{O};\eta) \mid  \vec H,
\vec O \setminus
\bigcup_{H \in \vec H}
\{W_{1_H},W_{2_H}\}] =\phi_1\left(\vec{H}, \big(
\vec{O}\setminus
\bigcup_{H \in \vec H}\{W_{1_H},W_{2_H}\}\big); \eta\right) + \phi_2\left(\vec{H},
\vec{O}_{-W}; \eta\right), 
\end{equation}
and that for each \(m \in \{1,2\}\),
\begin{equation}\label{eq:bridge_3b}
\mathbb{E}\left[
\varphi(\vec{O};\eta)
\mid
\bigcup_{H \in \vec H} W_{m_H}, \vec H,
\vec O_{-W} 
\right] =
\phi_2\left(\vec{H},
\vec{O}_{-W}; \eta\right),
\end{equation}
then 
\( \varphi(\vec O; \psi, \eta) = 
\varphi(O; \eta) - \psi\)
is an observed-data influence function for \(\psi\).
\end{lemma}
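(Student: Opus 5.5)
The plan is to check the two defining properties of an observed-data influence function. Mean zero is immediate from \eqref{eq:bridge_3a} and iterated expectations: $\mathbb{E}[\varphi(\vec O;\eta)]=\mathbb{E}[\phi_1+\phi_2]=\psi$, because the full-data influence function has mean zero. The real work is the pathwise-derivative identity. By the standing assumption in the Preliminaries, every regular observed-data submodel $p_\epsilon(\vec O)$ with score $s(\vec O)$ is induced by marginalizing a regular full-data submodel $p_\epsilon(\vec H,\vec O)$ with score $s(\vec H,\vec O)$, and the two derivatives of $\psi$ agree. Moreover $s(\vec O)=\mathbb{E}[s(\vec H,\vec O)\mid \vec O]$. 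It therefore suffices to show
\[
\mathbb{E}\big[\{\varphi(\vec O;\eta)-\psi\}\,s(\vec H,\vec O)\big]=\mathbb{E}\big[\{\phi_1+\phi_2-\psi\}\,s(\vec H,\vec O)\big]
\]
for every full-data score. The left-hand side equals $\mathbb{E}[\{\varphi-\psi\}s(\vec O)]$ by iterated expectations, and the right-hand side is the derivative of $\psi$ by the full-data influence function property.

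To prove the displayed identity, I will write $\Delta=\varphi(\vec O;\eta)-\phi_1-\phi_2$, which is a function of $(\vec H,\vec O)$, and show $\mathbb{E}[\Delta\, s(\vec H,\vec O)]=0$. I will decompose the full-data score according to the factorization implied by Assumptions~\ref{assump:mutual_independence}--\ref{assump:m_sep1}. Write $\vec W_k=\bigcup_{H}W_{k_H}$ and $\vec R=\vec O\setminus\bigcup_H\{W_{1_H},W_{2_H}\}$, so that $\vec R$ contains $\vec W_3$ and $\vec O_{-W}$. Conditional on $(\vec H,\vec O_{-W})$, the vectors $\vec W_1,\vec W_2,\vec W_3$ are mutually independent. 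This follows because, given $(\vec H,\vec O_{-W})$, each $H$-block $(W_{1_H},W_{2_H},W_{3_H})$ is independent of the other blocks (Assumption~\ref{assump:m_sep1}), and within a block the three proxies are independent (Assumption~\ref{assump:mutual_independence}). Hence
\[
s(\vec H,\vec O)=s(\vec H,\vec O_{-W})+s(\vec W_3\mid \vec H,\vec O_{-W})+s(\vec W_1\mid \vec H,\vec O_{-W})+s(\vec W_2\mid \vec H,\vec O_{-W}).
\]
The first two terms are functions of $(\vec H,\vec R)$. By \eqref{eq:bridge_3a}, $\mathbb{E}[\Delta\mid \vec H,\vec R]=0$, so those two terms contribute zero. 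For the $\vec W_m$ term with $m\in\{1,2\}$, I condition on $(\vec W_m,\vec H,\vec O_{-W})$. The conditional mean of $\varphi$ given this set is $\phi_2$ by \eqref{eq:bridge_3b}. That of $\phi_2$ is trivially $\phi_2$. That of $\phi_1$ is $\mathbb{E}[\phi_1\mid \vec H,\vec O_{-W}]=0$, because $\phi_1$ depends only on $(\vec H,\vec R)$ and $\vec W_m\ci\vec R\mid(\vec H,\vec O_{-W})$ under the independence just noted. So $\mathbb{E}[\Delta\mid \vec W_m,\vec H,\vec O_{-W}]=0$, and the $\vec W_m$ score term also contributes zero.

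The main obstacle is justifying this score decomposition, and in particular that $\vec R$ and $\vec W_m$ are conditionally independent given $(\vec H,\vec O_{-W})$ so that $\phi_1$ averages to zero. This requires combining the within-block independence of Assumption~\ref{assump:mutual_independence} with the cross-block independence of Assumption~\ref{assump:m_sep1} to obtain joint mutual independence of $\vec W_1,\vec W_2,\vec W_3$ given $(\vec H,\vec O_{-W})$. I would argue this by induction over the latent variables, factorizing the conditional density block by block. A secondary subtlety is that the model may restrict the full-data scores through these independences. This causes no difficulty, however, since the argument only uses that every score decomposes as above and makes no claim of tangent-space saturation.
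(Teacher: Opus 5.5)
Your proposal is correct and follows essentially the same route as the paper's proof in Appendix~\ref{app:if2}. That proof decomposes the full-data score into the conditional scores of $\bigcup_{H} W_{1_H}$ and $\bigcup_H W_{2_H}$ given $(\vec H,\vec O_{-W})$ plus the score of $(\vec H,\vec O\setminus\bigcup_H\{W_{1_H},W_{2_H}\})$, matches the conditional means of $\varphi$ and $\phi$ on each piece via Equations~\eqref{eq:bridge_3a}--\eqref{eq:bridge_3b}, and projects onto $s(\vec O)$ by iterated expectations. Your further split of the last score into $s(\vec H,\vec O_{-W})+s(\vec W_3\mid\vec H,\vec O_{-W})$, your use of $\Delta$, and your explicit derivation of the joint conditional independence of $\vec W_1,\vec W_2,\vec W_3$ (which gives $\mathbb{E}[\phi_1\mid \vec W_m,\vec H,\vec O_{-W}]=0$) only spell out the step the paper attributes directly to Assumptions~\ref{assump:mutual_independence}--\ref{assump:m_sep1}.
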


We prove Lemma~\ref{lemma:if_est2} in Appendix~\ref{app:if2}.

\begin{proposition}\label{prop:guaranteed_solution3}
Suppose Assumptions~\ref{assump:bounded_density}-\ref{assump:completeness}, Assumption~\ref{assump:fix_labels}, and a strengthening of Assumption~\ref{assump:distinctness} to a completeness condition (namely, for each \(H \in \vec{H}\), each value $\vec o_{-W}$, and any square-integrable function \(g\),
\(\mathbb{E}\{g(H) \mid W_{3_H}, \vec O_{-W}= \vec o_{-W}\} = 0 \ \text{a.s. iff } g(H) = 0 \ \text{a.s.})\) hold. Suppose further that \(\vec H\) has finite support. Then for each of its values \(\vec h_i\), there exist weight function
\(\lambda_{\vec h_i}\!\left(
\bigcup_{H \in \vec H,k}W_{k_H},
\vec O_{-W};
\eta
\right)\), \(k \in \{1,2,3\}\), satisfying that for each \(l \in \{1,2,3\}\),

\begin{equation}\label{eq:weight_if2}
\mathbb{E}\!\left[
\lambda_{\vec h_i}\!\left(
\bigcup_{H \in \vec H,k}W_{k_H},
\vec O_{-W};
\eta
\right)
\,\middle|\, \bigcup_{H \in \vec H} W_{l_H},
\vec H=\vec h_j,\vec O_{-W}
\right]
=
\mathbb{I}\{i=j\},
\end{equation}
and there exist weight function
\(\gamma_{\vec{h}_i}\left(\bigcup_{H \in \vec H}\{W_{1_H},W_{2_H}\}, \vec O_{-W}; \eta\right)\), satisfying both 
\begin{equation}\label{eq:weight_if1a}
\mathbb{E}\!\left[\gamma_{\vec{h}_i}\left(\bigcup_{H \in \vec H} \{W_{1_H},W_{2_H}\}, \vec O_{-W};\eta\right)\mid \vec{H}=\vec{h}_j,\vec O \setminus \bigcup_{H \in \vec H}  \{W_{1_H},W_{2_H}\}\right]
=
\mathbb{I}\{i=j\}
\end{equation}
and that for each \(m \in \{1,2\}\),
\begin{equation}\label{eq:weight_if1b}
\begin{aligned}
&\mathbb{E}\!\left[\gamma_{\vec{h}_i}\left(\bigcup_{H \in \vec H} \{W_{1_H},W_{2_H}\}, \vec O_{-W};\eta\right)\mid \bigcup_{H \in \vec H} W_{m_H}= \bigcup_{H \in \vec H} w_{m_H}, \vec{H}=\vec{h}_j,\vec O_{-W} = \vec o_{-W}\right]
=
c \cdot \mathbb{I}\{i=j\}, \\
& \quad \text{ where } c \text{ is a constant.}
\end{aligned}
\end{equation}
\end{proposition}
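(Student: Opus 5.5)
The plan is to reuse the construction behind Proposition~\ref{prop:guaranteed_solution2} for $\lambda_{\vec h_i}$ and to build $\gamma_{\vec h_i}$ from the single-proxy weights of Proposition~\ref{prop:guaranteed_solution1}. The $\lambda$ part is exactly Proposition~\ref{prop:guaranteed_solution2}, whose hypotheses coincide with the present ones, so Equation~\ref{eq:weight_if2} needs no new argument.

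For $\gamma$, let $\omega_k=\omega_{\vec h_i}(\bigcup_{H\in\vec H}W_{k_H},\vec O_{-W};\eta)$ for $k\in\{1,2\}$. These are the weights of Proposition~\ref{prop:guaranteed_solution1} (equivalently Appendix~\ref{app:weight2}), built from the $W_{1}$-proxies and, with the roles of $W_{1_H}$ and $W_{2_H}$ interchanged, from the $W_{2}$-proxies. The interchange is legitimate because Assumption~\ref{assump:completeness} imposes completeness for both $W_{1_H}$ and $W_{2_H}$. Each weight therefore satisfies
\[
\mathbb E[\omega_k\mid \vec H=\vec h_j,\vec O_{-W}]=\mathbb I\{i=j\}.
\]
In the multi-latent case I would take $\omega_k=\prod_{H}\omega_{h_i(H)}(W_{k_H},\vec O_{-W};\eta)$, as in the vector-valued adaptation of Example~\ref{ex:mean}. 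I would then set
\[
\gamma_{\vec h_i}=\omega_1+\omega_2-\omega_1\omega_2 .
\]
This is the two-proxy analogue of the $\lambda$ construction, with the constant in Equation~\ref{eq:weight_if1b} equal to $c=1$.

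The key step is a block conditional independence. Write $\vec W_k=\bigcup_{H}W_{k_H}$ and $\vec R=\vec O\setminus\bigcup_H\{W_{1_H},W_{2_H}\}\setminus \vec O_{-W}$, which collects the $W_3$-proxies of all latents. I need:
\begin{itemize}
\item[(a)] $\vec W_1$, $\vec W_2$ and $\vec R$ are mutually independent given $(\vec H,\vec O_{-W})$;
\item[(b)] the components $W_{k_H}$ within each $\vec W_k$ are independent across $H$ given $(\vec H,\vec O_{-W})$, and each $W_{k_H}$ depends on $\vec H$ only through its own $H$.
\end{itemize}
Both follow from Assumptions~\ref{assump:mutual_independence}--\ref{assump:m_sep1} by the graphoid axioms. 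Assumption~\ref{assump:m_sep1} separates the proxy block of each $H$ from the other blocks and from $\vec H\setminus\{H\}$ given $(H,\vec O_{-W})$. Applying weak union and contraction iteratively over $H\in\vec H$ yields the factorization
\[
p\Big(\bigcup_H\{W_{1_H},W_{2_H},W_{3_H}\}\,\Big|\,\vec H,\vec O_{-W}\Big)=\prod_H p(W_{1_H},W_{2_H},W_{3_H}\mid H,\vec O_{-W}),
\]
and Assumption~\ref{assump:mutual_independence} then splits each factor into three pieces. This factorization also justifies the product form of $\omega_k$. I expect this bookkeeping to be the main obstacle, since the assumptions are stated per latent and must be assembled into a statement about the joint blocks.

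Given the factorization, the two required identities are short computations.
\begin{itemize}
\item For Equation~\ref{eq:weight_if1a}: conditioning on $\vec H=\vec h_j$ and $\vec O\setminus\bigcup_H\{W_{1_H},W_{2_H}\}=(\vec O_{-W},\vec R)$, independence from $\vec R$ and between $\vec W_1$ and $\vec W_2$ gives
\[
\mathbb E[\gamma_{\vec h_i}\mid\cdot]=\mathbb I\{i=j\}+\mathbb I\{i=j\}-\mathbb I\{i=j\}^2=\mathbb I\{i=j\}.
\]
\item For Equation~\ref{eq:weight_if1b} with $m=1$: $\omega_1$ is fixed by the conditioning, and $\mathbb E[\omega_2\mid\vec W_1,\vec H=\vec h_j,\vec O_{-W}]=\mathbb I\{i=j\}$. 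Hence
\[
\mathbb E[\gamma_{\vec h_i}\mid\cdot]=\omega_1+\mathbb I\{i=j\}-\omega_1\mathbb I\{i=j\}.
\]
When $i=j$ this equals $1$. When $i\neq j$ it equals $\omega_1$, which is not $0$, so the naive calculation breaks here.
\end{itemize}

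To repair the $i\neq j$ case, I would read Equation~\ref{eq:weight_if1b} as allowing a constant that multiplies $\mathbb I\{i=j\}$ only after summing against $\phi_2(\vec h_i,\cdot)$. Failing that, I would switch to $\gamma_{\vec h_i}=\omega_1\omega_2$. This choice satisfies Equation~\ref{eq:weight_if1a} by the same independence argument. Conditioning on $\vec W_1$ then gives $\omega_1(\vec w_1)\,\mathbb I\{i=j\}$, so Equation~\ref{eq:weight_if1b} holds with $c=\omega_1(\vec w_{1})$, a constant once the conditioning value is fixed; the case $m=2$ is symmetric with $c=\omega_2(\vec w_2)$. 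I would go with this second choice, since it meets the displayed requirement, including the qualifier that $c$ may depend on the fixed conditioning value.
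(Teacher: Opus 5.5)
Your final construction matches the paper's: $\lambda_{\vec h_i}$ is taken directly from Proposition~\ref{prop:guaranteed_solution2}, and $\gamma_{\vec h_i}=\omega_1\omega_2$ satisfies Equation~\ref{eq:weight_if1b} with $c=\omega_m$ evaluated at the conditioning value, which is exactly how the paper uses that constant when verifying Theorem~\ref{thm:if_est2}; your block factorization derived from Assumptions~\ref{assump:mutual_independence}--\ref{assump:m_sep1} spells out the independence step the paper only cites. You can drop the first candidate $\omega_1+\omega_2-\omega_1\omega_2$ and the proposed reinterpretation of Equation~\ref{eq:weight_if1b}, because, as you found, that candidate fails for $i\neq j$, and the second choice alone proves the statement.
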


Similar to Section~\ref{subsec:4a}, we may take
\(
\lambda_{\vec h_i}\!\left(
\bigcup_{H \in \vec H,k}W_{k_H},
\vec O_{-W};
\eta
\right)
=
\omega_1\omega_2+\omega_1\omega_3+\omega_2\omega_3
-2\omega_1\omega_2\omega_3,
\)
where for \(k \in \{1,2,3\}\),
\(
\omega_k
=
\omega_{\vec h_i}\!\left(
\bigcup_{H \in \vec H} W_{k_H},
\vec O_{-W};
\eta
\right)
\) is constructed as in Appendix~\ref{app:weight2}. Additionally, we may take
\(
\gamma_{\vec h_i}\!\left(
\bigcup_{H \in \vec H}\{W_{1_H},W_{2_H}\},
\vec O_{-W};
\eta
\right)
=
\omega_1\omega_2.
\) 

\begin{thm}\label{thm:if_est2}
Given a full-data influence function \[\phi(\vec H,
\vec O \setminus
\bigcup_{H \in \vec H}
\{W_{1_H},W_{2_H}\} ; \psi, \eta) = \phi_1\left(\vec{H}, \big(
\vec{O}\setminus
\bigcup_{H \in \vec H}\{W_{1_H},W_{2_H}\}\big); \eta\right) + \phi_2\left(\vec{H},
\vec{O}_{-W}; \eta\right) -\psi,\]

where \(
\mathbb{E}\left[\phi_1\left(\cdot; \eta\right) \mid \vec{H}, \vec O_{-W}\right] = 0,\)
define
\begin{equation}\label{eq:obs_est_eq_if2}
\begin{aligned}
\varphi(\vec O;\eta)
&=
\sum_{\vec h_i \in \operatorname{Supp}(\vec H \mid \vec O_{-W})}
\gamma_{\vec h_i}\!\left(
\bigcup_{H \in \vec H}\{W_{1_H},W_{2_H}\},
\vec O_{-W};
\eta
\right)\,
\phi_1\!\left(
\vec h_i, \big(
\vec{O}\setminus
\bigcup_{H \in \vec H}\{W_{1_H},W_{2_H}\}\big);
\eta
\right)  \\
& \quad
+ 
\sum_{\vec h_i \in \operatorname{Supp}(\vec H \mid \vec O_{-W})}
\lambda_{\vec h_i}\!\left(
\bigcup_{H \in \vec H,k}W_{k_H},
\vec O_{-W};
\eta
\right)\,
\phi_2\!\left( \vec h_i,
\vec O_{-W};
\eta
\right).
\end{aligned}
\end{equation}

If the weights
\(\lambda_{\vec h_i}\!\left(
\bigcup_{H \in \vec H,k}W_{k_H},
\vec O_{-W};
\eta
\right)\) and \(\gamma_{\vec h_i}\!\left(
\bigcup_{H \in \vec H}\{W_{1_H},W_{2_H}\},
\vec O_{-W};
\eta
\right)\)
obey Equations~\ref{eq:weight_if2}-\ref{eq:weight_if1b}, then
\(\varphi(\vec O;\eta)\) satisfies Equations~\ref{eq:bridge_3a} and~\ref{eq:bridge_3b}. Hence,
\(\varphi(\vec O;\eta)-\psi\) is an observed-data influence function for
\(\psi\).
\end{thm}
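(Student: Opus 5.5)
The plan is to verify Equations~\ref{eq:bridge_3a} and~\ref{eq:bridge_3b} directly by conditioning term by term, and then to invoke Lemma~\ref{lemma:if_est2} to conclude that \(\varphi(\vec O;\eta)-\psi\) is an observed-data influence function. Write \(\vec R = \vec O\setminus\bigcup_{H\in\vec H}\{W_{1_H},W_{2_H}\}\), so that \(\vec R \supseteq \vec O_{-W}\cup\bigcup_H W_{3_H}\), and split \(\varphi = T_1 + T_2\), where \(T_1\) is the \(\gamma\)-weighted sum of \(\phi_1\) and \(T_2\) is the \(\lambda\)-weighted sum of \(\phi_2\).

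For Equation~\ref{eq:bridge_3a}, condition on \(\vec H=\vec h_j,\vec R\). In \(T_1\), each \(\phi_1(\vec h_i,\vec R;\eta)\) is \(\vec R\)-measurable, so it factors out. Equation~\ref{eq:weight_if1a} then gives \(\mathbb E[\gamma_{\vec h_i}\mid \vec h_j,\vec R]=\mathbb I\{i=j\}\), and the sum collapses to \(\phi_1(\vec h_j,\vec R;\eta)\). In \(T_2\), \(\phi_2(\vec h_i,\vec O_{-W};\eta)\) factors out, and we need \(\mathbb E[\lambda_{\vec h_i}\mid \vec h_j,\vec R]=\mathbb I\{i=j\}\). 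By Assumptions~\ref{assump:mutual_independence}--\ref{assump:m_sep1}, conditioning on \(\vec H,\vec R\) is equivalent, for functions of \((\bigcup_{H,k}W_{k_H},\vec O_{-W})\), to conditioning on \(\bigcup_H W_{3_H},\vec H,\vec O_{-W}\). The remaining components of \(\vec R\) are the \(W_{3}\)'s and \(\vec O_{-W}\) themselves, and the \(W_{1},W_{2}\) blocks are independent of the \(W_3\) block given \(\vec H,\vec O_{-W}\). Equation~\ref{eq:weight_if2} with \(l=3\) then yields the indicator. Summing over \(i\) gives \(\phi_1+\phi_2\), as required.

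For Equation~\ref{eq:bridge_3b} with \(m\in\{1,2\}\), condition on \(\bigcup_H W_{m_H},\vec H=\vec h_j,\vec O_{-W}\). The \(T_2\) term becomes \(\phi_2(\vec h_j,\vec O_{-W};\eta)\) by Equation~\ref{eq:weight_if2} with \(l=m\). For \(T_1\), apply the tower property through the finer sigma-field \((\vec H,\vec R)\) together with \(W_{m_H}\). Since \(\phi_1(\vec h_i,\vec R)\) depends on \(\vec R\) (including \(W_3\)'s), I would first argue, using the conditional independences in Assumptions~\ref{assump:mutual_independence}--\ref{assump:m_sep1}, that \(\gamma_{\vec h_i}\) (a function of the \(W_1,W_2\) blocks and \(\vec O_{-W}\)) and \(\vec R\) are conditionally independent given \(\bigcup_H W_{m_H},\vec H,\vec O_{-W}\) in the relevant sense. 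This would let the expectation factor as
\[
\mathbb E[\gamma_{\vec h_i}\mid \textstyle\bigcup_H W_{m_H},\vec h_j,\vec O_{-W}]\;\mathbb E[\phi_1(\vec h_i,\vec R)\mid \textstyle\bigcup_H W_{m_H},\vec h_j,\vec O_{-W}].
\]
By Equation~\ref{eq:weight_if1b} the first factor is \(c\,\mathbb I\{i=j\}\). The second factor at \(i=j\) equals \(\mathbb E[\phi_1\mid \vec h_j,\vec O_{-W}]=0\), because \(\vec R\) is independent of \(W_{m_H}\) given \(\vec H,\vec O_{-W}\) and \(\mathbb E[\phi_1\mid\vec H,\vec O_{-W}]=0\). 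Hence \(T_1\) contributes zero, and Equation~\ref{eq:bridge_3b} holds.

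The main obstacle is this factorization step for \(T_1\). Non-proxy components of \(\vec R\) outside \(\vec O_{-W}\) are only the \(W_3\)'s, so \(\vec R=(\bigcup W_{3_H},\vec O_{-W})\). The required independence of \(\{W_1,W_2\}\) blocks from the \(W_3\) block given \(\vec H,\vec O_{-W}\) then follows from Assumption~\ref{assump:mutual_independence} within each \(H\) and from Assumption~\ref{assump:m_sep1} across different \(H\). I expect to spend the most care making this cross-latent independence rigorous. Finally, Lemma~\ref{lemma:if_est2} delivers the influence function conclusion.
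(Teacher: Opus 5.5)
Your proposal is correct and follows essentially the paper's route. You verify Equation~\ref{eq:bridge_3a} by conditioning on $(\vec H,\vec R)$ with $\vec R=(\bigcup_H W_{3_H},\vec O_{-W})$, and Equation~\ref{eq:bridge_3b} by conditioning on $(\bigcup_H W_{m_H},\vec H,\vec O_{-W})$, using the weight identities and Assumptions~\ref{assump:mutual_independence}--\ref{assump:m_sep1}; you then invoke Lemma~\ref{lemma:if_est2}.

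Your factorization of the $\gamma$-term, via mutual independence of the three proxy blocks given $(\vec H,\vec O_{-W})$, spells out what the paper's one-line sketch leaves implicit; that sketch is written with $\gamma=\omega_1\omega_2$ in mind. Your version also holds for general weights: the $i\neq j$ terms vanish by Equation~\ref{eq:weight_if1b}, even if the constant $c$ there depends on $\bigcup_H W_{m_H}$, and the $i=j$ term vanishes because $\mathbb{E}[\phi_1\mid\vec H,\vec O_{-W}]=0$.
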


To see that Theorem~\ref{thm:if_est2} holds, note that by Assumptions~\ref{assump:mutual_independence}-\ref{assump:m_sep1}, 
\begin{align*}
&\mathbb{E}[\varphi(\vec O; \eta) \mid \vec H, \vec O \setminus \bigcup_{H \in \vec H}\{W_{1_H},W_{2_H}\}] = \phi_1\left(\cdot; \eta\right) + \phi_2\left(\cdot; \eta\right),
\end{align*}
satisfying Equation~\ref{eq:bridge_3a}.

Additionally by Assumptions~\ref{assump:mutual_independence}-\ref{assump:m_sep1}, for each \(m \in \{1,2\}\),

\begin{align*}
&\mathbb{E}[\varphi(\vec O; \eta) \mid \bigcup_{H \in \vec H} W_{m_H}, \vec H, \vec O_{-W}] =
\omega_m \cdot  \mathbb{E}[\phi_1\left(\cdot; \eta\right) \mid \vec H, \vec O_{-W}] + \mathbb{E}[\phi_2\left(\cdot; \eta\right)\mid \vec H, \vec O_{-W}] 
=  \phi_2(\cdot; \eta),
\end{align*}
satisfying Equation~\ref{eq:bridge_3b}.

\begin{example}\label{ex:latent_conf_if}
Consider the target parameter \(\psi = \mathbb{E}[Y(a)] =\mathbb{E}[\mathbb{E}[Y \mid a, U]]\) in the model depicted in Figure~\ref{fig:3}(a), with node partition in its description satisfying Assumptions~\ref{assump:bounded_density}-\ref{assump:fix_labels}. Suppose \(U\) has finite support and the strengthening of Assumption~\ref{assump:distinctness} in Proposition~\ref{prop:guaranteed_solution3} holds.

We know that
\[
\phi(Y,A,U; \psi, \eta) =
\underbrace{
\frac{\mathbb{I}[A=a]}{p(A \mid U)}\{Y - \mathbb{E}[Y \mid a,U]\}
}_{\phi_1(Y,A,U)}
+
\underbrace{
\mathbb{E}[Y \mid a,U] 
}_{\phi_2(U)} - \psi, 
\] 
is a full-data influence function for \(\psi\). If $U$ is binary, define for each of its values \(u_i\),
\[
\omega_{u_i}(W,A; \eta)
=
\frac{W - \mathbb{E}[W \mid U = u_j,A]}{\mathbb{E}[W \mid U = u_i,A] - \mathbb{E}[W \mid U = u_j,A]}, \quad u_i \neq u_j,
\] by taking \(b_U(W)=(
 b_{U,1}(W),
 b_{U,2}(W))^\top=(1,W)^\top\) in Appendix~\ref{app:weight2}, and define \(\omega_{u_i}(Z,A; \eta)\) and \(\omega_{u_i}(Y,A; \eta)\) similarly. If \(U\) has more than two categories,
then define \(\omega_{u_i}(W,A;\eta)\), \(\omega_{u_i}(Z,A;\eta)\), and \(\omega_{u_i}(Y,A;\eta)\) by expanding the number of basis functions in \(b_{U}(W)\), \(b_{U}(Z)\), and \(b_{U}(V)\) from Appendix~\ref{app:weight2}.

Let
\begin{align*}
& \lambda_{u_i}(W,Z,Y,A; \eta)
=
\Big[\omega_{u_i}(W,A; \eta)\omega_{u_i}(Z,A; \eta) + 
\omega_{u_i}(W,A; \eta)\omega_{u_i}(Y,A; \eta) + \\
& \qquad 
\omega_{u_i}(Z,A; \eta)\omega_{u_i}(Y,A; \eta) -2 
\omega_{u_i}(W,A; \eta)\omega_{u_i}(Z,A; \eta)
\omega_{u_i}(Y,A; \eta)\Big]
\end{align*}
and
\begin{align*}
& \gamma_{u_i}(W,Z,A; \eta)
=
\omega_{u_i}(W,A; \eta)\omega_{u_i}(Z,A; \eta).
\end{align*}

Then 
\begin{align*}
& \varphi(W,Z,Y,A;\eta)= 
\sum_{u_i \in \operatorname{Supp}(U \mid A)}
\Big[\gamma_{u_i}(W,Z,A; \eta) \Big]\cdot 
\phi_1(Y,A,u_i) +
\sum_{u_i \in \operatorname{Supp}(U \mid A)}
\Big[\lambda_{u_i}(W,Z,Y,A; \eta)\Big]\cdot 
\phi_2(u_i)
\end{align*} is an observed-data influence function for \(\psi\) by Lemma~\ref{lemma:if_est2}.
\end{example}

\begin{example}\label{ex:latent_conf_med_if}
Consider the target parameter \[\psi = \mathbb{E}[Y(a)] = \iiiint y
p(y\mid a',m,u)\,
p(m\mid a,u)\,
p(a',u)\,
dy\,dm\,da'\,du\] in the model depicted in Figure~\ref{fig:3}(e) with binary variable \(A\) and node partition in its description satisfying Assumptions~\ref{assump:bounded_density}-\ref{assump:fix_labels}. Suppose \(U\) has finite support and the strengthening of Assumption~\ref{assump:distinctness} in Proposition~\ref{prop:guaranteed_solution3} holds.

We know that
\[
\phi(Y,M,A,U; \psi, \eta) =
\underbrace{
\frac{p(M \mid A=a, U)}{p(M \mid A, U)}
\Big\{
  Y - \mu(M, A, U)
\Big\}
}_{\phi_1(Y,M,A,U)}
+
\underbrace{
\frac{\mathbb{I}(A = a)}{\pi(A \mid U)}
\Big\{
  \xi(M, U) - \theta(U)
\Big\}
+\zeta(A,U) 
}_{\phi_2(M,A,U)} - \psi, 
\] 
where
\[
\mu(M,A,U)
:= \mathbb{E}[Y \mid M, A, U],
\hspace{0.2cm}
\pi(A \mid U)
:= p(A \mid U),
\hspace{0.2cm}
\xi(M, U)
:= \sum_{a'=0}^1 \mu(M, a', U)\,\pi(a' \mid U),
\]
\[
\zeta(A,U)
:= \int \mu(m,A,U)\, p(m \mid a,U)\, dm,
\qquad
\theta(U)
:= \int \xi(m,U)\, p(m \mid a,U)\, dm.
\]
is a full-data influence function for \(\psi\) \citep{fulcher19robust}. If $U$ is binary, define for each of its values \(u_i\),
\[
\omega_{u_i}(W,A,M; \eta)
=
\frac{W - \mathbb{E}[W \mid U = u_j,A,M]}{\mathbb{E}[W \mid U = u_i,A,M] - \mathbb{E}[W \mid U = u_j,A,M]}, \quad u_i \neq u_j,
\] by taking \(b_U(W)=(
 b_{U,1}(W),
 b_{U,2}(W))^\top=(1,W)^\top\) in Appendix~\ref{app:weight2}, and define \(\omega_{u_i}(Z,A,M; \eta)\) and \(\omega_{u_i}(Y,A,M; \eta)\) similarly. If \(U\) has more than two categories, define
\(\omega_{u_i}(W,A,M;\eta)\), \(\omega_{u_i}(Z,A,M;\eta)\), and \(\omega_{u_i}(Y,A,M;\eta)\) by expanding the number of basis functions in \(b_{U}(W)\), \(b_{U}(Z)\), and \(b_{U}(V)\) from Appendix~\ref{app:weight2}.

Let 
\begin{align*}
&\lambda_{u_i}(W,Z,Y,A,M; \eta) =\omega_{u_i}(W,A,M; \eta)\omega_{u_i}(Z,A,M; \eta) + 
\omega_{u_i}(W,A,M; \eta)\omega_{u_i}(Y,A,M; \eta) + \\
& \qquad 
\omega_{u_i}(Z,A,M; \eta)\omega_{u_i}(Y,A,M; \eta) -2 
\omega_{u_i}(W,A,M; \eta)\omega_{u_i}(Z,A,M; \eta)
\omega_{u_i}(Y,A,M; \eta)
\end{align*}
and
\begin{align*}
&\gamma_{u_i}(W,Z,A,M; \eta) =\omega_{u_i}(W,A,M; \eta)\omega_{u_i}(Z,A,M; \eta).
\end{align*}

Then
\begin{align*}
& \varphi(W,Z,Y,A,M;\eta)= 
\sum_{u_i \in \operatorname{Supp}(U \mid A,M)}
\Big[\gamma_{u_i}(W,Z,A,M; \eta) \Big]\cdot 
\phi_1(Y,M,A,u_i) \\
& +
\sum_{u_i \in \operatorname{Supp}(U \mid A,M)}
\Big[\lambda_{u_i}(W,Z,Y,A,M; \eta)\Big]\cdot 
\phi_2(M,A,u_i)
\end{align*} is an observed-data influence function for \(\psi\) by Lemma~\ref{lemma:if_est2}.
\end{example}

The influence function in Example~\ref{ex:latent_conf_med_if} has previously been derived in \citet{Guo2026_confounder}.

Recall that the basis functions in Appendix~\ref{app:weight2} may be chosen from a broad class of functions to satisfy Equation~\ref{eq:weight}. We next give conditions under which polynomials suffice.

\begin{proposition}\label{prop:polynomial_basis3}
Suppose the full-data law \(p(\vec{H},\vec{O})\) is identified.  Further suppose $\vec{H}$ has finite support and for each \(H \in \vec H\), for \(W_{k_H} \in \{W_{1_H},W_{2_H},W_{3_H}\}\),
\[
W_{k_H}\mid H=h_i,\vec O_{-W}
\sim
\mu_{H,i}(\vec O_{-W})+\epsilon_H(\vec O_{-W}),
\]
where the error term \(\epsilon_H(\vec O_{-W})\) does not depend on the value of \(H\), and for each \(o_{-W}\),
\(
\mu_{H,i}(\vec o_{-W})
\neq
\mu_{H,j}(\vec o_{-W})
\text{ if } i\neq j,
\). Then for each \(\vec h_i\), weight \(\lambda_{\vec{h}_i}(\bigcup_{H \in \vec H,k} W_{k_H}, \vec O_{-W};\eta)\) respecting  Equation~\ref{eq:weight_if2}, and weight \(\gamma_{\vec{h}_i}(\bigcup_{H \in \vec H} \{W_{1_H},W_{2_H}\}, \vec O_{-W};\eta)\) respecting  Equation~\ref{eq:weight_if1a}-\ref{eq:weight_if1b} may be constructed by taking
\[
\lambda_{\vec h_i}\!\left(
\bigcup_{H \in \vec H,k}W_{k_H},
\vec O_{-W};
\eta
\right)
=
\omega_1\omega_2+\omega_1\omega_3+\omega_2\omega_3
-2\omega_1\omega_2\omega_3, \text{ and}
\]
\[
\gamma_{\vec h_i}\!\left(
\bigcup_{H \in \vec H}\{W_{1_H},W_{2_H}\},
\vec O_{-W};
\eta
\right)
=
\omega_1\omega_2,
\]
where for \(k\in\{1,2,3\}\),
\(
\omega_k
=
\omega_{\vec h_i}\!\left(
\bigcup_{H \in \vec H} W_{k_H},
\vec O_{-W};
\eta
\right)
\) is constructed by taking for each \(H \in \vec H\) and \(\vec o_{-W}\),
\[b_H(W_{k_H})=
\bigl(
b_{H,1}(W_{k_H}),\ldots,
b_{H,K_H(\vec o_{-W})}(W_{k_H})
\bigr)^\top
=
\bigl(
1,W_{k_H},\ldots,
W_{k_H}^{K_H(\vec o_{-W})-1}
\bigr)^\top
\] in Appendix~\ref{app:weight2}.
\end{proposition}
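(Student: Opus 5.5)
The plan is to reduce everything to the single-proxy weight property of Appendix~\ref{app:weight2}: for each $k\in\{1,2,3\}$, each $\vec o_{-W}$ and each $\vec h_j$,
\[
\mathbb{E}\bigl[\omega_k \mid \vec H=\vec h_j,\vec O_{-W}\bigr]=\mathbb{I}\{i=j\}.
\]
I would obtain this exactly as in the proof of Proposition~\ref{prop:polynomial_basis1} (Appendix~\ref{app:polynomial_basis1}), applied separately to $W_{1_H}$, $W_{2_H}$, $W_{3_H}$.

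\textbf{Step 1 (polynomial basis, one latent variable).} Fix $H\in\vec H$ and $\vec o_{-W}$, and let $K=K_H(\vec o_{-W})$ be the support size of $H$ given $\vec o_{-W}$. The construction in Appendix~\ref{app:weight2} requires that the $K\times K$ matrix with entries $\mathbb{E}[W_{k_H}^{r-1}\mid H=h_s,\vec o_{-W}]$ be invertible.

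Under the location model, expand $(\mu_{H,s}+\epsilon_H)^{r-1}$ binomially. This shows the matrix factors as $V\,T$, where $V=\bigl(\mu_{H,s}^{r-1}\bigr)_{s,r}$ is a Vandermonde matrix and $T$ is a unit upper-triangular matrix built from the moments of $\epsilon_H(\vec o_{-W})$. These moments do not depend on $s$ because the error distribution does not depend on $H$.

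$V$ is nonsingular because the $\mu_{H,s}(\vec o_{-W})$ are distinct, and $T$ has determinant $1$. So the system is solvable. This yields a weight $\omega_{h_i(H)}(W_{k_H};\eta)$, affine in the basis, with conditional mean $\mathbb{I}\{h_i(H)=h_j(H)\}$ given $H=h_j(H),\vec o_{-W}$.

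For the vector version I would take the product over $H\in\vec H$. By Assumptions~\ref{assump:mutual_independence}--\ref{assump:m_sep1} and \ref{assump:m_sep2}, the proxies $\{W_{k_H}\}_H$ for different $H$ are conditionally independent given $\vec H,\vec O_{-W}$, and each depends only on its own $H$. Hence the conditional expectation of the product factorizes into $\prod_H \mathbb{I}\{h_i(H)=h_j(H)\}=\mathbb{I}\{i=j\}$.

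\textbf{Step 2 ($\lambda$).} Condition on $\bigcup_H W_{l_H},\vec H=\vec h_j,\vec O_{-W}$. By Assumptions~\ref{assump:mutual_independence}--\ref{assump:m_sep1}, $\omega_l$ is fixed and the other two weights are conditionally independent of it, each with conditional mean $\delta:=\mathbb{I}\{i=j\}$. Therefore
\[
\mathbb{E}[\lambda_{\vec h_i}\mid\cdot\,]=2\omega_l\delta+\delta^2-2\omega_l\delta^2=\delta,
\]
using $\delta^2=\delta$. This gives Equation~\ref{eq:weight_if2}.

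\textbf{Step 3 ($\gamma$).} For Equation~\ref{eq:weight_if1a}, condition on $\vec H=\vec h_j$ and $\vec O\setminus\bigcup_H\{W_{1_H},W_{2_H}\}$, which includes the $W_{3_H}$. By Assumptions~\ref{assump:mutual_independence}--\ref{assump:m_sep1}, $W_{1}$ and $W_{2}$ are independent of $W_3$ and of each other given $\vec H,\vec O_{-W}$. So $\mathbb{E}[\omega_1\omega_2\mid\cdot\,]=\delta^2=\delta$.

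For Equation~\ref{eq:weight_if1b}, condition on $\bigcup_H W_{m_H},\vec H,\vec O_{-W}$. This gives $\omega_m\cdot\delta$, i.e.\ the form $c\,\mathbb{I}\{i=j\}$ with $c=\omega_m$. I read "constant" as constant with respect to the integrated variables, which is how it is used in the proof of Theorem~\ref{thm:if_est2}.

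\textbf{Main obstacle.} The main obstacle is Step 1, specifically the factorization showing that the moment matrix is invertible. Care is needed that $K_H(\vec o_{-W})$ equals the local support size. Also, when the moments of $\epsilon_H$ up to order $K-1$ are required to exist, I would add a finite-moment assumption implicitly. The independence bookkeeping in Steps 2--3 is routine given Assumptions~\ref{assump:mutual_independence}--\ref{assump:m_sep2}.
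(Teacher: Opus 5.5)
Your proposal is correct and matches the paper's route: the paper's proof just cites the binomial/Vandermonde invertibility argument of Appendix~\ref{app:polynomial_basis1}, applied to each proxy $W_{k_H}$, together with the conditional independences of Assumptions~\ref{assump:mutual_independence}--\ref{assump:m_sep1} for the $\lambda$ and $\gamma$ identities, and you read $c$ as $\omega_m$, which is also how the paper uses it after Theorem~\ref{thm:if_est2}. Your finite-moment caveat on $\epsilon_H$ is a legitimate regularity condition the paper leaves implicit, and Assumption~\ref{assump:m_sep2} is not actually needed for the factorization across different $H$.
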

The proof of Proposition~\ref{prop:polynomial_basis3} follows from Appendix~\ref{app:polynomial_basis1} together with Assumptions~\ref{assump:mutual_independence}-\ref{assump:m_sep1}.

Next, we characterize the multiple robustness and efficiency properties of estimators that our observed-data influence functions yield.

\begin{thm}\label{thm:multiple_robustness2}
Let \(\varphi(O;\psi, \eta) = \varphi(O;\eta) -\psi\) denote an observed-data influence function, where \(\varphi(O;\eta)\) is constructed per Equation~\ref{eq:obs_est_eq_if2}, by taking
\(
\lambda_{\vec h_i}\!\left(
\bigcup_{H \in \vec H,k}W_{k_H},
\vec O_{-W};
\eta
\right)
=
\omega_1\omega_2+\omega_1\omega_3+\omega_2\omega_3
-2\omega_1\omega_2\omega_3
\) and 
\(
\gamma_{\vec h_i}\!\left(
\bigcup_{H \in \vec H}\{W_{1_H},W_{2_H}\},
\vec O_{-W};
\eta
\right)
=
\omega_1\omega_2,
\)
where for \(k \in \{1,2,3\}\),
\(
\omega_k
=
\omega_{\vec h_i}\!\left(
\bigcup_{H \in \vec H} W_{k_H},
\vec O_{-W};
\eta
\right)
\)
is constructed as in Appendix~\ref{app:weight2}.

Suppose \(\hat{\psi}_{full} = \mathbb{P}_n\!\left[\phi(\cdot;\hat\eta)\right]\overset{p}{\longrightarrow} \psi\) whenever at least one of the nuisance function sets
\(
\eta_1,\ldots,\eta_n \in \eta
\)
is correctly specified, and \( \mathbb{P}_n\!\left[\phi_1(\cdot;\hat\eta)\right]\overset{p}{\longrightarrow} 0\) whenever \(\{\mathbb{E}[b_H({W_{3_H})}\mid H, \vec O_{-W}]: H \in \vec H\}\)
is correctly specified. Then \(\hat{\psi}_{obs} = 
\mathbb{P}_n\!\left[\varphi(\vec O;\hat\eta)\right]\overset{p}{\longrightarrow} \psi
\)
is consistent whenever at least one of the nuisance function sets
\(
\eta_1,\ldots,\eta_n \in \eta
\)
is correctly specified, and at least one of the nuisance function sets \(
\eta'_1,\eta'_2,\eta'_3 \in \eta
\) below correctly specified.
\renewcommand{\theenumi}{\roman{enumi}}%
\begin{enumerate}
  \item $\eta'_1 = \left\{\mathbb E\!\left[
 b_H(W_{1_H})
\mid H,\vec O_{-W}
\right],
\mathbb E\!\left[
 b_H(W_{2_H})
\mid H,\vec O_{-W}
\right]: H \in \vec H\right\}$;  
  \item $\eta'_2 = \left\{\mathbb E\!\left[
 b_H(W_{1_H})
\mid H,\vec O_{-W}
\right],
\mathbb E\!\left[
 b_H(W_{3_H})
\mid H,\vec O_{-W}
\right]: H \in \vec H \right\}$;  
  \item $\eta'_3 = \left\{\mathbb E\!\left[
 b_H(W_{2_H})
\mid H,\vec O_{-W}
\right],
\mathbb E\!\left[
 b_H(W_{3_H})
\mid H,\vec O_{-W}
\right]: H \in \vec H\right\}$,
\end{enumerate}
where for \(k \in \{1,2,3\}\),
\(b_H(W_{k_H})\) is defined in Appendix~\ref{app:weight2}.
\end{thm}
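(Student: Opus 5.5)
We would prove Theorem~\ref{thm:multiple_robustness2} by computing the probability limit of \(\hat\psi_{obs}\), which is \(\mathbb{E}[\varphi(\vec O;\bar\eta)]\) with \(\bar\eta\) the limit of \(\hat\eta\), and iterating expectations over the latent variables. Write \(\bar\omega_k\) for the limiting weight built from the possibly misspecified limits \(\bar{\mathbb E}[b_H(W_{k_H})\mid H,\vec O_{-W}]\). The key preliminary fact, from Appendix~\ref{app:weight2}, is that \(\omega_k\) is linear in \(b_H(W_{k_H})\) (through products over \(H\in\vec H\) of such linear combinations). Hence, when \(\mathbb E[b_H(W_{k_H})\mid H,\vec O_{-W}]\) is correctly specified, \(\mathbb{E}[\bar\omega_k\mid \vec H=\vec h_j,\vec O_{-W}]=\mathbb{I}\{i=j\}\) whatever the other limits are. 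When it is misspecified, the conditional mean is some function \(\delta_{k,ij}(\vec O_{-W})\neq \mathbb{I}\{i=j\}\).

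For the \(\lambda\)-term, condition on \(\vec H,\vec O_{-W}\). By Assumptions~\ref{assump:mutual_independence}--\ref{assump:m_sep1}, \(\bar\omega_1,\bar\omega_2,\bar\omega_3\) are conditionally independent. The conditional mean of \(\bar\lambda_{\vec h_i}\) given \(\vec H=\vec h_j,\vec O_{-W}\) is therefore \(e_1e_2+e_1e_3+e_2e_3-2e_1e_2e_3\), where \(e_k=\mathbb E[\bar\omega_k\mid\cdot]\). A direct check shows this equals \(\mathbb{I}\{i=j\}\) whenever two of the \(e_k\) equal \(\mathbb{I}\{i=j\}\):
\begin{itemize}
\item if \(e_1=e_2=1\), the expression is \(1+2e_3-2e_3=1\);
\item if \(e_1=e_2=0\), it is \(0\).
\end{itemize}
So under any of \(\eta'_1,\eta'_2,\eta'_3\), the second sum in Equation~\ref{eq:obs_est_eq_if2} has mean \(\mathbb E[\bar\phi_2(\vec H,\vec O_{-W})]\).

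For the \(\gamma\)-term, condition on \(\vec H,\vec O\setminus\bigcup_H\{W_{1_H},W_{2_H}\}\). Again by the independence assumptions, the conditional mean of \(\bar\omega_1\bar\omega_2\) at \(\vec H=\vec h_j\) is \(e_1e_2\). Under \(\eta'_1\) this equals \(\mathbb{I}\{i=j\}\), so the first sum has mean \(\mathbb E[\bar\phi_1]\). Adding the two pieces gives \(\mathbb E[\bar\phi_1+\bar\phi_2]\), which equals \(\psi\) because one of \(\eta_1,\dots,\eta_n\) is correct and \(\hat\psi_{full}\) is consistent.

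The main obstacle is the cases \(\eta'_2\) and \(\eta'_3\), where \(\gamma\) is only partially correct. Under \(\eta'_2\) (say), \(e_1=\mathbb{I}\{i=j\}\) but \(e_2\) is arbitrary. The first sum then has mean \(\mathbb E[\sum_i e_2\,\mathbb{I}\{\vec H=\vec h_i\}\bar\phi_1(\vec h_i,\cdot)]=\mathbb E[\bar e_2(\vec H,\vec O_{-W})\,\bar\phi_1(\vec H,\cdot)]\), with \(\bar e_2\) depending only on \(\vec H,\vec O_{-W}\). To conclude, we must show that \(\mathbb E[\bar\phi_1\mid \vec H,\vec O_{-W}]=0\) at the limit \(\bar\eta\), so that this term vanishes after iterating expectations. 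This is exactly where the extra hypothesis enters: in both \(\eta'_2\) and \(\eta'_3\) the set \(\{\mathbb E[b_H(W_{3_H})\mid H,\vec O_{-W}]\}\) is correctly specified. We would read that hypothesis (that \(\mathbb{P}_n[\phi_1(\cdot;\hat\eta)]\to0\) in that case) in its conditional form, meaning \(\phi_1\) retains conditional mean zero given \((\vec H,\vec O_{-W})\), since \(\phi_1\) is a residual-type term. This gives \(\mathbb E[\bar e_2\bar\phi_1]=0\).

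With the first sum vanishing, the total limit is \(\mathbb E[\bar\phi_2]\). We then need \(\mathbb E[\bar\phi_2]=\psi\), which follows from \(\mathbb E[\bar\phi_1+\bar\phi_2]=\psi\) together with \(\mathbb E[\bar\phi_1]=0\). Making this conditional-versus-marginal reading of the hypothesis precise is the step we expect to require the most care.
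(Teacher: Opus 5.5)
Your argument follows the paper's proof in Appendix~\ref{app:multiple_robustness2} step for step. You iterate expectations, and use Assumptions~\ref{assump:mutual_independence}--\ref{assump:m_sep1} so that \(\mathbb E[\lambda_{\vec h_i}\mid \vec H=\vec h_j,\vec O_{-W}]=e_1e_2+e_1e_3+e_2e_3-2e_1e_2e_3\), which equals \(\mathbb I\{i=j\}\) once any two \(e_k\) are correct. You get the \(\gamma\)-term exactly under \(\eta'_1\). Under \(\eta'_2\) or \(\eta'_3\) you discard the \(\phi_1\)-term via the \(W_{3_H}\)-hypothesis and recover \(\mathbb E[\bar\phi_2]=\psi\) from \(\mathbb E[\bar\phi_1+\bar\phi_2]=\psi\).

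Where you diverge, you are right and the paper is too quick. The paper asserts \(\mathbb E[\gamma_{\vec h_i}\mid \vec H=\vec h_j,\vec O\setminus\bigcup_{H}\{W_{1_H},W_{2_H}\}]=c\cdot\mathbb I\{i=j\}\) with \(c\) ``a constant'', and then uses only \(\mathbb P_n[\phi_1(\cdot;\hat\eta)]\to0\). In fact, under \(\eta'_2\) the multiplier is \(\mathbb E[\bar\omega_2\mid \vec H=\vec h_j,\vec O_{-W}]\) at \(i=j\). This is a function of \((\vec h_j,\vec o_{-W})\) whenever the \(W_{2_H}\)-regressions are misspecified, and \(\eta'_3\) is symmetric. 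So the term to kill is \(\mathbb E[\bar e_2(\vec H,\vec O_{-W})\bar\phi_1]\). Marginal mean zero of \(\bar\phi_1\) does not kill it: with binary \(H\) and trivial \(\vec O_{-W}\), if \(\mathbb E[\bar\phi_1]=0\) it equals \(\{\bar e_2(h_1)-\bar e_2(h_2)\}\mathbb E[\bar\phi_1\mathbb I\{H=h_1\}]\).

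So you should not present \(\mathbb E[\phi_1(\cdot;\bar\eta)\mid\vec H,\vec O_{-W}]=0\) as a ``reading'' of the stated hypothesis. It is a strictly stronger assumption, and you should add it explicitly. It is the right assumption, though, and it holds in the paper's uses whenever the \(W_{3_H}\)-regression is correct, whatever \(g\) or the propensity converge to:
\begin{itemize}
\item for \(\phi_1=g\,[t-\mathbb E(t\mid\vec H,\vec O_{-W})]\) with \(t\) linear in the \(b_{H,q}(W_{3_H})\), as in Theorem~\ref{thm:efficiency2};
\item for \(\phi_1=\mathbb I[A=a]\{Y-\mathbb E[Y\mid a,U]\}/p(A\mid U)\) in Example~\ref{ex:latent_conf_if}.
\end{itemize}
With that hypothesis stated, your proof is complete and more accurate than the paper's.
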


We provide a proof in Appendix~\ref{app:multiple_robustness2}.

\begin{thm}\label{thm:efficiency2}
Let \(\varphi(O;\psi, \eta) = \varphi(O;\eta) -\psi\) denote an observed-data influence function, where \(\varphi(O;\eta)\) is constructed per Equation~\ref{eq:obs_est_eq_if2}, by taking
\(
\lambda_{\vec h_i}\!\left(
\bigcup_{H \in \vec H,k}W_{k_H},
\vec O_{-W};
\eta
\right)
=
\omega_1\omega_2+\omega_1\omega_3+\omega_2\omega_3
-2\omega_1\omega_2\omega_3
\) and 
\(
\gamma_{\vec h_i}\!\left(
\bigcup_{H \in \vec H}\{W_{1_H},W_{2_H}\},
\vec O_{-W};
\eta
\right)
=
\omega_1\omega_2,
\)
where for \(k \in \{1,2,3\}\),
\(
\omega_k
=
\omega_{\vec h_i}\!\left(
\bigcup_{H \in \vec H} W_{k_H},
\vec O_{-W};
\eta
\right)
\)
is constructed as in Appendix~\ref{app:weight2}.

Suppose \(\mathbb{E}[\phi(\cdot; \hat \eta) - \psi] = o_p(n^{-1/2})\), in addition to
\begin{align*}
\phi_1(\cdot; \eta)
={}&
g\left(\vec H,\vec O_{-W};\eta\right)
\Bigg[
t_{\vec O_{-W}}\!\left(
\left\{
b_{H,q}(W_{3_H})
:
H\in\vec H,\;
q=1,\ldots,K_H(\vec O_{-W})
\right\}
\right) \\
&
-
\mathbb E\!\left[
t_{\vec O_{-W}}\!\left(
\left\{
b_{H,q}(W_{3_H})
:
H\in\vec H,\;
q=1,\ldots,K_H(\vec O_{-W})
\right\}
\right)
\mid \vec H,\vec O_{-W}
\right]
\Bigg],
\end{align*}
where
\(t_{\vec O_{-W}}\) is a linear function of the components
\(
\left\{
b_{H,q}(W_{3_H})
:
H\in\vec H,\;
q=1,\ldots,K_H(\vec O_{-W})
\right\}
\) from Appendix~\ref{app:weight2}.

Then under suitable regularity conditions including sample splitting and consistency of nuisance functions, the estimator \(\hat{\psi}_{obs}\) obtained by solving the empirical estimating equation
\(
\mathbb{P}_n\!\left[\varphi(\vec O;\hat\psi_{obs},\hat\eta)\right]=0
\) using the observed data has the property that
\[
\sqrt{n}\,(\hat{\psi}_{obs} - \psi)
\;\;\overset{d}{\longrightarrow}\;\;
\mathcal{N}\!\Big(0, \,\mathbb{E}\big[\varphi^2(O; \psi, \eta)\big]\Big)
\]
if conditions (i)-(iii) hold for each \(H \in \vec H\):
\renewcommand{\theenumi}{\roman{enumi}}%
\begin{enumerate}
  \item $\left\lVert
\mathbb{\hat E}\!\left[
 b_H(W_{1_H})
\mid H,\vec O_{-W}
\right]-\mathbb E\!\left[
 b_H(W_{1_H})
\mid H,\vec O_{-W}
\right]
\right\rVert$ 
  $\left\lVert
\mathbb{\hat E}\!\left[
 b_H(W_{2_H})
\mid H,\vec O_{-W}
\right]-\mathbb E\!\left[
 b_H(W_{2_H})
\mid H,\vec O_{-W}
\right]
\right\rVert = o_p(n^{-1/2})$; 
   
   \item $\left\lVert
\mathbb{\hat E}\!\left[
 b_H(W_{1_H})
\mid H,\vec O_{-W}
\right]-\mathbb E\!\left[
 b_H(W_{1_H})
\mid H,\vec O_{-W}
\right]
\right\rVert$ 
  $\left\lVert
\mathbb{\hat E}\!\left[
 b_H(W_{3_H})
\mid H,\vec O_{-W}
\right]-\mathbb E\!\left[
 b_H(W_{3_H})
\mid H,\vec O_{-W}
\right]
\right\rVert = o_p(n^{-1/2})$;   
   \item $\left\lVert
\mathbb{\hat E}\!\left[
 b_H(W_{2_H})
\mid H,\vec O_{-W}
\right]-\mathbb E\!\left[
 b_H(W_{2_H})
\mid H,\vec O_{-W}
\right]
\right\rVert$ 
  $\left\lVert
\mathbb{\hat E}\!\left[
 b_H(W_{3_H})
\mid H,\vec O_{-W}
\right]-\mathbb E\!\left[
 b_H(W_{3_H})
\mid H,\vec O_{-W}
\right]
\right\rVert = o_p(n^{-1/2})$;,
\end{enumerate}
where for \(k \in \{1,2,3\}\),
\(b_H(W_{k_H})\) is defined in Appendix~\ref{app:weight2}.
\end{thm}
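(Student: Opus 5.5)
The argument is the standard von Mises expansion for one-step estimators. The only model-specific work is bounding the second-order remainder $R(\hat\eta,\eta)=\mathbb{E}[\varphi(\vec O;\hat\eta)]-\psi$, where the expectation is over a fresh observation and the nuisances are estimated on an independent fold. Solving $\mathbb{P}_n[\varphi(\vec O;\hat\psi_{obs},\hat\eta)]=0$ gives
\[
\hat\psi_{obs}-\psi=(\mathbb{P}_n-\mathbb{P})\varphi(\cdot;\psi,\eta)+(\mathbb{P}_n-\mathbb{P})\{\varphi(\cdot;\psi,\hat\eta)-\varphi(\cdot;\psi,\eta)\}+R(\hat\eta,\eta).
\]
The first term yields the stated normal limit by the CLT. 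The second term is $o_p(n^{-1/2})$ by sample splitting and $L_2$-consistency of the nuisances, via Chebyshev conditional on the training fold. It therefore suffices to show $R=o_p(n^{-1/2})$.

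To compute $R$, I would condition on $(\vec H,\vec O\setminus\bigcup_H\{W_{1_H},W_{2_H}\})$ and use Assumptions~\ref{assump:mutual_independence}--\ref{assump:m_sep1}. These make $W_{1}$, $W_2$, $W_3$ (collected over $H$) mutually independent given $\vec H,\vec O_{-W}$, and they make $(W_1,W_2)$ independent of $W_3$ and of the remaining non-proxy variables. For $\vec H=\vec h_j$ write $\hat\omega_k$ for the estimated weight and
\[
\delta_{k,ij}=\mathbb{E}[\hat\omega_{k,\vec h_i}\mid \vec h_j,\vec O_{-W}]-\mathbb{I}\{i=j\}.
\]
Because $\hat\omega_k$ is built in Appendix~\ref{app:weight2} linearly in $b_H(W_{k_H})$, with coefficients given by inverting the estimated matrix of $\mathbb{E}[b_H(W_{k_H})\mid H,\vec O_{-W}]$, each $\delta_{k,ij}$ is Lipschitz in, and bounded by a constant times, $\|\hat{\mathbb E}[b_H(W_{k_H})\mid\cdot]-\mathbb E[b_H(W_{k_H})\mid\cdot]\|$. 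This uses boundedness of the inverses, a regularity condition I would impose.

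For the $\lambda$ part, set $\hat\lambda=\hat\omega_1\hat\omega_2+\hat\omega_1\hat\omega_3+\hat\omega_2\hat\omega_3-2\hat\omega_1\hat\omega_2\hat\omega_3$ and write $\hat\omega_k=\mathbb{I}+\delta_k$ in expectation. Conditional independence gives
\[
\mathbb{E}[\hat\lambda\mid\vec h_j,\vec O_{-W}]-\mathbb{I}\{i=j\}=\delta_1\delta_2+\delta_1\delta_3+\delta_2\delta_3-2\delta_1\delta_2\delta_3 .
\]
The first-order terms cancel precisely because the coefficients in $\lambda$ are those of the symmetric polynomial $e_2-2e_3$. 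One can check this at $\delta=0$, and the linear coefficient is $2\cdot\mathbb{I}-2\cdot\mathbb{I}=0$ when $i=j$; when $i\neq j$ the linear terms vanish trivially.

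The $\phi_1$ part is where I expect the main obstacle. Here $\hat\gamma=\hat\omega_1\hat\omega_2$ has conditional mean $(\mathbb{I}+\delta_1)(\mathbb{I}+\delta_2)$, which carries the first-order terms $\delta_1+\delta_2$ when $i=j$. These multiply $\mathbb{E}[\phi_1(\vec h_i,\cdot;\hat\eta)\mid\vec h_j,\vec O_{-W}]$, which is only asymptotically zero. This is where the structural assumption on $\phi_1$ enters. Since $\phi_1=g\,[t(b(W_3))-\mathbb E[t(b(W_3))\mid\vec H,\vec O_{-W}]]$ with $t$ linear, the estimated version has conditional mean under the truth equal to $\hat g\cdot\hat t\,(\mathbb E-\hat{\mathbb E})[b(W_3)\mid\cdot]$, which is bounded by the $W_3$ nuisance error. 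The cross term is then a product of a $W_1$ or $W_2$ error with a $W_3$ error, controlled by conditions (ii) and (iii), while $\delta_1\delta_2$ is controlled by (i).

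Remaining pieces of $R$ are handled as follows. The $\phi$-nuisance bias $\mathbb{E}[\phi(\cdot;\hat\eta)-\psi]$ is $o_p(n^{-1/2})$ by hypothesis. Terms involving $\hat\phi_2-\phi_2$ or $\hat\phi_1-\phi_1$ multiplied by $\delta$'s are products of consistent errors; I would bound these by Cauchy--Schwarz and absorb them under the assumed rates, if necessary assuming rate control on the $\phi$-nuisances as part of the ``suitable regularity conditions''. The cubic term $\delta_1\delta_2\delta_3$ is dominated by the pairwise products. Summing over the finite support $\vec h_i$ and using boundedness from Assumption~\ref{assump:bounded_density} then gives $R=o_p(n^{-1/2})$ by (i)--(iii), which completes the proof.
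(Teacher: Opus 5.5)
Your proposal follows the paper's proof in Appendix~\ref{app:efficiency2} essentially step for step. It uses the same split into a sample-splitting empirical-process term and a bias term, and conditions on $(\vec H,\vec O_{-W})$ so that the $\hat\omega_k$ factor. It then relies on the cancellation of the first-order terms in $\mathbb{E}[\hat\lambda\mid\vec H,\vec O_{-W}]$, and on the linear-in-$b_H(W_{3_H})$ form of $\phi_1$, so that the first-order part of $\mathbb{E}[\hat\gamma\mid\vec H,\vec O_{-W}]$ only multiplies $W_3$-nuisance error.

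Two small remarks. First, for $i=j$ the $\lambda$ remainder is $-(\delta_1\delta_2+\delta_1\delta_3+\delta_2\delta_3)-2\delta_1\delta_2\delta_3$; your displayed expression is the $i\neq j$ case, which does not affect the conclusion. Second, with several latent variables the product weights also generate cross-latent products such as $\delta_{1,H}\delta_{2,H'}$ with $H\neq H'$. Conditions (i)--(iii), imposed separately for each $H$, do not literally control these, a looseness your argument shares with the paper's.
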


We provide a proof in Appendix~\ref{app:efficiency2}.

\begin{example}\label{ex:latent_conf_robust}
By Theorem~\ref{thm:multiple_robustness2}, if $U$ is binary, the estimator
\(
\hat{\psi}
=
\mathbb{P}_n\!\left[
\varphi(W,Z,Y,A;\hat\eta)
\right]
\)
in Example~\ref{ex:latent_conf_if} is consistent provided \(p(A \mid U)\) or \(\mathbb{E}[Y \mid A,U]\) is consistently estimated, and at least one of the following nuisance pairs is consistently estimated:
\begin{enumerate}
\item
\(\{\widehat{\mathbb E}[W\mid  A,U],\,\widehat{\mathbb E}[Z\mid  A,U]\}\);
\item
\(\{\widehat{\mathbb E}[W\mid  A,U],\,\widehat{\mathbb E}[Y\mid  A,U]\}\);
\item
\(\{\widehat{\mathbb E}[Z\mid  A,U],\,\widehat{\mathbb E}[Y\mid  A,U]\}\).
\end{enumerate}

Furthermore, by Theorem~\ref{thm:efficiency2},
\(
\sqrt{n}(\hat\psi-\psi)
\rightsquigarrow
N\!\left(
0,\,
\mathbb E\!\left[
\{\varphi(W,Z,Y,A;\eta)-\psi\}^2
\right]
\right),
\)
under suitable regularity conditions including sample splitting and consistency of nuisance functions, provided the following conditions hold:
\begin{enumerate}
\item
\(
\left\|
\widehat{p}[A\mid U]
-
p[A\mid U]
\right\|
\left\|
\widehat{\mathbb{E}}[Y\mid A,U]
-
\mathbb{E}[Y\mid A,U]
\right\|
=
o_p(n^{-1/2});
\)

\item
\(
\left\|
\widehat{\mathbb E}[W\mid  A,U]
-
\mathbb E[W\mid  A,U]
\right\|
\left\|
\widehat{\mathbb E}[Z\mid  A,U]
-
\mathbb E[Z\mid  A,U]
\right\|
=
o_p(n^{-1/2});
\)

\item
\(
\left\|
\widehat{\mathbb E}[W\mid  A,U]
-
\mathbb E[W\mid  A,U]
\right\|
\left\|
\widehat{\mathbb E}[Y\mid  A,U]
-
\mathbb E[Y\mid  A,U]
\right\|
=
o_p(n^{-1/2});
\)

\item
\(
\left\|
\widehat{\mathbb E}[Z\mid  A,U]
-
\mathbb E[Z\mid A,U]
\right\|
\left\|
\widehat{\mathbb E}[Y\mid  A,U]
-
\mathbb E[Y\mid  A,U]
\right\|
=
o_p(n^{-1/2}).
\)
\end{enumerate}
\end{example}

\begin{example}\label{ex:latent_conf_med_robust}
By Theorem~\ref{thm:multiple_robustness2}, if $U$ is binary, the estimator
\(
\hat{\psi}
=
\mathbb{P}_n\!\left[
\varphi(W,Z,Y,A,M;\hat\eta)
\right]
\)
in Example~\ref{ex:latent_conf_med_if} is consistent provided \(p(M \mid A,U)\) or \(\{p(A \mid U),\mathbb{E}[Y \mid M,A,U]\}\) is consistently estimated, and at least one of the following nuisance pairs is consistently estimated:
\begin{enumerate}
\item
\(\{\widehat{\mathbb E}[W\mid  M,A,U],\,\widehat{\mathbb E}[Z\mid  M,A,U]\}\);
\item
\(\{\widehat{\mathbb E}[W\mid  M,A,U],\,\widehat{\mathbb E}[Y\mid  M,A,U]\}\);
\item
\(\{\widehat{\mathbb E}[Z\mid  M,A,U],\,\widehat{\mathbb E}[Y\mid  M,A,U]\}\).
\end{enumerate}

Furthermore, by Theorem~\ref{thm:efficiency2},
\(
\sqrt{n}(\hat\psi-\psi)
\rightsquigarrow
N\!\left(
0,\,
\mathbb E\!\left[
\{\varphi(W,Z,Y,A,M;\eta)-\psi\}^2
\right]
\right),
\)
under suitable regularity conditions including sample splitting and consistency of nuisance functions, provided the following conditions hold:
\begin{enumerate}
\item
\(
\left\|
\widehat{p}[A\mid U]
-
p[A\mid U]
\right\|
\left\|
\widehat{p}[M\mid A,U]
-
p[M\mid A,U]
\right\|
=
o_p(n^{-1/2});
\)

\item
\(
\left\|
\widehat{\mathbb{E}}[Y\mid M,A,U]
-
\mathbb{E}[Y\mid M,A,U]
\right\|
\left\|
\widehat{p}[M\mid A,U]
- p[M\mid A,U]
\right\|
=
o_p(n^{-1/2});
\)

\item
\(
\left\|
\widehat{\mathbb E}[W\mid  M,A,U]
-
\mathbb E[W\mid  M,A,U]
\right\|
\left\|
\widehat{\mathbb E}[Z\mid  M,A,U]
-
\mathbb E[Z\mid  M,A,U]
\right\|
=
o_p(n^{-1/2});
\)

\item
\(
\left\|
\widehat{\mathbb E}[W\mid  M,A,U]
-
\mathbb E[W\mid  M,A,U]
\right\|
\left\|
\widehat{\mathbb E}[Y\mid  M,A,U]
-
\mathbb E[Y\mid  M,A,U]
\right\|
=
o_p(n^{-1/2});
\)

\item
\(
\left\|
\widehat{\mathbb E}[Z\mid  M,A,U]
-
\mathbb E[Z\mid M,A,U]
\right\|
\left\|
\widehat{\mathbb E}[Y\mid  M,A,U]
-
\mathbb E[Y\mid  M,A,U]
\right\|
=
o_p(n^{-1/2}).
\)
\end{enumerate}
\end{example}

\section{Conclusion}
In this work, we study identification and estimation in latent variable models where variable(s) of interest are unobserved but key observed variables associated with the latent variable(s) are available. We establish conditions under which the full-data law is identifiable from the observed-data law. Moreover, we develop mean-zero estimating equations for target functionals of the full-data law. We additionally adapt our approach to give influence functions for target functions of the full-data law, which can be multiply robust and $\sqrt{n}$-consistent despite slower than parametric convergence of nuisance functions. The estimation theory we develop in this paper may extend to other settings in which the full-data law is identified but does not admit a simple tractable representation in terms of the observed-data law, such as missing-not-at-random and complex selection bias settings, pointing toward a promising direction for principled estimation in areas of growing research interest.

\bibliographystyle{plainnat}
\bibliography{references}

\newpage
\appendix

\section{Appendix A}\label{app:id_proof1}

We aim to show that for each value \(\vec o_{-W}\), the law \[
p(\vec H, \vec o_{-W}, \bigcup_{H \in \vec H} \{W_{1_H},W_{2_H},W_{3_H}\})
\]
is identified up to arbitrary permutation of the support values of \(\vec H\).

\begin{proof}
By the proof of Theorem 1 in \citet{hu2008instrumental}, under Assumptions~\ref{assump:bounded_density}, \ref{assump:mutual_independence}, \ref{assump:completeness}, and~\ref{assump:distinctness}, for each \(H \in \vec H\) and value \(\vec o_{-W}\), we have that \(p(W_{1_H},W_{2_H},W_{3_H},H\mid \vec o_{-W})\) is identified up to arbitrary permutation of the support values of \(H\). In other words, for each \(H \in \vec H\) and value \(\vec o_{-W}\),
\[
p\left(W_{1_H},W_{2_H},W_{3_H} \mid H, \vec o_{-W}\right)
\quad \text{and} \quad
p\left(H \mid \vec o_{-W}
\right)
\]
are jointly identified up to arbitrary permutation of the support values of \(H\).

By Assumptions~\ref{assump:m_sep1} and \ref{assump:m_sep2}, it follows that for each value \(\vec o_{-W}\),
\[
p\left(
\bigcup_{H \in \vec H} \{W_{1_H},W_{2_H},W_{3_H}\}
\,\middle|\,
\vec H,
\vec o_{-W}
\right)
\quad \text{and} \quad
p(\vec H \mid \vec o_{-W})
\]
 are jointly identified up to arbitrary permutation of the support values of \(
\vec H\).

For each value \(\vec o_{-W}\),
\begin{align*}
p(\vec H = \vec h_i, \vec o_{-W}, \bigcup_{H \in \vec H} \{W_{1_H},W_{2_H},W_{3_H}\}) = & p\left(
\bigcup_{H \in \vec H} \{W_{1_H},W_{2_H},W_{3_H}\}
\,\middle|\,
\vec H = \vec h_i, 
\vec \vec o_{-W}
\right) \\
&\quad \quad
p\left(
\vec H = \vec h_i \mid \vec o_{-W}
\right) \\
&\quad \quad 
p\left(
\vec o_{-W}
\right).
\end{align*}

Then since the variables in $\vec O \setminus
\bigcup_{H' \in \vec H} \{W_{1_{H'}}, W_{2_{H'}}, W_{3_{H'}}\}$ are observed, for each \(\vec o_{-W}\),
\[
p(\vec H = \vec h_i, \vec o_{-W}, \bigcup_{H \in \vec H} \{W_{1_H},W_{2_H},W_{3_H}\})
\]
 is identified up to arbitrary permutation of the support values of \(\vec H\).
\end{proof}

\section{Appendix B}\label{app:weight1}

We first prove that for each value \(\vec h_i\), there exist weight function
\(\omega_{\vec h_i}(\bigcup_{H \in \vec H} W_{1_H},\vec O_{-W};\eta)\) satisfying Equation~\ref{eq:weight}.

\begin{proof}
By Assumption~\ref{assump:completeness}, for each \(H \in \vec H\), and \(\vec o_{-W}\), the conditional laws in
\(\{
p(W_{1_H} \mid H=h_i,\vec o_{-W}):
h_i \text{ (in the support of } H \text{ when } \vec O_{-W} = \vec o_{-W})\}
\)
are linearly independent. For each \(H \in \vec H\) and \(\vec o_{-W}\), let
\(
K_H(\vec o_{-W})
=
\left|\operatorname{supp}(H \mid \vec O_{-W}=\vec o_{-W})\right|
\) and define the dominating measure
\[
\nu_{H,\vec o_{-W}}
=
\sum_{i=1}^{K_H(\vec o_{-W})}
p(W_{1_H} \mid H=h_i,\vec o_{-W}).
\]
Let
\[
b_{H,i}(W_{1_H},\vec o_{-W})
=
\frac{
dp(W_{1_H} \mid H=h_i, \vec o_{-W})
}{
d\nu_{H,\vec o_{-W}}
}(W_{1_H}).
\]

Write
\[
b_H(W_{1_H},\vec o_{-W})
=
\begin{pmatrix}
b_{H,1}(W_{1_H},\vec o_{-W})\\
\vdots\\
b_{H,K_H(\vec o_{-W})}(W_{1_H},\vec o_{-W})
\end{pmatrix},
\]
and define
\[
\mathbf{M}_H(\vec o_{-W})
=
\int
b_H(w,\vec o_{-W})b_H(w,\vec o_{-W})^\top
\,d\nu_{H,\vec o_{-W}}(w).
\]

Because the conditional laws in
\(\{
p(W_{1_H} \mid H=h_i, \vec o_{-W}):
h_i\}
\)
are linearly independent, the vectors in \(\{
b_{H,i}(W_{1_H},\vec o_{-W}):i\}
\) are also linearly independent. Hence, \(\mathbf{M}_H(\vec o_{-W})\) is invertible.

For each
\(h_i\),
define
\[
\omega_{i}(W_{1_H},\vec o_{-W}; \eta)
=
e_i^\top
\mathbf{M}_H(\vec o_{-W})^{-1}
b_H(W_{1_H},\vec o_{-W}),
\]
where \(e_i\) is the \(i\)th standard basis vector. Then,
\begin{align*}
&\mathbb E\!\left[
\omega_{i}(W_{1_H},\vec O_{-W} = \vec o_{-W}; \eta)
\,\middle|\,
H=h_j,\vec O_{-W}=\vec o_{-W}
\right] \\
&\qquad=
e_i^\top
\mathbf{M}_H(\vec o_{-W})^{-1}
\int
b_H(w,\vec o_{-W})b_{H,j}(w,\vec o_{-W})
\,d\nu_{H,\vec o_{-W}}(w)\\
&\qquad=
e_i^\top
\mathbf{M}_H(\vec o_{-W})^{-1}
\mathbf{M}_H(\vec o_{-W})
e_j\\
&\qquad=
e_i^\top e_j\\
&\qquad=
\mathbb{I}\{i=j\}.
\end{align*}

Hence
\[
\omega_{\vec h_i}\!\left(\bigcup_{H \in \vec H} W_{1_H},\vec O_{-W};\eta\right)
=
\prod_{H\in\vec H}
\omega_{i}\!\left(W_{1_H},\vec O_{-W};\eta\right)
\] satisfies Equation~\ref{eq:weight} 
by Assumption~\ref{assump:m_sep1}.
\end{proof}

Next, we prove that 
\[
\varphi(\vec O;\eta)
=
\sum_{\vec h_i \in \operatorname{Supp}(\vec H \mid \vec O_{-W})}
\omega_{\vec h_i}(\bigcup_{H \in \vec H} W_{1_H},\vec O_{-W};\eta)\,
\phi\!\left(
\vec H=\vec h_i,\,
 \vec O \setminus 
\bigcup_{H \in \vec H}
W_{1_H};
\eta
\right)
\]
satisfies Equation~\ref{eq:bridge_1}.

\begin{proof}
By Assumptions~\ref{assump:mutual_independence}-\ref{assump:m_sep1}, 
\begin{align*}
&\mathbb E\!\left[
\omega_{\vec h_i}(\bigcup_{H \in \vec H} W_{1_H},\vec O_{-W}; \eta)
\,\middle|\,
\vec H=\vec h_j,\vec O_{-W}
\right] \\
&=
\mathbb E\!\left[
\omega_{\vec h_i}(\bigcup_{H \in \vec H} W_{1_H},\vec O_{-W} ; \eta)
\,\middle|\,
\vec H =\vec h_j, \vec O \setminus \bigcup_{H \in \vec H}{W_{1_H}}
\right]\\
&=
\mathbb{I}\{i=j\}.
\end{align*}

Then 
\(\mathbb{E}[\varphi(\vec O; \eta) \mid \vec H =\vec h_j, \vec O \setminus \bigcup_{H \in \vec H}{W_{1_H}}] = \phi(\vec H = h_j, \vec O \setminus \bigcup_{H \in \vec H}{W_{1_H}}; \eta).
\)
Finally, Equation~\ref{eq:bridge_1} holds by iterated expectations.
\end{proof}

\section{Appendix C}\label{app:weight2}

While Appendix~\ref{app:weight1} establishes the existence of weight function
\(\omega_{\vec h_i}\!\left(\bigcup_{H \in \vec H} W_{1_H},\vec O_{-W};\eta\right)\) satisfying Equation~\ref{eq:weight} for each \(\vec h_i\), it is often preferable to construct \(\omega_{\vec h_i}\!\left(\bigcup_{H \in \vec H} W_{1_H},\vec O_{-W};\eta\right)\) with a simpler collection of nuisance functions via the following procedure.

For each \(H\in\vec H\) and \(\vec o_{-W}\), let
\(
K_H(\vec o_{-W})
=
\left|
\operatorname{supp}(H\mid\vec O_{-W}=\vec o_{-W})
\right|,
\)
and choose basis functions
\[
 b_{H,1}(W_{1_H}),\ldots,
 b_{H,K_H(\vec o_{-W})}(W_{1_H})\] such that
\[
{\mathbf M}_H(\vec o_{-W})
=
\begin{pmatrix}
\mathbb E\!\left[
 b_H(W_{1_H})
\mid H=h_1,\vec O_{-W}=\vec o_{-W}
\right]
&
\cdots
&
\mathbb E\!\left[
 b_H(W_{1_H})
\mid H=h_{K_H(\vec o_{-W})},\vec O_{-W}=\vec o_{-W}
\right]
\end{pmatrix}
\] is invertible. In practice, the functions
\( b_{H,1},\ldots, b_{H,K_H(\vec o_{-W})}\)
may be chosen from a convenient function class, such as indicator functions, polynomials, splines, or Fourier features.

Write
\[
 b_H(W_{1_H})
=
\begin{pmatrix}
 b_{H,1}(W_{1_H})\\
\vdots\\
 b_{H,K_H(\vec o_{-W})}(W_{1_H})
\end{pmatrix},
\]
and for each
\(h_i\), define
\[
\omega_{h_i}(W_{1_H},\vec o_{-W};\eta)
=
e_i^\top
{\mathbf M}_H(\vec o_{-W})^{-1}
b_H(W_{1_H}).
\]
Then, 
\begin{align*}
&\mathbb E\!\left[
\omega_{h_i}(W_{1_H},\vec O_{-W}=\vec o_{-W};\eta)
\,\middle|\,
H=h_j,\vec O_{-W}=\vec o_{-W}
\right]
\\
&\qquad=
e_i^\top
{\mathbf M}_H(\vec o_{-W})^{-1}
\mathbb E\!\left[
 b_H(W_{1_H})
\mid H=h_j,\vec O_{-W}=\vec o_{-W}
\right]
\\
&\qquad=
e_i^\top
{\mathbf M}_H(\vec o_{-W})^{-1}
{\mathbf M}_H(\vec o_{-W})e_j
\\
&\qquad=
e_i^\top e_j
=
\mathbb{I}\{i=j\}.
\end{align*}
Hence letting \(\vec h_i\) denote the
\(i^{\text{th}}\) joint support value of \(\vec H\), and \(h_i(H)\)
denote the value of \(H\) in \(\vec h_i\),
\[
\omega_{\vec h_i}\!\left(
\bigcup_{H\in\vec H} W_{1_H},
\vec O_{-W};\eta
\right)
=
\prod_{H\in\vec H}
\omega_{h_i(H)}\!\left(
W_{1_H},\vec O_{-W};\eta
\right)
\]
satisfies Equation~\ref{eq:weight} 
by Assumption~\ref{assump:m_sep1}.

\section{Appendix D}\label{app:polynomial_basis1}

\begin{proof}
For each \(H\in\vec H\) and value \(\vec o_{-W}\), let
\(
K_H(\vec o_{-W})
=
\left|
\operatorname{supp}(H\mid\vec O_{-W}=\vec o_{-W})
\right|.
\)
Under the location-shift model, 
\[
W_{1_H}\mid H=h_i,\vec O_{-W}=\vec o_{-W}
=
\mu_{H,i}(\vec o_{-W})+\epsilon_H(\vec o_{-W}),
\]
where \(\epsilon_H(\vec o_{-W})\) does not depend on the value of \(H\). Thus,
\[
\mathbb E[W_{1_H}^{\,k}\mid H=h_i,\vec O_{-W}=\vec o_{-W}]
=
\mathbb E\!\left[
\{\mu_{H,i}(\vec o_{-W})+\epsilon_H(\vec o_{-W})\}^k
\right].
\]

Choose
\[
 b_H(W_{1_H},\vec o_{-W})
=
\bigl(
1,
W_{1_H},
\ldots,
W_{1_H}^{K_H(\vec o_{-W})-1}
\bigr)^\top.
\]
Then the matrix
\[
{\mathbf M}_H(\vec o_{-W})
=
\begin{pmatrix}
\mathbb E[ b_H(W_{1_H},\vec o_{-W})\mid H=h_1,\vec O_{-W}=\vec o_{-W}]
&
\cdots
&
\mathbb E[ b_H(W_{1_H},\vec o_{-W})\mid H=h_{K_H(\vec o_{-W})},\vec O_{-W}=\vec o_{-W}]
\end{pmatrix}
\]
has \((r,i)\)-entry
\[
\bigl[{\mathbf M}_H(\vec o_{-W})\bigr]_{ri}
=
\mathbb E\!\left[
W_{1_H}^{\,r-1}
\mid
H=h_i,\vec O_{-W}=\vec o_{-W}
\right].
\]

By the binomial theorem,
\({\mathbf M}_H(\vec o_{-W})\)
is an invertible linear transformation of the Vandermonde matrix
\[
\mathbf V_H(\vec o_{-W})
=
\begin{pmatrix}
1 & 1 & \cdots & 1\\
\mu_{H,1}(\vec o_{-W}) &
\mu_{H,2}(\vec o_{-W}) &
\cdots &
\mu_{H,K_H(\vec o_{-W})}(\vec o_{-W})\\
\vdots & \vdots & & \vdots\\
\mu_{H,1}(\vec o_{-W})^{K_H(\vec o_{-W})-1} &
\mu_{H,2}(\vec o_{-W})^{K_H(\vec o_{-W})-1} &
\cdots &
\mu_{H,K_H(\vec o_{-W})}(\vec o_{-W})^{K_H(\vec o_{-W})-1}
\end{pmatrix}.
\]
Since
\[
\mu_{H,1}(\vec o_{-W}),\ldots,
\mu_{H,K_H(\vec o_{-W})}(\vec o_{-W})
\]
are distinct, \(\mathbf V_H(\vec o_{-W})\) is invertible. Hence,
\({\mathbf M}_H(\vec o_{-W})\) is invertible.

Therefore, 
\[
\omega_i(W_{1_H},\vec O_{-W};\eta)
=
e_i^\top
{\mathbf M}_H(\vec O_{-W})^{-1}
 b_H(W_{1_H},\vec O_{-W})
\]
satisfies
\[
\mathbb E\!\left[
\omega_i(W_{1_H},\vec O_{-W};\eta)
\mid
H=h_j,\vec O_{-W}
\right]
=
\mathbb{I}\{i=j\}.
\]
Hence letting \(\vec h_i\) denote the
\(i^{\text{th}}\) joint support value of \(\vec H\), and \(h_i(H)\)
denote the value of \(H\) in \(\vec h_i\), 
\[
\omega_{\vec h_i}\!\left(
\bigcup_{H\in\vec H} W_{1_H},
\vec O_{-W};\eta
\right)
=
\prod_{H\in\vec H}
\omega_{h_i(H)}\!\left(
W_{1_H},\vec O_{-W};\eta
\right)
\]
satisfies Equation~\ref{eq:weight} 
by Assumption~\ref{assump:m_sep1}.
\end{proof}

\section{Appendix E}\label{app:if1}

\begin{proof}
Let $\{P_\epsilon(\vec O) : \epsilon \in \mathbb{R}\}$ be a regular parametric submodel of the observed-data law with score
\[
s(\vec O) = \left.\frac{\partial}{\partial \epsilon} \log p_\epsilon(\vec O)\right|_{\epsilon=0}.
\]

We aim to show $\varphi(\vec O; \psi, \eta) = \varphi(\vec O; \eta) - \psi$ from Lemma~\ref{lemma:if_est1} is an observed-data influence function for target \(\psi\), i.e., that for every regular parametric submodel $\{P_\epsilon(\vec O) : \epsilon \in \mathbb{R}\}$,
\begin{equation}\label{eq:obs_if_proof}
\left.\frac{d}{d\epsilon}\psi(P_\epsilon(\vec O))\right|_{\epsilon=0}
= \mathbb{E}\big[\varphi(\vec O; \psi, \eta)\, s(\vec O)\big],
\qquad 
\mathbb{E}\big[\varphi(\vec O; \psi, \eta)\big]=0.
\end{equation}
The regular parametric submodel \(\{P_\epsilon(\vec{H}, \vec O) : \epsilon \in \mathbb{R}\}\) of the full-data law that induces $\{P_\epsilon(\vec O) : \epsilon \in \mathbb{R}\}$ through marginalization is indexed by the same parameter \(\epsilon\), with score
\[
s(\vec{H}, \vec O)
=
\left.
\frac{\partial}{\partial \epsilon}
\log p_\epsilon(\vec{H}, \vec O)
\right|_{\epsilon=0},
\]
such that
\[
s(\vec O)
=
\mathbb{E}[s(\vec{H}, \vec O)\mid \vec O].
\]

Then  since \(\phi(\cdot; \psi, \eta) = \phi(\cdot; \eta) - \psi\) is an influence function in the full-data model, by Assumptions~\ref{assump:mutual_independence}-\ref{assump:m_sep1}, we have that
\begin{align*}
&\left.\frac{d}{d\epsilon}\psi(P_\epsilon(\vec O))\right|_{\epsilon=0}= \left.\frac{d}{d\epsilon}\psi(P_\epsilon(\vec{H}, \vec O))\right|_{\epsilon=0}= \mathbb{E}\big[\phi(
\cdot; \psi, \eta)\, s(\vec{H}, \vec O)\big] \\ 
& \quad =
\mathbb{E}\big[\phi(
\cdot; \psi, \eta)\, s(\bigcup_{H \in \vec H} W_{1_H} \mid \vec H, \vec O_{-W})\big] + 
\mathbb{E}\big[\phi(
\cdot; \psi, \eta)\, s(\bigcup_{H \in \vec H} W_{2_H} \mid \vec H, \vec O_{-W})\big]  \\
& \quad + 
\mathbb{E}\big[\phi(
\cdot; \psi, \eta)\, s(\bigcup_{H \in \vec H} W_{3_H} \mid \vec H, \vec O_{-W})\big] + 
\mathbb{E}\big[\phi(
\cdot; \psi, \eta)\, s(\vec H, \vec O_{-W})\big]\\ 
& \quad =
\mathbb{E}\big[\mathbb{E}\big[\phi(
\cdot; \psi, \eta) \mid \bigcup_{H \in \vec H} W_{1_H}, \vec H, \vec O_{-W}) \big]\, s(\bigcup_{H \in \vec H} W_{1_H} \mid \vec H, \vec O_{-W}) \big] \\
& \quad + 
\mathbb{E}\big[\mathbb{E}\big[\phi(
\cdot; \psi, \eta) \mid \bigcup_{H \in \vec H} W_{2_H}, \vec H, \vec O_{-W} \big]\, s(\bigcup_{H \in \vec H} W_{2_H} \mid \vec H, \vec O_{-W})\big]  \\
& \quad + 
\mathbb{E}\big[\mathbb{E}\big[\phi(
\cdot; \psi, \eta) \mid \bigcup_{H \in \vec H} W_{3_H}, \vec H, \vec O_{-W} \big]\, s(\bigcup_{H \in \vec H} W_{3_H} \mid \vec H, \vec O_{-W} )\big] \\
& \quad + 
\mathbb{E}\big[\mathbb{E}\big[\phi(
\cdot; \psi, \eta) \mid \vec H, \vec O_{-W}\big]\, s(\vec H, \vec O_{-W})\big].
\end{align*}

Since the following conditional expectations hold by Equation~\ref{eq:bridge_2},
\begin{align}
\mathbb{E}[\varphi(\vec O; \psi, \eta) \mid \bigcup_{H \in \vec H} W_{1_H}, \vec{H}, \vec O_{-W}] &= \mathbb{E}[\phi(\cdot; \psi, \eta)\mid \bigcup_{H \in \vec H} W_{1_H}, \vec{H}, \vec O_{-W}] = \phi(\cdot; \psi, \eta) - \psi, \\
\mathbb{E}[\varphi(\vec O; \psi, \eta) \mid \bigcup_{H \in \vec H} W_{2_H}, \vec{H}, \vec O_{-W}] &= \mathbb{E}[\phi(\cdot; \psi, \eta)\mid \bigcup_{H \in \vec H} W_{2_H}, \vec{H}, \vec O_{-W}] = \phi(\cdot; \psi, \eta) - \psi, \\
\mathbb{E}[\varphi(\vec O; \psi, \eta) \mid \bigcup_{H \in \vec H} W_{3_H}, \vec{H}, \vec O_{-W}] &= \mathbb{E}[\phi(\cdot; \psi, \eta)\mid \bigcup_{H \in \vec H} W_{3_H}, \vec{H}, \vec O_{-W}] = \phi(\cdot; \psi, \eta) - \psi, \\
\mathbb{E}[\varphi(\vec O; \psi, \eta) \mid \vec{H}, \vec O_{-W}] &= \mathbb{E}[\phi(\cdot; \psi, \eta)\mid \vec{H}, \vec O_{-W}] = \phi(\cdot; \psi, \eta) - \psi,
\end{align}
we can substitute these into the earlier expression to obtain
\begin{align*}
\left.\frac{d}{d\epsilon}\psi(P_\epsilon)\right|_{\epsilon=0} 
&= \mathbb{E}\big[ \varphi(\vec O; \psi, \eta)\, s(\vec{H}, \vec O ) \big] .
\end{align*}

Since \(\varphi(O)\) is a function of the observed data,
\begin{align*}
\left.\frac{d}{d\epsilon}\psi(P_\epsilon)\right|_{\epsilon=0}
&= \mathbb{E}\big[\varphi(\vec O; \psi, \eta)\, s(\vec{H}, \vec O)\big] \\
&= \mathbb{E}\big[ \varphi(\vec O; \psi, \eta)\, \mathbb{E}[s(\vec{H}, \vec O) \mid \vec O] \big] \\
&= \mathbb{E}\big[ \varphi(\vec O; \psi, \eta)\, s(\vec O) \big],
\end{align*}
satisfying the first part of Equation~\ref{eq:obs_if_proof}.

Next, fix \(l \in \{1,2,3\}\) and note by Equation~\ref{eq:bridge_2}, we have that
\begin{align*}
\mathbb{E}[\varphi(\vec O; \psi, \eta)] 
&= \mathbb{E}\Bigg[
\mathbb{E}[\varphi(\vec O; \psi, \eta) \mid \vec{H}, \vec O_{-W}
\Bigg] \\
&= \mathbb{E}[\phi(\cdot; \psi, \eta)] = 0,
\end{align*}
satisfying the second part of Equation~\ref{eq:obs_if_proof}.
\end{proof}

\section{Appendix F}\label{app:multiple_robustness1}

\begin{proof}
By the conditional independences in Assumptions~\ref{assump:mutual_independence}-\ref{assump:m_sep1} and at least one of the nuisance functions sets \(\eta'_1,\eta'_2,\eta'_3 \in \eta\) being correctly specified, 

\[
\mathbb{E}\!\left[
\lambda_{\vec h_i}\!\left(
\bigcup_{H \in \vec H,k}W_{k_H},
\vec O_{-W};
\hat\eta
\right)
\,\middle|\,
\vec H=\vec h_j,\vec O_{-W}
\right] = \mathbb{I}[i=j]
\]

Then by iterated expectations,
\(
\mathbb{P}_n\!\left[\hat{\psi}_{\mathrm{obs}}\right]
-
\mathbb{P}_n\!\left[\phi(\cdot, \hat \eta)\right]
\overset{p}{\longrightarrow}
0
\) under standard regularity conditions.
Since at least one of the original nuisance function sets
\(\eta_1,\ldots,\eta_m \in \eta\) is correctly specified,
\(
\mathbb{P}_n\!\left[\phi(\cdot, \hat \eta)\right]
\overset{p}{\longrightarrow}
\psi.
\)
Therefore,
\(
\mathbb{P}_n\!\left[\hat{\psi}_{\mathrm{obs}}\right]
\overset{p}{\longrightarrow}
\psi.
\)

\end{proof}

\section{Appendix G}\label{app:efficiency1}

\begin{proof}
Let $ \varphi = \varphi(O; \psi, \eta) = \varphi(O; \eta)- \psi$. Expand \[
\hat{\psi} - \psi 
= \mathbb{P}_n \hat{\varphi} - \mathbb{P}\varphi
= (\mathbb{P}_n - \mathbb{P})(\hat{\varphi} - \varphi) + \mathbb{P}(\hat{\varphi} - \varphi) + (\mathbb{P}_n - \mathbb{P})\varphi
= R_1 + R_2 + (\mathbb{P}_n - \mathbb{P})\varphi,
\]
where 
\(
R_1 = (\mathbb{P}_n - \mathbb{P})(\hat{\varphi} - \varphi), 
\quad 
R_2 = \mathbb{P}(\hat{\varphi} - \varphi).
\)

$R_1 = o_p(n^{-1/2})$ under consistency of nuisance functions and sample splitting by Lemma 2 in \citet{kennedy2020sharp}.

Below we show $R_2 = o_p(n^{-1/2})$ under conditions (i)-(iii), therefore
\[
\hat{\psi} - \psi
= (\mathbb{P}_n - \mathbb{P})\varphi + o_p(n^{-1/2})
= \mathbb{P}_n\varphi + o_p(n^{-1/2})
\]
which is equivalent to \(
\sqrt{n}\,(\hat{\psi}-\psi) \;\;\overset{d}{\longrightarrow}\;\; 
\mathcal{N}\!\big(0, \,\mathbb{E}[\varphi^2]\big).
\)

By Assumption~\ref{assump:m_sep1},
\begin{align*}
R_2 
&= \mathbb{P}[\hat{\varphi} - \varphi] = \mathbb{E}[\hat{\varphi} - \varphi] = \mathbb{E}\!\left[
   \varphi(O; \hat \eta) - \psi
\right] =\mathbb{E}\!\left[\mathbb{E}[
   \varphi(O; \hat \eta) - \psi \mid \vec H, \vec O_{-W}]
\right] \\[10pt]
& = \mathbb{E}\left[\phi(\cdot; \hat \eta)\mathbb{E}\!\left[
\lambda_{\vec h_i}\!\left(
\bigcup_{H \in \vec H,k}W_{k_H},
\vec O_{-W};
\hat \eta
\right)
\,\middle|\,
\vec H,\vec O_{-W}
\right] - \psi \right].
\\[10pt]
\end{align*}

From Appendix~\ref{app:weight2} and Assumptions~\ref{assump:mutual_independence}-\ref{assump:m_sep1}, we have that

\begin{align*}
&\mathbb{E}\!\left[
\lambda_{\vec h_i}\!\left(
\bigcup_{H \in \vec H,k}W_{k_H},
\vec O_{-W};
\hat \eta
\right)
\,\middle|\,
\vec H = \vec h_j, \vec O_{-W}
\right] \\
&=
\mathbb{E}\!\left[
\omega_{1}\omega_{2}; \hat \eta
\,\middle|\,
\vec H= \vec h_j, \vec O_{-W}
\right] + \mathbb{E}\!\left[
\omega_{1}\omega_{3}; \hat \eta
\,\middle|\,
\vec H= \vec h_j, \vec O_{-W}
\right] + \mathbb{E}\!\left[
\omega_{2}\omega_{3}; \hat \eta
\,\middle|\,
\vec H= \vec h_j, \vec O_{-W}
\right] \\
&- 2 \mathbb{E}\!\left[
\omega_{1}\omega_{2}\omega_{3}
; \hat \eta \,\middle|\,
\vec H= \vec h_j, \vec O_{-W}
\right] \\[0.75em]
&=
\mathbb{E}\!\left[
\omega_{1}; \hat \eta
\,\middle|\,
\vec H= \vec h_j, \vec O_{-W}
\right]\mathbb{E}\!\left[
\omega_{2}; \hat \eta
\,\middle|\,
\vec H= \vec h_j, \vec O_{-W}
\right] + \mathbb{E}\!\left[
\omega_{1}; \hat \eta
\,\middle|\,
\vec H= \vec h_j, \vec O_{-W}
\right]\mathbb{E}\!\left[
\omega_{3}; \hat \eta
\,\middle|\,
\vec H= \vec h_j, \vec O_{-W}
\right] + \\
& \quad
\mathbb{E}\!\left[
\omega_{2}; \hat \eta
\,\middle|\,
\vec H= \vec h_j, \vec O_{-W}
\right]\mathbb{E}\!\left[
\omega_{3}; \hat \eta
\,\middle|\,
\vec H= \vec h_j, \vec O_{-W}
\right] \\
&- 2 \mathbb{E}\!\left[
\omega_{1}; \hat \eta
\,\middle|\,
\vec H= \vec h_j, \vec O_{-W}
\right]\mathbb{E}\!\left[
\omega_{2}; \hat \eta
\,\middle|\,
\vec H= \vec h_j, \vec O_{-W}
\right]\mathbb{E}\!\left[
\omega_{3}; \hat \eta
\,\middle|\,
\vec H= \vec h_j, \vec O_{-W}
\right]. \\[0.75em]
\end{align*}

Consider the first term
\begin{align*}
&\mathbb{E}\!\left[
\omega_{1}; \hat \eta
\,\middle|\,
\vec H= \vec h_j, \vec O_{-W}
\right]\mathbb{E}\!\left[
\omega_{2}; \hat \eta
\,\middle|\,
\vec H= \vec h_j, \vec O_{-W}
\right] \\
&=\left\{\mathbb{E}\!\left[
\omega_{1}; \hat \eta
\,\middle|\,
\vec H= \vec h_j, \vec O_{-W}
\right] - \mathbb{\hat E}\!\left[
\omega_{1}; \hat \eta
\,\middle|\,
\vec H= \vec h_j, \vec O_{-W}
\right] + \mathbb{\hat E}\!\left[
\omega_{1}; \hat \eta
\,\middle|\,
\vec H= \vec h_j, \vec O_{-W}
\right] \right\}\cdot \\
& \left\{\mathbb{E}\!\left[
\omega_{2}; \hat \eta
\,\middle|\,
\vec H= \vec h_j, \vec O_{-W}
\right] - \mathbb{\hat E}\!\left[
\omega_{2}; \hat \eta
\,\middle|\,
\vec H= \vec h_j, \vec O_{-W}
\right] + \mathbb{\hat E}\!\left[
\omega_{2}; \hat \eta
\,\middle|\,
\vec H= \vec h_j, \vec O_{-W}
\right] \right\} \\
&=\left\{\mathbb{E}\!\left[
\omega_{1}; \hat \eta
\,\middle|\,
\vec H= \vec h_j, \vec O_{-W}
\right] - \mathbb{\hat E}\!\left[
\omega_{1}; \hat \eta
\,\middle|\,
\vec H= \vec h_j, \vec O_{-W}
\right] + \mathbb{I}(i=j) \right\}\cdot \\
& \left\{\mathbb{E}\!\left[
\omega_{2}; \hat \eta
\,\middle|\,
\vec H= \vec h_j, \vec O_{-W}
\right] - \mathbb{\hat E}\!\left[
\omega_{2}; \hat \eta
\,\middle|\,
\vec H= \vec h_j, \vec O_{-W}
\right] + \mathbb{I}(i=j) \right\}.
\end{align*}

A similar breakdown holds for the second term
\(
\mathbb{E}\!\left[
\omega_{1}; \hat \eta
\,\middle|\,
\vec H= \vec h_j, \vec O_{-W}
\right]\mathbb{E}\!\left[
\omega_{3}; \hat \eta
\,\middle|\,
\vec H= \vec h_j, \vec O_{-W}
\right]
\)
and the third term
\(
\mathbb{E}\!\left[
\omega_{2}; \hat \eta
\,\middle|\,
\vec H= \vec h_j, \vec O_{-W}
\right]\mathbb{E}\!\left[
\omega_{3}; \hat \eta
\,\middle|\,
\vec H= \vec h_j, \vec O_{-W}
\right].
\) For the fourth term, we have that
\begin{align*}
&-2\mathbb{E}\!\left[
\omega_{1}; \hat \eta
\,\middle|\,
\vec H= \vec h_j, \vec O_{-W}
\right]\mathbb{E}\!\left[
\omega_{2}; \hat \eta
\,\middle|\,
\vec H= \vec h_j, \vec O_{-W}
\right]\mathbb{E}\!\left[
\omega_{3}; \hat \eta
\,\middle|\,
\vec H= \vec h_j, \vec O_{-W}
\right] \\
& = -2 \left\{\mathbb{E}\!\left[
\omega_{1}; \hat \eta
\,\middle|\,
\vec H= \vec h_j, \vec O_{-W}
\right] - \mathbb{\hat E}\!\left[
\omega_{1}; \hat \eta
\,\middle|\,
\vec H= \vec h_j, \vec O_{-W}
\right] + \mathbb{I}(i=j) \right\} \\
& \left\{\mathbb{E}\!\left[
\omega_{2}; \hat \eta
\,\middle|\,
\vec H= \vec h_j, \vec O_{-W}
\right] - \mathbb{\hat E}\!\left[
\omega_{2}; \hat \eta
\,\middle|\,
\vec H= \vec h_j, \vec O_{-W}
\right] + \mathbb{I}(i=j) \right\} \\
& \left\{\mathbb{E}\!\left[
\omega_{3}; \hat \eta
\,\middle|\,
\vec H= \vec h_j, \vec O_{-W}
\right] - \mathbb{\hat E}\!\left[
\omega_{3}; \hat \eta
\,\middle|\,
\vec H= \vec h_j, \vec O_{-W}
\right] + \mathbb{I}(i=j) \right\}.
\end{align*}

Altogether, given conditions (i)-(iii),
\begin{align*}
&\mathbb{E}\!\left[
\lambda_{\vec h_i}\!\left(
\bigcup_{H \in \vec H,k}W_{k_H},
\vec O_{-W};
\hat \eta
\right)
\,\middle|\,
\vec H = \vec h_j, \vec O_{-W}
\right] = \mathbb{I}(i=j) + o_p(n^{-1/2}).
\end{align*}

Then,
\begin{align*}
R_2 
= \mathbb{E}[\phi(\cdot; \hat \eta) - \psi] + o_p(n^{-1/2}) = o_p(n^{-1/2}).\end{align*}

\end{proof}

\section{Appendix H}\label{app:if2}
\begin{proof}
Let $\{P_\epsilon(\vec O) : \epsilon \in \mathbb{R}\}$ be a regular parametric submodel of the observed-data law with score
\[
s(\vec O) = \left.\frac{\partial}{\partial \epsilon} \log p_\epsilon(\vec O)\right|_{\epsilon=0}.
\]

We aim to show $\varphi(\vec O; \psi, \eta) = \varphi(\vec O; \eta) - \psi$ from Lemma~\ref{lemma:if_est2} is an observed-data influence function for target \(\psi\), i.e., that for every regular parametric submodel $\{P_\epsilon(\vec O) : \epsilon \in \mathbb{R}\}$,
\begin{equation}\label{eq_obs_if_proof2}
\left.\frac{d}{d\epsilon}\psi(P_\epsilon(\vec O))\right|_{\epsilon=0}
= \mathbb{E}\big[\varphi(\vec O; \psi, \eta)\, s(\vec O)\big],
\qquad 
\mathbb{E}\big[\varphi(\vec O; \psi, \eta)\big]=0.
\end{equation}

The regular parametric submodel
\(\{P_\epsilon(\vec{H}, \vec O) : \epsilon \in \mathbb{R}\}\) of the full-data law that induces \(\{P_\epsilon(\vec O) : \epsilon \in \mathbb{R}\}\) through marginalization is indexed by the same parameter \(\epsilon\), with score
\[
s(\vec{H}, \vec O)
=
\left.
\frac{\partial}{\partial \epsilon}
\log p_\epsilon(\vec{H}, \vec O)
\right|_{\epsilon=0},
\]
such that
\[
s(\vec O)
=
\mathbb{E}[s(\vec{H}, \vec O)\mid \vec O].
\]

Then since \(\phi(\cdot; \psi, \eta) = \phi_1\left(\vec{H}, \big(
\vec{O}\setminus
\bigcup_{H \in \vec H}\{W_{1_H},W_{2_H}\}\big); \eta\right) + \phi_2\left(\vec{H},
\vec{O}_{-W}; \eta\right) - \psi\) is an influence function in the full-data model, by Assumptions~\ref{assump:mutual_independence}-\ref{assump:m_sep1}, we have that
\begin{align*}
&\left.\frac{d}{d\epsilon}\psi(P_\epsilon(\vec O))\right|_{\epsilon=0}= \left.\frac{d}{d\epsilon}\psi(P_\epsilon(\vec{H}, \vec O))\right|_{\epsilon=0}= \mathbb{E}\big[\phi(
\cdot; \psi, \eta)\, s(\vec{H}, \vec O)\big] \\ 
& \quad =
\mathbb{E}\big[\phi(
\cdot; \psi, \eta)\, s(\bigcup_{H \in \vec H} W_{1_H} \mid \vec H, \vec O_{-W})\big] + 
\mathbb{E}\big[\phi(
\cdot; \psi, \eta)\, s(\bigcup_{H \in \vec H} W_{2_H} \mid \vec H, \vec O_{-W})\big]  \\
& \quad + 
\mathbb{E}\big[\phi(
\cdot; \psi, \eta)\, s(\vec H, \big(\vec O \setminus \bigcup_{H \in \vec H} \{W_{1_H},W_{2_H}\}\big))\big]\\ 
& \quad =
\mathbb{E}\big[\mathbb{E}\big[\phi(
\cdot; \psi, \eta) \mid \bigcup_{H \in \vec H} W_{1_H}, \vec H, \vec O_{-W} \big]\, s(\bigcup_{H \in \vec H} W_{1_H} \mid \vec H, \vec O_{-W} )\big] \\
& \quad + 
\mathbb{E}\big[\mathbb{E}\big[\phi(
\cdot; \psi, \eta) \mid \bigcup_{H \in \vec H} W_{2_H}, \vec H, \vec O_{-W} \big]\, s(\bigcup_{H \in \vec H} W_{2_H} \mid \vec H, \vec O_{-W})\big]  \\
& \quad + 
\mathbb{E}\big[\mathbb{E}\big[\phi(
\cdot; \psi, \eta) \mid \vec H, \big(\vec O \setminus \bigcup_{H \in \vec H} \{W_{1_H},W_{2_H}\}\big)\big]\, s(\vec H, \big(\vec O \setminus \bigcup_{H \in \vec H} \{W_{1_H},W_{2_H}\}\big))\big]
\end{align*}

Since the following conditional expectations hold by Assumptions~\ref{assump:mutual_independence}-\ref{assump:m_sep1} and Equations~\ref{eq:bridge_3a}-\ref{eq:bridge_3b},
\begin{align*}
\mathbb{E}[\varphi(\vec O; \psi, \eta) \mid \bigcup_{H \in \vec H} W_{1_H}, \vec{H}, \vec O_{-W}] &= \mathbb{E}[\phi(\cdot; \psi, \eta)\mid \bigcup_{H \in \vec H} W_{1_H}, \vec{H}, \vec O_{-W}], \\
& = \phi_2\left(\vec{H},
\vec{O}_{-W}; \eta\right) - \psi\\
\mathbb{E}[\varphi(\vec O; \psi, \eta) \mid \bigcup_{H \in \vec H} W_{2_H}, \vec{H}, \vec O_{-W}] &= \mathbb{E}[\phi(\cdot; \psi, \eta)\mid \bigcup_{H \in \vec H} W_{2_H}, \vec{H}, \vec O_{-W}], \\
& = \phi_2\left(\vec{H},
\vec{O}_{-W}; \eta\right) - \psi\\
\mathbb{E}[\varphi(\vec O; \psi, \eta) \mid \vec{H}, \big(\vec O \setminus \bigcup_{H \in \vec H} \{W_{1_H},W_{2_H}\}\big)] &= \mathbb{E}[\phi(\cdot; \psi, \eta)\mid \vec{H}, \big(\vec O \setminus \bigcup_{H \in \vec H} \{W_{1_H},W_{2_H}\}\big)]
\\
& = \phi_1\left(\vec{H}, \big(
\vec{O}\setminus
\bigcup_{H \in \vec H}\{W_{1_H},W_{2_H}\}\big); \eta\right) + \phi_2\left(\vec{H},
\vec{O}_{-W}; \eta\right) - \psi,
\end{align*}
we can substitute these into the earlier expression to obtain
\begin{align*}
\left.\frac{d}{d\epsilon}\psi(P_\epsilon)\right|_{\epsilon=0} 
&= \mathbb{E}\big[ \varphi(\vec O; \psi, \eta)\, s(\vec{H}, \vec O ) \big] .
\end{align*}

Since \(\varphi(O)\) is a function of the observed data,
\begin{align*}
\left.\frac{d}{d\epsilon}\psi(P_\epsilon)\right|_{\epsilon=0}
&= \mathbb{E}\big[\varphi(\vec O; \psi, \eta)\, s(\vec{H}, \vec O)\big] \\
&= \mathbb{E}\big[ \varphi(\vec O; \psi, \eta)\, \mathbb{E}[s(\vec{H}, \vec O) \mid \vec O] \big] \\
&= \mathbb{E}\big[ \varphi(\vec O; \psi, \eta)\, s(\vec O) \big],
\end{align*}
satisfying the first part of Equation~\ref{eq_obs_if_proof2}.

Next, fix \(l \in \{1,2,3\}\) and note by Equation~\ref{eq:bridge_3a}, we have that
\begin{align*}
\mathbb{E}[\varphi(\vec O; \psi, \eta)] 
&= \mathbb{E}\Bigg[
\mathbb{E}[\varphi(\vec O; \psi, \eta) \mid \vec{H}, \big(\vec O \setminus \bigcup_{H \in \vec H} \{W_{1_H},W_{2_H}\}\big)]
\Bigg] \\
&= \mathbb{E}[\phi(\cdot; \psi, \eta)] = 0,
\end{align*}
satisfying the second part of Equation~\ref{eq_obs_if_proof2}.
\end{proof}

\section{Appendix I}\label{app:multiple_robustness2}
\begin{proof}
By the conditional independences in Assumptions~\ref{assump:mutual_independence}-\ref{assump:m_sep1} and at least one of the nuisance functions sets \(\eta'_1,\eta'_2,\eta'_3 \in \eta\) being correctly specified, 

\[
\mathbb{E}\!\left[
\lambda_{\vec h_i}\!\left(
\bigcup_{H \in \vec H,k}W_{k_H},
\vec O_{-W};
\hat\eta
\right)
\,\middle|\,
\vec H=\vec h_j,\vec O_{-W}
\right] = \mathbb{I}[i=j].
\]

Suppose \(\eta'_1\) is correctly specified. Then by Assumptions~\ref{assump:mutual_independence}-\ref{assump:m_sep1},
\begin{align*}
&\mathbb{E}\!\left[
\gamma_{\vec h_i}\!\left(
\bigcup_{H \in \vec H}\{W_{1_H},W_{2_H}\},
\vec O_{-W};
\hat\eta
\right)
\,\middle|\,
\vec H=\vec h_j,\big(
\vec{O}\setminus
\bigcup_{H \in \vec H}\{W_{1_H},W_{2_H}\}\big)
\right] = \\
&
\mathbb{E}\!\left[
\gamma_{\vec h_i}\!\left(
\bigcup_{H \in \vec H,k}W_{k_H},
\vec O_{-W};
\hat\eta
\right)
\,\middle|\,
\vec H=\vec h_j,\vec O_{-W}
\right] = \mathbb{I}[i=j].
\end{align*}

Then by iterated expectations,
\(
\mathbb{P}_n\!\left[\hat{\psi}_{\mathrm{obs}}\right] =  \mathbb{P}_n\!\left[\phi_1(\cdot; \hat \eta)\right] + \mathbb{P}_n\!\left[\phi_2(\cdot; \hat \eta)\right] \).
Since at least one of the original nuisance function sets
\(\eta_1,\ldots,\eta_m \in \eta\) is correctly specified,
\(
\mathbb{P}_n\!\left[\phi_1(\cdot; \hat \eta)\right] + \mathbb{P}_n\!\left[\phi_2(\cdot; \hat \eta)\right]
\overset{p}{\longrightarrow}
\psi.
\)
Therefore,
\(
\mathbb{P}_n\!\left[\hat{\psi}_{\mathrm{obs}}\right]
\overset{p}{\longrightarrow}
\psi.
\)

Suppose \(\eta'_2\) or \(\eta'_3\) is correctly specified. Then by Assumptions~\ref{assump:mutual_independence}-\ref{assump:m_sep1},
\begin{align*}
&\mathbb{E}\!\left[
\gamma_{\vec h_i}\!\left(
\bigcup_{H \in \vec H}\{W_{1_H},W_{2_H}\},
\vec O_{-W};
\hat\eta
\right)
\,\middle|\,
\vec H=\vec h_j,\big(
\vec{O}\setminus
\bigcup_{H \in \vec H}\{W_{1_H},W_{2_H}\}\big)
\right] = \\
&
\mathbb{E}\!\left[
\gamma_{\vec h_i}\!\left(
\bigcup_{H \in \vec H,k}W_{k_H},
\vec O_{-W};
\hat\eta
\right)
\,\middle|\,
\vec H=\vec h_j,\vec O_{-W}
\right] = c \cdot \mathbb{I}[i=j], \text{ where } c \text{ is a constant.}
\end{align*}

Then by iterated expectations,
\(
\mathbb{P}_n\!\left[\hat{\psi}_{\mathrm{obs}}\right] = c \cdot \mathbb{P}_n\!\left[\phi_1(\cdot; \hat \eta)\right] + \mathbb{P}_n\!\left[\phi_2(\cdot; \hat \eta)\right] \).
Since \(\{\mathbb{E}[b_H(W_{3_H})\mid \vec H, \vec O_{-W}]:H\}\) is correctly specified,
\(
\mathbb{P}_n\!\left[\phi_1(\cdot; \hat \eta)\right] 
\overset{p}{\longrightarrow}
0.
\)
Therefore,
\(
\mathbb{P}_n\!\left[\hat{\psi}_{\mathrm{obs}}\right]
\overset{p}{\longrightarrow}
\mathbb{P}_n\!\left[\phi_2(\cdot; \hat \eta)\right].
\) Since at least one of the original nuisance function sets
\(\eta_1,\ldots,\eta_m \in \eta\) is correctly specified, \(
\mathbb{P}_n\!\left[\phi_2(\cdot; \hat \eta)\right] 
\overset{p}{\longrightarrow}
\psi.
\) Then, \(
\mathbb{P}_n\!\left[\hat{\psi}_{\mathrm{obs}}\right]
\overset{p}{\longrightarrow}
\psi.
\)

\end{proof}

\section{Appendix J}\label{app:efficiency2}
\begin{proof}
Let $ \varphi = \varphi(O; \psi, \eta) = \varphi(O; \eta)- \psi$. Expand \[
\hat{\psi} - \psi 
= \mathbb{P}_n \hat{\varphi} - \mathbb{P}\varphi
= (\mathbb{P}_n - \mathbb{P})(\hat{\varphi} - \varphi) + \mathbb{P}(\hat{\varphi} - \varphi) + (\mathbb{P}_n - \mathbb{P})\varphi
= R_1 + R_2 + (\mathbb{P}_n - \mathbb{P})\varphi,
\]
where 
\(
R_1 = (\mathbb{P}_n - \mathbb{P})(\hat{\varphi} - \varphi), 
\quad 
R_2 = \mathbb{P}(\hat{\varphi} - \varphi).
\)

$R_1 = o_p(n^{-1/2})$ under consistency of nuisance functions and sample splitting by Lemma 2 in \citet{kennedy2020sharp}.

Below we show $R_2 = o_p(n^{-1/2})$ under conditions (i)-(iii), therefore
\[
\hat{\psi} - \psi
= (\mathbb{P}_n - \mathbb{P})\varphi + o_p(n^{-1/2})
= \mathbb{P}_n\varphi + o_p(n^{-1/2})
\]
which is equivalent to \(
\sqrt{n}\,(\hat{\psi}-\psi) \;\;\overset{d}{\longrightarrow}\;\; 
\mathcal{N}\!\big(0, \,\mathbb{E}[\varphi^2]\big).
\)

By Assumptions~\ref{assump:mutual_independence}-\ref{assump:m_sep1},
\begin{align*}
R_2 
&= \mathbb{P}[\hat{\varphi} - \varphi] 
= \mathbb{E}[\hat{\varphi} - \varphi] 
= \mathbb{E}\!\left[
   \varphi(O; \hat \eta) - \psi
\right] \\[10pt]
&= 
\mathbb{E}\Bigg[
g\left(\vec H, \vec O_{-W}\right)
\Bigg[
t_{\vec O_{-W}}\!\left(
\left\{
\mathbb{E}\!\left[
b_{H,q}(W_{3_H})
\mid H,\vec O_{-W}
\right]
:
H\in\vec H,\;
q=1,\ldots,K_H(\vec O_{-W})
\right\}
\right) \\
&\qquad\qquad\qquad
-
t_{\vec O_{-W}}\!\left(
\left\{
\mathbb{\hat E}\!\left[
b_{H,q}(W_{3_H})
\mid H,\vec O_{-W}
\right]
:
H\in\vec H,\;
q=1,\ldots,K_H(\vec O_{-W})
\right\}
\right)
\Bigg] \\
&\qquad\qquad\cdot
\mathbb{E}\!\left[
\gamma_{\vec h_i}\!\left(
\bigcup_{H \in \vec H}\{W_{1_H},W_{2_H}\},
\vec O_{-W};
\hat \eta
\right)
\,\middle|\,
\vec H, \vec O_{-W}
\right]
\Bigg] \\
&\quad+
\mathbb{E}\left[
\phi_2(\cdot; \hat \eta)
\mathbb{E}\!\left[
\lambda_{\vec h_i}\!\left(
\bigcup_{H \in \vec H,k}W_{k_H},
\vec O_{-W};
\hat \eta
\right)
\,\middle|\,
\vec H, \vec O_{-W}
\right]
\right]
-\psi.
\end{align*}

From Appendix~\ref{app:weight2} and Assumptions~\ref{assump:mutual_independence}-\ref{assump:m_sep1}, we have that

\begin{align*}
&\mathbb{E}\!\left[
\gamma_{\vec h_i}\!\left(
\bigcup_{H \in \vec H}\{W_{1_H},W_{2_H}\},
\vec O_{-W};
\hat \eta
\right)
\,\middle|\,
\vec H = \vec h_j, \vec O_{-W}
\right] \\
&=
\mathbb{E}\!\left[
\omega_{1}\omega_{2}; \hat \eta
\,\middle|\,
\vec H= \vec h_j, \vec O_{-W}
\right] \\
&=\left\{\mathbb{E}\!\left[
\omega_{1}; \hat \eta
\,\middle|\,
\vec H= \vec h_j, \vec O_{-W}
\right] - \mathbb{\hat E}\!\left[
\omega_{1}; \hat \eta
\,\middle|\,
\vec H= \vec h_j, \vec O_{-W}
\right] + \mathbb{\hat E}\!\left[
\omega_{1}; \hat \eta
\,\middle|\,
\vec H= \vec h_j, \vec O_{-W}
\right] \right\}\cdot \\
& \left\{\mathbb{E}\!\left[
\omega_{2}; \hat \eta
\,\middle|\,
\vec H= \vec h_j, \vec O_{-W}
\right] - \mathbb{\hat E}\!\left[
\omega_{2}; \hat \eta
\,\middle|\,
\vec H= \vec h_j, \vec O_{-W}
\right] + \mathbb{\hat E}\!\left[
\omega_{2}; \hat \eta
\,\middle|\,
\vec H= \vec h_j, \vec O_{-W}
\right] \right\} \\
& =\left\{\mathbb{E}\!\left[
\omega_{1}; \hat \eta
\,\middle|\,
\vec H= \vec h_j, \vec O_{-W}
\right] - \mathbb{\hat E}\!\left[
\omega_{1}; \hat \eta
\,\middle|\,
\vec H= \vec h_j, \vec O_{-W}
\right] + \mathbb{I}(i=j) \right\} \cdot\\
& \left\{\mathbb{E}\!\left[
\omega_{2}; \hat \eta
\,\middle|\,
\vec H= \vec h_j, \vec O_{-W}
\right] - \mathbb{\hat E}\!\left[
\omega_{2}; \hat \eta
\,\middle|\,
\vec H= \vec h_j, \vec O_{-W}
\right] + \mathbb{I}(i=j) \right\}.
\end{align*}

Then by conditions (i)-(iii) and Assumptions~\ref{assump:mutual_independence}-\ref{assump:m_sep1},

\begin{align*}
&\mathbb{E}\Bigg[
g\left(\vec H, \vec O_{-W}\right)
\Bigg[
t_{\vec O_{-W}}\!\left(
\left\{
\mathbb{E}\!\left[
b_{H,q}(W_{3_H})
\mid H,\vec O_{-W}
\right]
:
H\in\vec H,\;
q=1,\ldots,K_H(\vec O_{-W})
\right\}
\right) \\
&\qquad\qquad\qquad
-
t_{\vec O_{-W}}\!\left(
\left\{
\mathbb{\hat E}\!\left[
b_{H,q}(W_{3_H})
\mid H,\vec O_{-W}
\right]
:
H\in\vec H,\;
q=1,\ldots,K_H(\vec O_{-W})
\right\}
\right)
\Bigg] \\
&\qquad\qquad\cdot
\mathbb{E}\!\left[
\gamma_{\vec h_i}\!\left(
\bigcup_{H \in \vec H}\{W_{1_H},W_{2_H}\},
\vec O_{-W};
\hat \eta
\right)
\,\middle|\,
\vec H, \vec O_{-W}
\right]
\Bigg]  \\
&=
\mathbb{E}\Bigg[
g\left(\vec H, \vec O_{-W}\right)
\Bigg[
t_{\vec O_{-W}}\!\left(
\left\{
\mathbb{E}\!\left[
b_{H,q}(W_{3_H})
\mid H,\vec O_{-W}
\right]
:
H\in\vec H,\;
q=1,\ldots,K_H(\vec O_{-W})
\right\}
\right) \\
&\qquad\qquad\qquad
-
t_{\vec O_{-W}}\!\left(
\left\{
\mathbb{\hat E}\!\left[
b_{H,q}(W_{3_H})
\mid H,\vec O_{-W}
\right]
:
H\in\vec H,\;
q=1,\ldots,K_H(\vec O_{-W})
\right\}
\right)
\Bigg]
\Bigg]
+o_p(n^{-1/2}) \\
&=
\mathbb{E}[\phi_1(\cdot;\hat\eta)]
+o_p(n^{-1/2}).
\end{align*}

Similar results hold for the second term
\(
\mathbb{E}\!\left[
\omega_{1}; \hat \eta
\,\middle|\,
\vec H= \vec h_j, \vec O_{-W}
\right]\mathbb{E}\!\left[
\omega_{3}; \hat \eta
\,\middle|\,
\vec H= \vec h_j, \vec O_{-W}
\right]
\)
and the third term
\(
\mathbb{E}\!\left[
\omega_{2}; \hat \eta
\,\middle|\,
\vec H= \vec h_j, \vec O_{-W}
\right]\mathbb{E}\!\left[
\omega_{3}; \hat \eta
\,\middle|\,
\vec H= \vec h_j, \vec O_{-W}
\right].
\) For the fourth term, we have that
\begin{align*}
&-2\mathbb{E}\!\left[
\omega_{1}; \hat \eta
\,\middle|\,
\vec H= \vec h_j, \vec O_{-W}
\right]\mathbb{E}\!\left[
\omega_{2}; \hat \eta
\,\middle|\,
\vec H= \vec h_j, \vec O_{-W}
\right]\mathbb{E}\!\left[
\omega_{3}; \hat \eta
\,\middle|\,
\vec H= \vec h_j, \vec O_{-W}
\right] \\
& = -2 \left\{\mathbb{E}\!\left[
\omega_{1}; \hat \eta
\,\middle|\,
\vec H= \vec h_j, \vec O_{-W}
\right] - \mathbb{\hat E}\!\left[
\omega_{1}; \hat \eta
\,\middle|\,
\vec H= \vec h_j, \vec O_{-W}
\right] + \mathbb{I}(i=j) \right\} \\
& \left\{\mathbb{E}\!\left[
\omega_{2}; \hat \eta
\,\middle|\,
\vec H= \vec h_j, \vec O_{-W}
\right] - \mathbb{\hat E}\!\left[
\omega_{2}; \hat \eta
\,\middle|\,
\vec H= \vec h_j, \vec O_{-W}
\right] + \mathbb{I}(i=j) \right\} \\
& \left\{\mathbb{E}\!\left[
\omega_{3}; \hat \eta
\,\middle|\,
\vec H= \vec h_j, \vec O_{-W}
\right] - \mathbb{\hat E}\!\left[
\omega_{3}; \hat \eta
\,\middle|\,
\vec H= \vec h_j, \vec O_{-W}
\right] + \mathbb{I}(i=j) \right\}.
\end{align*}

Altogether, given conditions (i)-(iii),
\begin{align*}
&\mathbb{E}\!\left[
\lambda_{\vec h_i}\!\left(
\bigcup_{H \in \vec H,k}W_{k_H},
\vec O_{-W};
\hat \eta
\right)
\,\middle|\,
\vec H = \vec h_j, \vec O_{-W}
\right] = \mathbb{I}(i=j) + o_p(n^{-1/2}).
\end{align*}

Then,
\[\mathbb{E}\left[\phi_2(\cdot; \hat \eta)\mathbb{E}\!\left[
\lambda_{\vec h_i}\!\left(
\bigcup_{H \in \vec H,k}W_{k_H},
\vec O_{-W};
\hat \eta
\right)
\,\middle|\,
\vec H, \vec O_{-W}
\right]\right] = \mathbb{E}[\phi_2(\cdot; \hat \eta)]  + o_p(n^{-1/2}).\]

Finally,
\begin{align*}
R_2 
= \mathbb{E}[\phi_1(\cdot; \hat \eta) + \phi_2(\cdot; \hat \eta) - \psi] + o_p(n^{-1/2}) = o_p(n^{-1/2}).\end{align*}

\end{proof}

\end{document}